\documentclass[12pt, twoside]{article}
\usepackage{amsmath,amsthm,amssymb,bbm}
\usepackage{times}
\usepackage{enumerate}

\pagestyle{myheadings}
\def\titlerunning#1{\gdef\titrun{#1}}
\makeatletter
\def\author#1{\gdef\autrun{\def\and{\unskip, }#1}\gdef\@author{#1}}
\def\address#1{{\def\and{\\\hspace*{18pt}}\renewcommand{\thefootnote}{}%
\footnote {#1}}%
\markboth{\autrun}{\titrun}}
\makeatother
\def\email#1{e-mail: #1}
\def\subjclass#1{{\renewcommand{\thefootnote}{}%
\footnote{\emph{Mathematics Subject Classification (2010):} #1}}}
\def\keywords#1{\par\medskip
\noindent\textbf{Keywords.} #1}

%% Numbered objects of "theorem" style (text italicized).
%% The optional parameters indicate that all objects are numbered together, and "by section".
%% However, you are welcome to use any other numbering system of your choice.

\newtheorem{theorem}{Theorem}[section]
\newtheorem{proposition}[theorem]{Proposition}
\newtheorem{corollary}[theorem]{Corollary}
\newtheorem{lemma}[theorem]{Lemma}

%% A numbered theorem with a fancy name:

%% Numbered objects of "non-theorem" style (text roman):

\theoremstyle{definition}
\newtheorem{definition}[theorem]{Definition}
\newtheorem{remark}[theorem]{Remark}
\newtheorem{example}[theorem]{Example}

%% An unnumbered remark:

%% Equations numbered by section:

\numberwithin{equation}{section}

%%%%%%%%%%% For JEMS
\frenchspacing

\textwidth=15cm
\textheight=23cm
\parindent=16pt
\oddsidemargin=-0.5cm
\evensidemargin=-0.5cm
\topmargin=-0.5cm

%%%%%%%%%%%%%%%%%%%%%%%%%%%%%%%%%%%
%%%% Put your macros here:
%%%%%%%%%%%%%%%%%%%%%%%%%%%%%%%%%%%

\usepackage[linktocpage=true,colorlinks=true, linkcolor=blue, citecolor=red, urlcolor=green]{hyperref}
\usepackage[all]{xy}

\renewcommand{\ker}{\Ker}

\newcommand{\mc}[1]{\mathcal{#1}}
\newcommand{\mf}[1]{\mathfrak{#1}}
\newcommand{\mb}[1]{\mathbb{#1}}
\newcommand{\id}{\mathbbm{1}}

\newcommand{\End}{\mathop{\rm End}}
\newcommand{\ad}{\mathop{\rm ad}}
\newcommand{\Ad}{\mathop{\rm Ad}}
\newcommand{\im}{\mathop{\rm Im}}
\newcommand{\Vect}{\mathop{\rm Vect}}
\newcommand{\Ker}{\mathop{\rm Ker}}
\newcommand{\rk}{\mathop{\rm rk}}
\newcommand{\Zhu}{\mathop{\rm Zhu}}

\newcommand{\Span}{\mathop{\rm Span }}

%%% Definizione commento
%
%\usepackage{color}
%
%\definecolor{light}{gray}{.9}
%\newcommand{\pecetta}[1]{
%$\phantom .$
%\bigskip
%\par\noindent
%\colorbox{light}{\begin{minipage}{12cm}#1\end{minipage}}
%\bigskip
%\par\noindent
%}

%%%%%%%%%%%%%%%%%%%%%%%%%%%%%%%%%%%%%%%%%%%%%%%%%
%%%%%%%%%%%%%%%%%%%%%%%%%%%%%%%%%%%%%%%%%%%%%%%%%

\begin{document}

%%%%% To ease editing, add:

\baselineskip=17pt

%%%%%%%%%%%%%%%%

%% In the running head, give an abbreviation of the title. 
\titlerunning{Classical $\mc W$-algebras}

\title{Structure of classical (finite and affine) $\mc W$-algebras}

\author{
Alberto De Sole
\and
Victor G. Kac
\and 
Daniele Valeri
}

\date{}

\maketitle

\address{
A. De Sole:
Dipartimento di Matematica, Sapienza Universit\`a di Roma,
P.le Aldo Moro 2, 00185 Rome, Italy
and
Department of Mathematics, MIT,
77 Massachusetts Avenue, Cambridge, MA 02139, USA;
\email{
desole@mat.uniroma1.it
}
\and
V. Kac:
Department of Mathematics, MIT,
77 Massachusetts Avenue, Cambridge, MA 02139, USA;
\email{
kac@math.mit.edu
}
\and
D. Valeri:
SISSA, Via Bonomea 265, 34136 Trieste, Italy;
\email{
dvaleri@sissa.it
}
}

\subjclass{
Primary 17B63; 
Secondary 17B69, 17B80, 37K30, 17B08
}

%%%%%%%%

\begin{abstract}
First, we derive an explicit formula for the Poisson bracket of the 
classical finite $\mc W$-algebra $\mc W^{\text{fin}}(\mf g,f)$,
the algebra of polynomial functions on the Slodowy slice
associated to a simple Lie algebra $\mf g$ and its nilpotent element $f$. 
On the other hand, we produce an explicit set of generators
and we derive an explicit formula for the Poisson vertex algebra structure
of the classical affine $\mc W$-algebra $\mc W(\mf g,f)$.
As an immediate consequence, we obtain a Poisson algebra isomorphism
between $\mc W^{\text{fin}}(\mf g,f)$ and the Zhu algebra of $\mc W(\mf g,f)$.
We also study the generalized Miura map for classical $\mc W$-algebras.

%% Keywords are optional
\keywords{
$W$-algebra,
Poisson algebra,
Poisson vertex algebra,
Slodowy slice,
Hamiltonian reduction,
Zhu algebra,
Miura map.
}
\end{abstract}

%\tableofcontents

%%%%%%%%%%%%%%%%%%%%%%%%%%%%%%%%%%%%%%%
\section{Introduction}\label{sec:1}

The four fundamental frameworks of physical theories,
classical mechanics, classical field theory,
quantum mechanics, and quantum field theory,
have, as their algebraic counterparts,
respectively, the following four fundamental algebraic structures:
Poisson algebras (PA),
Poisson vertex algebras (PVA), 
associative algebras (AA),
and vertex algebras (VA).
We thus have the following diagram:
%%%%%% DIAGRAM %%%%%%%%%%%%%%%%%%%%%%%%%%
\begin{equation}\label{maxi}
\UseTips
\xymatrix{
% 1
PVA\,\,\,\,\,\, 
\ar[d]_{\text{Zhu}}
%\ar@/^1pc/@{.>}[r]^{\text{quantization}} 
&
\ar[l]_{\text{cl.limit}}
\,\,\,\,\,\,VA 
\ar[d]^{\text{Zhu}} \\
% 2
PA\,\,\,\,\,\,  
%\ar@/^1pc/@{.>}[u]^{\text{affiniz.}}
%\ar@/_1pc/@{.>}[r]_{\text{quantization}} 
&
\ar[l]^{\text{cl.limit}}
\,\,\,\,\,\, AA 
%\ar@/_1pc/@{.>}[u]_{\text{affiniz.}}
}
\end{equation}
%%%%%%%%%%%%%%%%%%%%%%%%%%%%%%%%%%%%%%
(The algebraic structure corresponding to an arbitrary quantum field theory
is still to be understood,
but in the special case of chiral quantum fields
of a 2-dimensional conformal field theory the adequate algebraic structure is a vertex algebra.)
The classical limit associates to a family of associative (resp. vertex) algebras
with a commutative limit a Poisson algebra (resp. Poisson vertex algebra).
Furthermore, the Zhu map associates to a VA (resp. PVA) with an energy operator,
an associative algebra (resp. Poisson algebra), see \cite{Zhu96} (resp. \cite{DSK06}).

The simplest example when all four objects in diagram \eqref{maxi} can be constructed,
is obtained starting with a  finite-dimensional Lie algebra
(or superalgebra) $\mf g$,
with Lie bracket $[\cdot\,,\,\cdot]$,
and with a non-degenerate invariant symmetric bilinear form $(\cdot\,|\,\cdot)$.
We have a family of Lie algebras $\mf g_{\hbar}$, $\hbar\in\mb F$, with underlying space $\mf g$,
and the Lie bracket
\begin{equation}\label{20140401:eq1}
[a,b]_{\hbar}=\hbar[a,b]\,.
\end{equation}
We also have a family of Lie conformal algebras 
$\text{Cur}\,\mf g_{\hbar}=(\mb F[\partial]\otimes\mf g)\oplus\mb FK$,
with the following $\lambda$-bracket:
\begin{equation}\label{20140401:eq2}
[a_\lambda b]_{\hbar}
=
\hbar
\big(
[a,b]+(a|b)K\lambda
\big)
\,\,,\,\,\,\,
[a_\lambda K]=
0
\,\,,\,\,\,\,
\text{ for }\,\, a,b\in\mf g
\,.
\end{equation}
Then, 
the universal enveloping algebra of $\mf g_{\hbar}$
is the family of associative algebras $U(\mf g_{\hbar})$,
and its classical limit is the symmetric algebra $S(\mf g)$,
with the Kirillov-Kostant Poisson bracket
(here the invariant bilinear form plays no role).
Furthermore,
the universal enveloping vertex algebra of $\text{Cur}\,\mf g_{\hbar}$
is the family of vertex algebras $V(\mf g_{\hbar})$,
and its classical limit is the algebra of differential polynomials
$\mc V(\mf g)=S(\mb F[\partial]\mf g)$,
with the PVA $\lambda$-bracket defined by \eqref{20140401:eq2} with $\hbar=1$.
For the definition of the latter structures and the construction of the corresponding Zhu maps,
see \cite{DSK06}.
Thus, 
we get the following example of diagram \eqref{maxi}:
%%%%%% DIAGRAM %%%%%%%%%%%%%%%%%%%%%%%%%%
\begin{equation}\label{maxi2}
\UseTips
\xymatrix{
% 1
\mc V(\mf g)\,\,\,\,\,\, 
\ar[d]
&
\ar[l]
\,\,\,\,\,\,
V(\mf g_{\hbar})
\ar[d]
\\
% 2
S(\mf g)
\,\,\,\,\,\,  
&
\ar[l]
\,\,\,\,\,\, 
U(\mf g_{\hbar})
}
\end{equation}
%%%%%%%%%%%%%%%%%%%%%%%%%%%%%%%%%%%%%%

Now, let $\mf s=\{e,h,f\}$ be an $\mf{sl}_2$-triple in $\mf g$.
Then all the four algebraic structures in diagram \eqref{maxi2}
admit a Hamiltonian reduction.

Recall that a classical finite Hamiltonian reduction (HR)
of a Poisson algebra $\mc P$
is associated to a triple $(\mc P_0,\mc I_0,\varphi)$,
where $\mc P_0$ is a Poisson algebra,
$\mc I_0\subset\mc P_0$ is a Poisson algebra ideal,
and $\varphi:\,\mc P_0\to\mc P$ is a Poisson algebra homomorphism.
The corresponding classical finite HR is the following Poisson algebra:
\begin{equation}\label{20140401:eq3}
\mc W^{\text{fin}}
=
\mc W^{\text{fin}}(\mc P,\mc P_0,\mc I_0,\varphi)
=
\big(\mc P\big/\mc P\varphi(\mc I_0)\big)^{\ad\varphi(\mc P_0)}
\,.
\end{equation}
It is easy to see that the obvious Poisson bracket 
on the commutative associative algebra $\mc W^{\text{fin}}$
is well defined.

Next, recall that a quantum finite HR
of a unital associative algebra $A$
is associated to a triple $(A_0,I_0,\varphi)$,
where $A_0$ is a unital associative algebra,
$I_0\subset A_0$ is its two sided ideal,
and $\varphi:\,A_0\to A$ is a homomorphism of unital associative algebras.
The corresponding finite HR is the following unital associative algebra:
\begin{equation}\label{20140401:eq4}
W^{\text{fin}}
=
W^{\text{fin}}(A,A_0,I_0,\varphi)
=
\big(A\big/A\varphi(I_0)\big)^{\ad\varphi(A_0)}
\,.
\end{equation}
Again, it is easy to see that the obvious product on $W^{\text{fin}}$
is well defined.

The classical affine HR of a PVA $\mc V$
is defined very similarly to the classical finite HR:
\begin{equation}\label{20140401:eq5}
\mc W
=
\mc W(\mc V,\mc V_0,\mc I_0,\varphi)
=
\big(\mc V\big/\mc V\varphi(\mc I_0)\big)^{\ad\varphi(\mc V_0)}
\,,
\end{equation}
where $\mc V_0$ is a PVA,
$\mc I_0\subset\mc V_0$ is its PVA ideal,
and $\varphi:\,\mc V_0\to\mc V$ is a PVA homomorphism.

Given an $\mf{sl}_2$-triple $\mf s$ in $\mf g$,
we can perform all three above Hamiltonian reductions as follows.
Let $\mc P=S(\mf g)$, $A=U(\mf g_{\hbar})$, $\mc V=\mc V(\mf g)$.
Next, let
$$
\mf g=\bigoplus_{j\in\frac12\mb Z}\mf g_j
\,,
$$
be the eigenspace decomposition with respect to $\frac12\ad h$.
Let $\mc P_0=S(\mf g_{>0})\subset S(\mf g)$,
$A_0=U(\mf g_{>0,\hbar})\subset U(\mf g_{\hbar})$,
and $\mc V_0=\mc V(\mf g_{>0})\subset\mc V(\mf g)$,
and let $\varphi$ be the inclusion homomorphism in all three cases.
Furthermore,
let, in the three cases,
$\mc I_0\subset\mc P_0$ be the associative algebra ideal,
$I_0\subset\mc A_0$ be the two sided ideal,
and $\mc I_0\subset\mc V_0$ be the differential algebra ideal,
generated by the set
$$
\big\{m-(f|m)\,\big|\,m\in\mf g_{\geq1}\big\}
\,.
$$
Applying the three Hamiltonian reductions,
we obtain 
the finite classical $\mc W$-algebra $\mc W^{\text{fin}}(\mf g,\mf s)$,
the finite quantum $W$-algebra $W^{\text{fin}}_{\hbar}(\mf g,\mf s)$
(it first appeared in \cite{Pre02}),
and the classical $\mc W$-algebra $\mc W(\mf g,\mf s)$.

Unfortunately,
we don't know of a similar construction of a quantum affine HR for vertex algebras.
One uses instead a more special, cohomological approach,
to construct the family of vertex algebra $W_{\hbar}(\mf g,\mf s)$ \cite{FF90,KW04}.

We thus obtain a Hamiltonian reduction of the whole diagram \eqref{maxi2}
\cite{DSK06}:
%%%%%% DIAGRAM %%%%%%%%%%%%%%%%%%%%%%%%%%
\begin{equation}\label{maxi3}
\UseTips
\xymatrix{
% 1
\mc W(\mf g,\mf s)
\,\,\,\,\,\, 
\ar[d]
&
\ar[l]
\,\,\,\,\,\,
W_{\hbar}(\mf g,\mf s)
\ar[d]
\\
% 2
\mc W^{\text{fin}}(\mf g,\mf s)
\,\,\,\,\,\,  
&
\ar[l]
\,\,\,\,\,\, 
W_{\hbar}^{\text{fin}}(\mf g,\mf s)
}
\end{equation}
%%%%%%%%%%%%%%%%%%%%%%%%%%%%%%%%%%%%%%

In the present paper we study in detail the ``classical'' part of diagram \eqref{maxi3}.
(which we are planning to apply to the ``quantum'' part in a subsequent publication).

The main result of Section \ref{sec:slod} is Theorem \ref{20130521:prop},
which provides an explicit formula for the Poisson bracket of
the classical finite $\mc W$-algebra $\mc W^{\text{fin}}(\mf g,\mf s)$.
This Poisson algebra is viewed here as the algebra of polynomial functions on the Slodowy slice 
$\mc S=f+\mf g^e$
(the equivalence of this definition to the HR definition was proved in \cite{GG02}).
As in \cite{GG02},
we use Weinstein's Theorem (Theorem \ref{20130519:prop} of our paper),
which, in our situation, gives an induced Poisson structure on the submanifold $\mc S$
of the Poisson manifold $\mf g\simeq\mf g^*$,
since $\mc S$ intersects transversally and non-degenerately the symplectic leaves of $\mf g^*$.
Our basic tool is a projection map $\Phi^{(r)}:\,\mf g\to\mf g^e$,
defined by \eqref{20130520:eq8},
for each $r\in\mf g_{\geq0}$.

In Section \ref{sec:2}
we recall the definition of the PVA $\mc W=\mc W(\mf g,\mf s)$
in the form given in \cite{DSKV13},
which is equivalent to the HR definition, but it is more convenient.
Indeed, by this definition, $\mc W$ is a differential subalgebra
of the algebra $\mc V(\mf g_{\leq\frac12})$ of differential polynomials over $\mf g_{\leq\frac12}$.
This allows,
using the decomposition $\mf g_{\leq\frac12}=\mf g^f\oplus[e,\mf g_{\leq-\frac12}]$,
to show in Section \ref{sec:3.2}
that for every $q\in\mf g^f$ there exists a unique element $w(q)\in\mc W$
of the form $w(q)=q+\tilde{r}(q)$,
where $\tilde{r}(q)$ lies in the differential ideal of $\mc V(\mf g_{\leq\frac12})$
generated by $[e,\mf g_{\leq-\frac12}]$,
see Corollary \ref{20140221:cor}.
Due to \cite{DSKV13}, $\mc W$, as a differential algebra,
is isomorphic to the algebra of differential polynomials in the variables $w(q)$, 
where $q$ runs over a basis of $\mf g^f$.
Furthermore, we compute explicitly the term $r(q)$ of $\tilde{r}(q)$ linear in $[e,\mf g_{\leq-\frac12}]$,
see Theorem \ref{20140221:thm1}.

Using these results we are able to compute in Section \ref{sec:3.3}
the explicit PVA $\lambda$-brackets between the generators $w(a)$, $a\in\mf g^f$, of $\mc W$,
see Proposition \ref{20140224:thm} and Theorem \ref{20140304:thm}.

Of course, the same method allows one to obtain an algebraic proof of Theorem \ref{20130521:prop}
on explicit Poisson brackets of $\mc W^{\text{fin}}$.
Alternatively, by the results of \cite[Sec.6]{DSK06},
Theorem \ref{20130521:prop} is obtained from Theorem \ref{20140304:thm}
by putting $\partial=0$ and $\lambda=0$ in formula \eqref{20140304:eq4}.

In \cite{MR14}
they constructed explicitly the generators for the $\mc W$-algebra $\mc W(\mf g,\mf s)$,
where $\mf g$ is a simple Lie algebra of type $A, B, C, D, G$, and $\mf s$ is a principal $\mf{sl}_2$
triple in $\mf g$.
It would be interesting to compare their choice of generators with ours.

In Section \ref{sec:7}
we study the family of Zhu algebras $\Zhu_z \mc W $, 
parametrized by $z\in\mb F$,
and show that it is isomorphic to the Poisson algebras of $z$-deformed Slodowy slice 
$\mc S_z=e+\frac12zh+\mf g^f$, see Theorem \ref{20140319:thm}.
(The standard Zhu algebra corresponds to $z=1$.)
Since, by Theorem \ref{20140318:rem} all these Poisson algebras are isomorphic,
we conclude that the Poisson algebras $\Zhu_z \mc W$ are isomorphic for all values of $z\in\mb F$.
In particular, we have that
\begin{equation}\label{20140401:eq6}
\Zhu{}_1 \mc W\simeq \Zhu{}_0\mc W\,\,\big(\simeq\mc W/\mc W\partial\mc W\big)
\,.
\end{equation}
It is easy to show that the Zhu algebras are isomorphic for all non-zero values of $z$ \cite{DSK06},
but the isomorphism \eqref{20140401:eq6} is quite surprising.

Another surprising corollary of our results is Remark \ref{20140311:rem},
which provides a canonical choice (up to scalar factors)
of generators of the algebra of invariant polynomials on a simple Lie algebra $\mf g$.

In the last section,
we construct the generalized Miura map,
which is an injective homomorphism of the PVA $\mc W(\mf g,\mf s)$
to the tensor product of $\mc V(\mf g_0)$
and the ``fermionic'' PVA $\mc F(\mf g_{\frac12})$.

Throughout the paper, unless otherwise specified, all vector spaces, tensor products etc.,
are defined over a field $\mb F$ of characteristic 0.

%%%
\section{Poisson algebra structure of the Slodowy slice}
\label{sec:slod}

\subsection{Poisson structures on manifolds and submanifolds}\label{sec:slod.1}

% Poisson manifold
%
Recall that a \emph{Poisson manifold} $M=M^n$
is endowed with a skewsymmetric 2-vector field $\eta\in\Gamma(\bigwedge^2 TM)$
satisfying the condition that
$[[\eta,\eta]]=0$,
where $[[\cdot\,,\,\cdot]]$ is the Nijenhuis-Schouten bracket on the space 
$\Gamma(\bigwedge^\bullet TM)$ of alternating polyvector fields on $M$.
(It is the unique extension of the usual commutator on the space 
$\Vect(M)=\Gamma(TM)$ of vector fields on $M$,
to a graded Gerstenhaber (=odd Poisson) bracket on the space 
$\Gamma(\bigwedge^\bullet TM)$ of alternating polyvector fields on $M$.)
In local coordinates, the Poisson structure $\eta$ has the form
\begin{equation}\label{20130519:eq6}
\eta(x)=\sum_{i,j=1}^n K(x)_{ij}\frac{\partial}{\partial x_i}\wedge\frac{\partial}{\partial x_j}
\,,
\end{equation}
where $K(x)$ is a skewsymmetric matrix, associated to $\eta$
and the choice of coordinates $\{x_i\}_{i=1}^n$.
%
%Equivalently,
%for every $x\in M$ we have a skewsymmetric bilinear form 
%$\eta_x:\,\bigwedge^2 T^*_xM\to\mb R$ 
%on the cotangent space $T^*_xM$ of $M$ at $x$,
%varying smoothly on $x$.
%
%The requirement that $M$ is a Poisson manifold is given by 

The algebra of functions $C^\infty(M)$ has then a natural structure of a Poisson algebra,
given, in local coordinates, by
\begin{equation}\label{20130518:eq1}
\{f(x),g(x)\}
=
\sum_{i,j=1}^nK_{ij}(x)\frac{\partial f(x)}{\partial x_i}\frac{\partial g(x)}{\partial x_j}
\,\,\,\,\bigg(=
\eta(x)(d_x f\wedge d_x g)
\bigg)
\,.
\end{equation}
In fact, the condition $[[\eta,\eta]]=0$ is equivalent
to the Jacobi identity for the bracket  \eqref{20130518:eq1}.

The Poisson structure $\eta$ defines a map from 1-forms to vector fields,
$$
\eta:\,\Omega^1(M)=\Gamma(T^*M)\to\Gamma(TM)=\Vect(M)\,,
$$
given by the natural pairing of $T_x^*M$ and $T_xM$.
In local coordinates, it is
\begin{equation}\label{20130518:eq3}
\Omega^1(M)\ni\,
\xi(x)=\sum_{i=1}^nF_i(x)dx_i
\mapsto
\eta(\xi)(x)
=\sum_{i,j=1}^nK(x)_{ij}F_j\frac{\partial}{\partial x_i}
\,\in\Vect(M)
\,.
\end{equation}
The \emph{Hamiltonian vector field} $X_h$ associated to the 
function $h(x)\in C^\infty(M)$ is, by definition,
\begin{equation}\label{20130518:eq4}
X_h(x)
=\eta(dh)(x)
=\sum_{i,j=1}^nK(x)_{ij}\frac{\partial h(x)}{\partial x_j}\frac{\partial}{\partial x_i}
=\{h(x),\,\cdot\,\}
\,\in\Vect(M)
\,.
\end{equation}
%
%The \emph{Hamiltonian equation}
%associated to the Poisson structure $\eta$ and the Hamiltonian function $h(x)$
%is, by definition, the following evolution equation on $M$:
%$\dot{x}=\eta(dh)(x)$,
%i.e., in coordinates, the following system of first order differential equations:
%\begin{equation}\label{20130518:eq5}
%\frac{dx_i}{dt}
%=\sum_{j=1}^nK(x)_{ij}\frac{\partial h(x)}{\partial x_j}
%\,\,,\,\,\,\,i=1,\dots,n
%\,.
%\end{equation}
%
Hence, the Poisson structure $\eta$ is uniquely determined by the map 
$X:\,C^\infty(M)\to\Vect(M)$
associating to a smooth function $h\in C^\infty(M)$ the corresponding Hamiltonian vector field $X_h\in\Vect(M)$.

%Recall that a \emph{symplectic manifold} is a manifold $M=M^n$ endowed with
%a skewsymmetric, non-degenerate, closed 2-form, 
%$\omega\in\Omega^2(M)=\Gamma(\bigwedge^2T^*M)$,
%which in local coordinates $\{x_i\}_{i=1}^n$ has the form
%\begin{equation}\label{20130519:eq5}
%\omega(x)
%=\sum_{i,j=1}^n S(x)_{ij}dx_i\wedge dx_j
%\in \wedge^2(T^*_xM)
%\,.
%\end{equation}
%%
%In other words, for every $x\in M$ we have 
%bilinear form $\omega(x)$ on $T_xM=\langle\frac{\partial}{\partial x_i}\rangle_{i=1}^n$, 
%with non-degenerate skewsymmetric matrix 
%$S(x)$ (in local coordinates $\{x_i\}_{i=1}^r$),
%which varies smoothly on $x$,
%and which is closed, in the sense that $d\omega=0$
%($d$ being the de Rham differential).
%%
%Having a symplectic manifold $(M,\omega)$ is the same thing as having 
%a Poisson manifold $(M,\eta)$,
%with constant full rank Poisson structure $\eta$.
%Indeed, the symplectic structure $\omega\in\Gamma(\bigwedge^2T^*M)$
%and the corresponding Poisson structure $\eta\in\Gamma(\bigwedge^2TM)$
%are related by 
%\begin{equation}\label{20130519:eq7}
%\omega(x)\big(\eta(x)\big)=1
%\,,
%\end{equation}
%for every $x\in M$.
%In terms of the skewsymmetric matrices $K(x)$ and $S(x)$
%defined by \eqref{20130519:eq6} and \eqref{20130519:eq5},
%we have $K(x)=S(x)^{-1}$.
%It is not hard to check that, in this case, the condition that the bracket \eqref{20130518:eq1}
%satisfies the Jacobi identity
%translates to the condition that the 2-form $\omega$ in \eqref{20130519:eq5}
%is closed.

Recall that a Poisson manifold is disjoint union of its \emph{symplectic leaves}: $M=\sqcup_{\alpha}S_\alpha$.
Each symplectic leaf $S\subset M$ is defined by the condition that,
for every $x\in S\subset M$, we have
$$
\eta(x)\big(T^*_xM\big)
=T_x S
\,.
$$
(The inclusion $\subset$ exactly means that $S$ is preserved 
by the integral curves of Hamiltonian vector fields,
while the inclusion $\supset$ means that 
the restriction of the Poisson structure $\eta(x)$ on $S$ is non degenerate for every $x\in S$,
thus making $S$ a symplectic manifold.)
On a symplectic leaf $S$, the symplectic form $\omega(x):\,T_xS\times T_xS\to\mb R$
is easily expressed in terms of the Poisson structure $\eta(x)$ on $M$
(which can be equivalently viewed
%an element of $\bigwedge^2T_xM$,
%as in \eqref{20130519:eq6},
%or 
as a map $\eta(x):\,T^*_xM\to T_xS\subset T_xM$, cf. \eqref{20130518:eq3},
or as a map $\eta(x):\,T^*_xM\times T^*_xM\to\mb R$).
%Recalling \eqref{20130519:eq7},
For $\alpha,\beta\in T^*_xM$, we have:
\begin{equation}\label{20130519:eq8}
\omega(x)(\eta(x)(\alpha),\eta(x)(\beta))=\eta(x)(\alpha,\beta)
\,.
\end{equation}

\begin{theorem}[\cite{Va94}]\label{20130519:prop}
Let $(M,\eta)$ be a Poisson manifold, where $\eta$ is the bi-vector field on $M$ defining the Poisson structure,
and let $N\subset M$ be a submanifold.
Suppose that, for every point $x\in N$, denoting by $(S,\omega)$ the symplectic leaf 
of $M$ through $x$, we have
\begin{enumerate}[(i)]
\item
the restriction of the symplectic form $\omega(x):\,T_xS\times T_xS\to\mb R$
to $T_xN\cap T_xS$ is non-degenerate;
\item
$N$ is transverse to $S$, i.e. $T_xN+T_xS=T_xM$.
\end{enumerate}
Then, 
the Poisson structure on $M$ induces a Poisson structure on $N$,
and the symplectic leaf of $N$ through $x$ is $N\cap S$.
The Poisson structure $\eta^N$ on $N$ is defined as follows.
Given a function $h\in C^\infty(N)$,
we extend it to a function $\tilde h\in C^\infty(M)$,
and we consider the vector field $X_{\tilde h}(x)=\eta(d\tilde{h})(x)\in T_xS\subset T_xM$.
By the non-degeneracy condition (i),
we have the orthogonal decomposition
$T_xS=(T_xN\cap T_xS)\oplus (T_xN\cap T_xS)^{\perp\omega}$.
We then define
$X^N_h(x)=\eta^N(d^Nh)$ as the projection of $X_{\tilde h}(x)$ to $T_xN\cap T_xS$.
\end{theorem}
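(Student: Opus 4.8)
The plan is to build the bracket on $N$ one symplectic leaf at a time, show that on each intersection $N\cap S$ it coincides with the canonical symplectic bracket, and then deduce all Poisson axioms from the fact that every symplectic manifold is Poisson. The only genuine computation is the well-definedness of $X^N_h(x)$, i.e. its independence of the extension $\tilde h$; everything else is then either formal or a transport of the symplectic structure.

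First I would prove well-definedness. If $\tilde h_1,\tilde h_2$ both restrict to $h$ on $N$, set $g:=\tilde h_1-\tilde h_2$, so that $dg$ annihilates $T_xN$. I claim $X_g(x)=\eta(dg)(x)$ lies in $(T_xN\cap T_xS)^{\perp\omega}$, whence its projection is zero. Indeed, any $v\in T_xN\cap T_xS\subset T_xS=\eta(x)(T^*_xM)$ can be written $v=\eta(x)(\beta)$, and by the compatibility identity \eqref{20130519:eq8},
\[
\omega(x)(X_g(x),v)=\omega(x)(\eta(dg),\eta(\beta))=\eta(x)(dg,\beta)=\langle dg,v\rangle=0,
\]
the last equality because $v\in T_xN$ and $dg|_{T_xN}=0$. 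Hence the projection of $X_{\tilde h}(x)$ onto $T_xN\cap T_xS$ depends only on $h$, so $X^N_h(x)$ is well defined and, being valued in $T_xN\cap T_xS\subset T_xN$, is tangent to $N$.

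Next I would identify $X^N_h(x)$ with the symplectic Hamiltonian vector field of $h|_{N\cap S}$ on $(N\cap S,\omega|_{N\cap S})$. Condition (i) makes $\omega$ nondegenerate on $T_xN\cap T_xS$, so $N\cap S$ is a symplectic submanifold of $S$, while condition (ii) ensures $N\cap S$ is a submanifold with $T_x(N\cap S)=T_xN\cap T_xS$. Using $\omega$-orthogonality of the decomposition, for every $w\in T_xN\cap T_xS$, writing $w=\eta(\beta)$ and applying \eqref{20130519:eq8} again,
\[
\omega(x)(X^N_h(x),w)=\omega(x)(X_{\tilde h}(x),w)=\langle d\tilde h,w\rangle=\langle dh,w\rangle,
\]
which is exactly the defining equation of the symplectic Hamiltonian vector field. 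Consequently the bracket $\{h,k\}_N(x):=\omega(x)(X^N_h(x),X^N_k(x))$ depends only on the $1$-jets of $h,k$ along $N\cap S$ and equals the symplectic bracket $\{h|_{N\cap S},k|_{N\cap S}\}$ there. Skew-symmetry and the Leibniz rule are then immediate; for Jacobi, since $\{k,l\}_N|_{N\cap S}=\{k|_{N\cap S},l|_{N\cap S}\}$ (the leaves partition $M$, so the leaf through any $y\in N\cap S$ is again $S$), the Jacobi expression at $x$ reduces pointwise to the symplectic Jacobi identity on $N\cap S$, which holds. Smoothness of $\eta^N$ follows from smooth dependence of $\eta$, $\omega$, and the $\omega$-orthogonal projection on $x$, conditions (i)–(ii) being open; and since the $X^N_h(x)$ span $T_x(N\cap S)$ as $h$ varies, the leaf of $N$ through $x$ is $N\cap S$.

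The hard part is the double use of the compatibility identity \eqref{20130519:eq8}, which is what converts the abstract $\omega$-orthogonal projection into a computable pairing: one must rewrite tangent vectors of the leaf $S$ as $\eta$-images of covectors and check the outcome is independent of the (non-unique) choice of $\beta$, which holds precisely because $\langle dg,v\rangle$ and $\langle dh,w\rangle$ depend only on the vectors $v,w$. The conceptually decisive step is the passage from leaf-wise to global Jacobi; it succeeds only because $\{h,k\}_N(x)$ is determined by the data of $h,k$ restricted to $N\cap S$ alone, which is itself a consequence of the identification carried out in the previous paragraph.
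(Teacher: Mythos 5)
The paper offers no proof of this statement --- it is quoted from Vaisman's book \cite{Va94} --- so there is no internal argument to compare yours against; I can only assess your proof on its own terms. Its core is correct and is the standard leafwise argument. With the paper's conventions, \eqref{20130518:eq1}--\eqref{20130518:eq3} give $\eta(x)(\alpha,\beta)=\langle\alpha,\eta(x)(\beta)\rangle$, so your two applications of \eqref{20130519:eq8} are exactly right: $\omega(x)(X_g(x),v)=\langle dg,v\rangle$ depends only on $v$ and not on the chosen preimage $\beta$, which settles both the independence of the extension $\tilde h$ and the identification of $X^N_h(x)$ with the Hamiltonian vector field of $h|_{N\cap S}$ on the symplectic manifold $(N\cap S,\omega|_{N\cap S})$ (here condition (ii) is what guarantees $N\cap S$ is a submanifold with $T_x(N\cap S)=T_xN\cap T_xS$, and condition (i) that $\omega$ restricts non-degenerately to it). Your reduction of skew-symmetry, Leibniz and Jacobi to the leafwise symplectic brackets is then legitimate, precisely because you first establish that $\{k,l\}_N|_{N\cap S}$ is determined by $k|_{N\cap S}$ and $l|_{N\cap S}$ alone.

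The one step that does not hold up as written is smoothness of $\eta^N$. The symplectic foliation of a Poisson manifold is in general singular: the rank of $\eta$, hence $\dim S$ and $\dim(T_xN\cap T_xS)$, can jump along $N$, so ``smooth dependence of $\omega$ and of the $\omega$-orthogonal projection on $x$'' is not available across leaves of different dimensions, and conditions (i)--(ii) are not open in the form you invoke. The standard repair is to recast your projection in Dirac form. Let $T_xN^{\circ}\subset T^*_xM$ be the annihilator of $T_xN$. Since $\ker\eta(x)=(T_xS)^{\circ}$, condition (ii) says $\eta(x)$ is injective on $T_xN^{\circ}$; and if $\eta(x)(\xi)\in T_xN$ for some $\xi\in T_xN^{\circ}$, the same computation as in your first paragraph puts $\eta(x)(\xi)$ in the radical of $\omega|_{T_xN\cap T_xS}$, which is zero by (i). A dimension count then gives $T_xM=T_xN\oplus\eta(x)(T_xN^{\circ})$, and one checks that $\eta(x)(T_xN^{\circ})=(T_xN\cap T_xS)^{\perp\omega}\cap T_xS$, so your projection is the restriction to $T_xS$ of the projector onto $T_xN$ along the smooth constant-rank subbundle $\eta(TN^{\circ})$. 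This yields smoothness of $\eta^N$ (and, in the paper's application to an affine slice of a polynomial Poisson structure, polynomiality --- which the authors in any case re-derive explicitly through the projection $\Phi^{(r)}$ of \eqref{20130520:eq8} and Lemma \ref{20130520:lem2}).
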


We shall apply this theorem to a vector space $M$ over $\mb F$ with a polynomial Poisson structure 
(i.e. the matrix $K(x)$ in \eqref{20130519:eq6} is a polynomial function of $x$),
and $N$ an affine subspace of $M$.
Then Theorem \ref{20130519:prop} holds over $\mb F$.
Hence, we get a Poisson bracket on the algebra of polynomial functions on $N$.

%%%
\subsection{Example: the Kirillov-Kostant symplectic structure on coadjoint orbits.}\label{sec:slod.2}

Let $\mf g$ be a Lie algebra over $\mb F$.
The Lie bracket $[\cdot\,,\,\cdot]$ on $\mf g$
extends uniquely to a Poisson bracket on the symmetric algebra $S(\mf g)$:
if $\{x_i\}_{i=1}^n$ is a basis of $\mf g$, we have,
for $P,Q\in S(\mf g)$,
\begin{equation}\label{20130519:eq1}
\{P,Q\}=\sum_{i,j=1}^n\frac{\partial P}{\partial x_i}\frac{\partial Q}{\partial x_j}[x_i,x_j]
\,.
\end{equation}
We think of $S(\mf g)$ as the algebra of polynomial functions on $\mf g^*$,
and, therefore, the space $\mf g^*$ is a (algebraic) Poisson manifold.
Let $\{\xi_i\}_{i=1}^n$ be the basis of $\mf g^*$ dual to the given basis of $\mf g$:
$\xi_i(x_j)=\delta_{ij}$.
In coordinates, if we think of $\{x_i\}_{i=1}^n$ as linear functions on $\mf g^*$,
then, by \eqref{20130519:eq1}, 
the Poisson structure $\eta$ evaluated at $\xi\in\mf g^*$ is
\begin{equation}\label{20130519:eq2}
\begin{array}{rcll}
\displaystyle{
\eta(\xi)
}&=&
\displaystyle{
\sum_{i,j=1}^n\xi([x_i,x_j])\frac{\partial}{\partial x_i}\wedge \frac{\partial}{\partial x_j}
}&
\displaystyle{
\in\wedge^2(T_\xi\mf g^*)
} \\
&=&
\displaystyle{
\sum_{i,j=1}^n\xi([x_i,x_j])\xi_i\wedge\xi_j
=\sum_{j=1}^n\ad^*(x_j)(\xi)\wedge\xi_j
}&
\displaystyle{
\in\wedge^2(\mf g^*)
\,,}
\end{array}
\end{equation}
where $\ad^*$ is the coadjoint action of $\mf g$ on $\mf g^*$.
Equivalently, the matrix associated to the Poisson structure $\eta$
in coordinates $\{x_i\}_{i=1}^n$ is
$$
K(\xi)_{ij}=\xi([x_i,x_j])=(\ad^*(x_j)(\xi))(x_i)\,.
$$
%
%(or, if we identify $\mf g\simeq\mf g^*$
%via the bilinear form: $a\mapsto(a,\,\cdot)$,
%it is
%$K(x)_{ij}=(x|[x_i,x_j])=(\ad(x_j)(x))(x_i)$).
%
The Poisson structure $\eta$ can be equivalently viewed 
as a map 
$\eta(\xi):\,T^*_\xi\mf g^*\simeq\mf g\to T_\xi\mf g^*\simeq\mf g^*$,
given by
\begin{equation}\label{20130519:eq3}
\eta(\xi)(a)=
\ad^*(a)(\xi)
\,,
\end{equation}
or as a skewsymmetric map
$\eta(\xi):\,T^*_\xi\mf g^*\times T^*_\xi\mf g^*\simeq\mf g\times\mf g\to\mb F$,
given by
\begin{equation}\label{20130519:eq10}
\eta(\xi)(a,b)=
\xi([a,b])\,.
\end{equation}

%We want to find the symplectic leaves of the Poisson manifold $\mf g^*$.
%By equation \eqref{20130519:eq3},
%the tangent space of the symplectic leaf $S\subset\mf g^*$ through $\xi$,
%which is the image of the map $\eta(\xi):\,T^*_\xi\mf g^*\to T_\xi\mf g^*$,
%is 
%\begin{equation}\label{20130519:eq4}
%T_\xi S=\ad^*(\mf g)(\xi)\subset\mf g^*\simeq T_\xi\mf g^*
%\,.
%\end{equation}
%%
%Note that $\ad^*(a)(\xi)$ is the vector field
%associated to the coadjoint flow $\Ad^*(e^{ta})(\xi)$.
%Therefore, 
%the symplectic leaves are exactly the coadjoint orbits $S=\Ad^*G(\xi)$
%where $G$ be the connected Lie group with Lie algebra $\mf g$.
%%
%Moreover, the symplectic form $\omega$ on the coadjoint orbit $S=\Ad^*G(\xi)$,
%is easily obtained by \eqref{20130519:eq8}, \eqref{20130519:eq3} and \eqref{20130519:eq10}:
%$$
%\omega(\xi)\big(\ad^*(a)(\xi),\ad^*(b)(\xi)\big)
%=\omega(\xi)\big(\eta(\xi)(a),\eta(\xi)(b)\big)
%=\eta(\xi)(a,b)=\xi([a,b])
%\,.
%$$
%%
%We thus conclude:
The symplectic leaves of $\mf g^*$ are given by this well known theorem:
\begin{theorem}[Kirillov-Kostant]\label{thm:kk}
The symplectic leaves of the Poisson manifold $\mf g^*$
are the coadjoint orbits: $S=(\Ad^*G)\xi$.
The symplectic structure 
$\omega(\xi):\, T_\xi S\times T_\xi S\simeq((\ad^*\mf g)\xi)\times((\ad^*\mf g)\xi)\to\mb F$,
on the coadjoint orbit $S=(\Ad^*G)\xi$
is given by
\begin{equation}\label{20130519:eq9}
\omega(\xi)\big((\ad^*a)\xi,(\ad^*b)\xi\big)
=\xi([a,b])
\,.
\end{equation}
\end{theorem}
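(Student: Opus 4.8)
The plan is to compute the image of the Poisson bivector viewed as a map from cotangent to tangent vectors, identify it with the tangent spaces of the coadjoint orbits, and then read off the symplectic form from the general identity \eqref{20130519:eq8}. First I would use the description \eqref{20130519:eq3} of the Poisson structure as the map $\eta(\xi):\,T^*_\xi\mf g^*\simeq\mf g\to T_\xi\mf g^*\simeq\mf g^*$, $a\mapsto\ad^*(a)(\xi)$, whose image is therefore the subspace $(\ad^*\mf g)\xi\subseteq\mf g^*$. On the other hand, the coadjoint orbit $S=(\Ad^*G)\xi$ is the image of the orbit map $G\to\mf g^*$, $g\mapsto\Ad^*(g)\xi$, and differentiating at the identity gives $T_\xi S=(\ad^*\mf g)\xi$. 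Thus $\im\eta(\xi)=T_\xi S$ for every $\xi$, i.e. the (generally singular) distribution $\xi\mapsto\im\eta(\xi)$ defining the symplectic leaves coincides pointwise with the distribution tangent to the coadjoint orbits.

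Next I would upgrade this pointwise equality of distributions to an equality of the corresponding integral manifolds. For $a\in\mf g$, viewed as a linear function on $\mf g^*$, formula \eqref{20130518:eq4} together with \eqref{20130519:eq3} gives the Hamiltonian vector field $X_a(\xi)=\ad^*(a)(\xi)$, which is precisely the infinitesimal generator of the coadjoint action of $a$. Hence the flows of the $X_a$ integrate to the coadjoint $G$-action, so every symplectic leaf contains the coadjoint orbit through each of its points. Conversely, for an arbitrary $h\in C^\infty(\mf g^*)$ the Hamiltonian vector field $X_h(\xi)=\ad^*(d_\xi h)(\xi)\in(\ad^*\mf g)\xi=T_\xi S$ is tangent to the coadjoint orbit, so each leaf is contained in a single coadjoint orbit. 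Combining the two inclusions identifies the symplectic leaves with the coadjoint orbits.

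Finally, to obtain the symplectic form I would specialize the general relation \eqref{20130519:eq8}, $\omega(\xi)(\eta(\xi)(\alpha),\eta(\xi)(\beta))=\eta(\xi)(\alpha,\beta)$, to $\alpha=a$, $\beta=b\in\mf g$. By \eqref{20130519:eq3} the arguments on the left become $(\ad^*a)\xi$ and $(\ad^*b)\xi$, while by \eqref{20130519:eq10} the right-hand side equals $\xi([a,b])$; this is exactly \eqref{20130519:eq9}. I expect the only genuinely delicate point to be the passage from the pointwise coincidence of distributions to the coincidence of leaves and orbits, since the rank of $\eta(\xi)$ may jump as $\xi$ varies; this is why I route the argument through the Hamiltonian flows generated by linear functions rather than invoking an abstract integrability theorem. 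Well-definedness of \eqref{20130519:eq9} is then automatic, as $\omega(\xi)$ is induced from the honest map $\eta(\xi)$, but one may alternatively check it directly using that the stabilizer $\mf g_\xi=\ker(\eta(\xi))$ satisfies $\xi([a,b])=0$ whenever $\ad^*(a)\xi=0$.
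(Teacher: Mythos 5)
The paper states Theorem \ref{thm:kk} as a classical result and offers no proof of it, so there is no argument of the authors' to compare yours against. Your proof is the standard one and is correct: the identification $\im\eta(\xi)=(\ad^*\mf g)\xi=T_\xi S$, the two inclusions between leaves and orbits obtained by integrating the Hamiltonian vector fields of linear functions (which generate the coadjoint action) and by observing that every Hamiltonian vector field is tangent to the orbits, and the derivation of \eqref{20130519:eq9} by specializing \eqref{20130519:eq8} with \eqref{20130519:eq3} and \eqref{20130519:eq10} are all sound. Your closing well-definedness check is also right, since $\ad^*(a)\xi=0$ forces $\xi([a,b])=0$ for all $b$ by \eqref{20130519:eq10}. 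The only points you might make explicit are that $G$ is taken connected (so that the group generated by the flows $\exp(t\,\ad^* a)$ is all of $\Ad^*G$) and that, in the setting of the paper where $\mb F$ is a general field of characteristic $0$, the flow-based argument should be read either over $\mb R$ (or $\mb C$) or be replaced by its algebraic-group analogue; these are standard caveats and do not affect the substance of your argument.
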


%%%
\subsection{The Slodowy slice
and the classical finite \texorpdfstring{$\mc W$}{W}-algebra}\label{sec:slod.3}

Let $\mf g$ be a simple finite-dimensional Lie algebra,
and let $f\in\mf g$ be a nilpotent element.
By the Jacoboson-Morozov Theorem,
$f$ can be included in an $\mf{sl}_2$-triple $\{e,h=2x,f\}$.
Let $(\cdot\,|\,\cdot)$ be a non-degenerate invariant symmetric bilinear form on $\mf g$,
and let $\psi:\,\mf g\stackrel{\sim}{\to}\mf g^*$ be the isomorphism associated to this 
bilinear form: $\psi(a)=(a|\,\cdot)$.
We also let $\chi=\psi(f)=(f|\,\cdot)\in\mf g^*$.
\begin{definition}[see e.g. \cite{Pre02,GG02}]\label{20130518:def}
The \emph{Slodowy slice} associated to this $\mf{sl}_2$-triple is, by definition,
the following affine space
$$
\mc S=\psi(f+\mf g^e)
=\big\{\chi+\psi(a)\,\big|\,a\in\mf g^e\big\}\subset\mf g^*
\,.
$$
As we state below,
$\mc S$ carries a natural structure of a Poisson manifold,
induced by that of $\mf g^*$.
The \emph{classical finite} $\mc W$-\emph{algebra} $\mc W^{\text{fin}}(\mf g,f)$
can be defined as the Poisson algebra of polynomial functions of the Slodowy slice $\mc S$.
\end{definition}

Let $\xi=\psi(f+r)$, $r\in\mf g^e$, be a point of the Slodowy slice.
The tangent space to the coadjoint orbit $(\Ad^*G)\xi$ at $\xi$ is
\begin{equation}\label{20130519:eq11}
T_\xi(\Ad^*G)\xi\simeq(\ad^*\mf g)\xi=\psi([f+r,\mf g])
\subset\mf g^*\simeq T_\xi\mf g^*
\,,
\end{equation}
while the tangent space to the Slodowy slice at $\xi$ is
\begin{equation}\label{20130519:eq12}
T_\xi(\mc S)
\simeq\psi(\mf g^e)\subset\mf g^*\simeq T_\xi\mf g^*
\,.
\end{equation}
By Theorem \ref{thm:kk},
the symplectic form $\omega(\xi)$ on the coadjoint orbits,
which are the symplectic leaves of $\mf g^*$,
coincides with the following skew-symmetric non-degenerate bilinear form
$\omega_{f+r}$ on $T_\xi((\Ad^*G)\xi)\simeq\psi([f+r,\mf g])\simeq[f+r,\mf g]$:
\begin{equation}\label{20130519:eq14}
\omega_{f+r}([f+r,a],[f+r,b])
%=\omega(\xi)\big(\psi([f+r,a]),\psi([f+r,b])\big)
=(f+r|[a,b])
\,.
\end{equation}

According to Theorem \ref{20130519:prop},
and in view of \eqref{20130519:eq11} and \eqref{20130519:eq12},
in order to prove that the Slodowy slice $\mc S$ carries a natural Poisson structure
induced by $\mf g^*$,
it suffices to check that, for every $r\in\mf g^e$,
two properties hold:
\begin{enumerate}[(i)]
\item
the restriction of the inner product \eqref{20130519:eq14}
to $[f+r,\mf g]\cap\mf g^e$
is non-degenerate;
\item
$[f+r,\mf g]+\mf g^e=\mf g$.
\end{enumerate}
It was proved by Gan and Ginzburg in the 3rd archive version of \cite{GG02}
that these conditions indeed hold.
Following their argument, we provide here a proof of conditions (i) and (ii) 
for every $r\in\mf g_{\geq0}$ (hence for every $r\in\mf g^e$),
respectively in parts (c) and (d) of the following lemma.
\begin{lemma}\label{20140318:lem}
Let $r\in\mf g_{\geq0}$. Then:
\begin{enumerate}[(a)]
\item
$[f+r,[e,\mf g]]\cap\mf g^e=0$.
\item
The map $\ad(f+r)$ restricts to a bijection
$\ad (f+r):\,[e,\mf g]\to[f+r,[e,\mf g]]$.
\item
If $a\in\mf g$ is such that $[f+r,a]\in\mf g^e$ and
\begin{equation}\label{20140318:eq2}
(a|[f+r,\mf g]\cap\mf g^e)=0
\,,
\end{equation}
then $[f+r,a]=0$.
\item
$[f+r,\mf g]+\mf g^e=\mf g$.
\item
If $f\in\mf g$ is a principal nilpotent element, then
$\mf g=[f+r,\mf g]\oplus\mf g^e$.
\end{enumerate}
\end{lemma}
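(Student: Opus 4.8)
The plan is to build everything from two standard $\mf{sl}_2$-facts plus one grading observation. From the representation theory of the triple $\{e,h,f\}$ one has the direct sum decompositions $\mf{g}=\mf{g}^e\oplus[f,\mf{g}]=\mf{g}^f\oplus[e,\mf{g}]$, and since $e,f$ are homogeneous (of $\frac12\ad h$-eigenvalues $\pm1$) the four subspaces $\mf{g}^e,\mf{g}^f,[e,\mf{g}],[f,\mf{g}]$ are all graded. I will also use the orthogonality relations $(\mf{g}^e)^{\perp}=[e,\mf{g}]$ and $[f+r,\mf{g}]^{\perp}=\Ker\ad(f+r)=:\mf{g}^{f+r}$, both immediate from invariance and non-degeneracy of $(\cdot\,|\,\cdot)$, together with $(U\cap W)^{\perp}=U^{\perp}+W^{\perp}$. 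The heart of the matter is parts (a) and (b), from which (c)--(e) follow by linear algebra.

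For (a) and (b) I would prove the single sharper claim: if $x\in[e,\mf{g}]$ and $[f+r,x]\in\mf{g}^e$, then $x=0$. Write $x=\sum_{j\ge j_0}x_j$ with $x_{j_0}\neq0$ its lowest graded component. The crucial point is that, because $r\in\mf{g}_{\ge0}$, the operator $\ad r$ does not lower degree, so the lowest graded component of $[f+r,x]$ is exactly $[f,x_{j_0}]\in\mf{g}_{j_0-1}$. Since $\mf{g}^e$ is graded and contains $[f+r,x]$, this component lies in $\mf{g}^e\cap[f,\mf{g}]=0$, whence $x_{j_0}\in\mf{g}^f\cap[e,\mf{g}]=0$, a contradiction. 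Specializing $[f+r,x]=0$ gives injectivity of $\ad(f+r)$ on $[e,\mf{g}]$, hence the bijection (b); specializing $[f+r,x]\in\mf{g}^e$ gives $x=0$, hence (a).

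Part (d) is then a dimension count: (b) gives $\dim[f+r,[e,\mf{g}]]=\dim[e,\mf{g}]=\dim\mf{g}-\dim\mf{g}^e$, and (a) gives $[f+r,[e,\mf{g}]]\cap\mf{g}^e=0$, so $[f+r,[e,\mf{g}]]\oplus\mf{g}^e=\mf{g}$; since $[f+r,[e,\mf{g}]]\subseteq[f+r,\mf{g}]$, (d) follows. For (c), I would first compute $V^{\perp}=[e,\mf{g}]+\mf{g}^{f+r}$, where $V=[f+r,\mf{g}]\cap\mf{g}^e$, using the orthogonality relations above. The hypothesis $(a\,|\,V)=0$ then lets me write $a=p+z$ with $p\in[e,\mf{g}]$ and $z\in\mf{g}^{f+r}$, so that $[f+r,a]=[f+r,p]\in[f+r,[e,\mf{g}]]$; the hypothesis $[f+r,a]\in\mf{g}^e$ forces $[f+r,p]\in[f+r,[e,\mf{g}]]\cap\mf{g}^e=0$ by (a), giving $[f+r,a]=0$. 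Finally (e): in the principal case $\dim\mf{g}^e$ equals the rank $\ell$, and (d) gives $\dim[f+r,\mf{g}]\ge\dim\mf{g}-\ell$, i.e. $\dim\mf{g}^{f+r}\le\ell$; since every centralizer in a simple Lie algebra has dimension at least $\ell$, we get $\dim\mf{g}^{f+r}=\ell$, and the sum in (d) becomes direct by dimension.

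I expect the main obstacle to be (c), the non-degeneracy condition~(i): it is the genuinely new content, since the transversality (d)/(ii) is comparatively soft, and obtaining it cheaply hinges on recognizing that, after passing to orthogonal complements, it collapses onto the intersection statement~(a). The grading computation in (a)/(b) is the only place where the hypothesis $r\in\mf{g}_{\ge0}$ (rather than $r\in\mf{g}^e$) is really used, and it is where I would be most careful that no higher-degree terms of $\ad r$ contribute to the lowest graded component.
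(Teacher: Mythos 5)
Your proof is correct and follows essentially the same route as the paper: the lowest-graded-component argument (exploiting that $\ad r$ does not lower degree, so the bottom component of $[f+r,x]$ is $[f,x_{j_0}]\in[f,\mf g]\cap\mf g^e=0$) for (a)/(b), the computation $([f+r,\mf g]\cap\mf g^e)^\perp=\ker\ad(f+r)+[e,\mf g]$ for (c), the dimension count $\mf g=[f+r,[e,\mf g]]\oplus\mf g^e$ for (d), and the minimality $\dim\mf g^a\geq\rk\mf g$ of centralizers for (e). The only cosmetic difference is that you package (a) and (b) into the single sharper claim that $x\in[e,\mf g]$ with $[f+r,x]\in\mf g^e$ forces $x=0$, which the paper proves as two parallel instances of the same computation.
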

\begin{proof}
Suppose, by contradiction, 
that $0\neq [f+r,[e,z]]\in\mf g^e$
for some $r\in\mf g_{\geq0}$.
We can expand 
\begin{equation}\label{20140318:eq1}
[e,z]=[e,z_i]+[e,z_{i+\frac12}]+\dots
\,,
\end{equation}
where $0\neq[e,z_i]\in[e,\mf g_i]$, $[e,z_{i+\frac12}]\in[e,\mf g_{i+\frac12}]$, $\dots$.
Since, by assumption, $r\in\mf g_{\geq0}$, 
the component of $[f+r,[e,z]]\in\mf g^e$ in $\mf g^e_{i}$ is
$$
[f,[e,z_i]]\in[f,[e,\mf g_i]]\cap\mf g^e_i\subset[f,\mf g]\cap\mf g^e=0
\,.
$$
(For the last equality, see e.g. \eqref{20140221:eq4} below.)
On the other hand, we know that $\ad f:\,[e,\mf g]\to[f,\mf g]$ is a bijection
(see e.g. \eqref{20130520:eq2} below).
It follows that $[e,z_i]=0$, a contradiction, proving part (a).

For part (b) it suffices to prove that, if $[f+r,[e,z]]=0$, then $[e,z]=0$.
The argument is the same as for part (a):
if we expand $[e,z]$ as in \eqref{20140318:eq1}, with $[e,z_i]\neq0$,
then the component of $[f+r,[e,z]]$ in $\mf g_i$ is $0=[f,[e,z_i]]\neq0$,
a contradiction.

Next, we prove part (c).
By linear algebra, we have
$$
\big([f+r,\mf g]\cap\mf g^e\big)^\perp
=
[f+r,\mf g]^\perp+(\mf g^e)^\perp
=
\ker(\ad(f+r))+[e,\mf g]
\,.
$$
(Here we are using the fact that, for every $\xi\in\mf g$, $T=\ad(\xi)\in\End(\mf g)$ 
is a skewadjoint operator with respect to $(\cdot\,|\,\cdot)$,
hence $(\ker T)^\perp=\im T$ and $(\im T)^\perp=\ker T$.)
Hence, 
condition \eqref{20140318:eq2} is equivalent to
\begin{equation}\label{20140318:eq3}
a=k+[e,z]
\,\,\text{ where }\,\,
k\in\ker(\ad(f+r))
\,\,\text{ and }\,\,
z\in\mf g
\,.
\end{equation}
But then $[f+r,a]=[f+r,[e,z]]\in[f+r,[e,\mf g]]\cap\mf g^e$
and this is zero by part (a).

By part (b), we have that $[f+r,[e,\mf g]]$ has the same dimension as $[e,\mf g]$,
which is equal to the codimension of $\mf g^e$.
Also, by part (a) we have that $[f+r,[e,\mf g]]\cap\mf g^e=0$.
It follows that 
$$
\mf g=[f+r,[e,\mf g]]\oplus\mf g^e\subset[f+r,\mf g]+\mf g^e
\,,
$$
proving part (d).

Finally, let us prove part (e).
For the
principal nilpotent $f$, we have $\dim(\mf g^f)=\rk(\mf g)$,
and it is minimal possible:
$\dim(\mf g^a)\geq\rk(\mf g)$ for every $a\in\mf g$.
By part (d) we also have
$[f+r,\mf g]+\mf g^e=\mf g$.
Hence,
$$
\begin{array}{l}
\displaystyle{
\vphantom{\Big(}
\dim(\mf g)
=
\dim\big([f+r,\mf g]+\mf g^e\big)
\leq
\dim([f+r,\mf g])+\dim(\mf g^e)
} \\
\displaystyle{
\vphantom{\Big(}
=
\dim(\mf g)-\dim(\mf g^{f+r},\mf g])+\rk(\mf g)
\leq
\dim(\mf g)
\,.}
\end{array}
$$
Hence, all the inequalities above must be equalities,
proving (e).
\end{proof}
The algebra of polynomial functions on the Slodowy slice $\mc S$ with the obtained Poisson structure
is called the \emph{classical finite} $\mc W$-\emph{algebra},
and it is denoted by $\mc W^{\text{fin}}(\mf g,f)$.

\subsection{Setup and notation}\label{slod.4}

Let, as before, $\mf g$ be a simple Lie algebra with a non-degenerate symmetric invariant bilinear form $(\cdot\,|\,\cdot)$,
and let $\{f,2x,e\}\subset\mf g$ be an $\mf{sl}_2$-triple in $\mf g$.
We have the corresponding $\ad x$-eigenspace decomposition
$$
\mf g=\bigoplus_{k\in\frac{1}{2}\mb Z}\mf g_{k}
\,\,\text{ where }\,\,
\mf g_k=\big\{a\in\mf g\,\big|\,[x,a]=ka\big\}
\,.
$$
Clearly, $f\in\mf g_{-1}$, $x\in\mf g_{0}$ and $e\in\mf g_{1}$.
We let $d$ be the \emph{depth} of the grading, i.e. the maximal eigenvalue of $\ad x$.

By representation theory of $\mf{sl}_2$, the Lie algebra $\mf g$ admits the direct sum decompositions
\begin{equation}\label{20140221:eq4}
\mf g
=\mf g^f\oplus[e,\mf g]
=\mf g^e\oplus[f,\mf g]
\,.
\end{equation}
They are dual to each other, in the sense that $\mf g^f\perp[f,\mf g]$ and $[e,\mf g]\perp\mf g^e$.
For $a\in\mf g$, we denote by $a^\sharp=\pi_{\mf g^f}(a)\in\mf g^f$ its component in $\mf g^f$
with respect to the first decomposition in \eqref{20140221:eq4}.
Note that, since $[e,\mf g]$ is orthogonal to $\mf g^e$,
the spaces $\mf g^f$ and $\mf g^e$ are non-degenerately paired by $(\cdot\,|\,\cdot)$.

Next, we choose a basis of $\mf g$ as follows.
Let $\{q_j\}_{j\in J^f}$ be a basis of $\mf g^f$ consisting of $\ad x$-eigenvectors,
and let  $\{q^j\}_{j\in J^f}$ be the the dual basis of $\mf g^e$.
For $j\in J^f$,
we let $\delta(j)\in\frac12\mb Z$ be the $\ad x$-eigenvalue of $q^j$,
so that
\begin{equation}\label{20130520:eq5}
[x,q_j]=-\delta(j)q_j
\,\,,\,\,\,\,
[x,q^j]=\delta(j)q^j
\,.
\end{equation}
For $k\in\frac12\mb Z_+$
we also let $J^f_{-k}=\{i\in J^f\,|\,\delta(i)=k\}\subset J^f$,
so that $\{q_j\}_{j\in J^f_{-k}}$ is a basis of $\mf g^f_{-k}$,
and $\{q^j\}_{j\in J^f_{-k}}$ is the dual basis of $\mf g^e_{k}$.
By representation theory of $\mf{sl}_2$,
we get a basis of $\mf g$ consisting of the following elements:
\begin{equation}\label{20140221:eq1}
q^j_n=(\ad f)^nq^j
\,\,\text{ where }\,\,
n\in\{0,\dots,2\delta(j)\}
\,\,,\,\,\,\,
j\in J^f
\,.
\end{equation}
This basis consists of $\ad x$-eigenvectors,
and, for $k\in\frac12\mb Z$ such that $-d\leq k\leq d$,
the corresponding basis of $\mf g_k\subset\mf g$ is 
$\{q^j_{n}\}_{(j,n)\in J_{-k}}$,
where $J_{-k}$ is the following index set
\begin{equation}\label{20140221:eq5}
J_{-k}
=
\Big\{
(j,n)\in J^f\times\mb Z_+\,\Big|\,
\delta(j)-|k|\in\mb Z_+,\,n=\delta(j)-k
\Big\}
\,.
\end{equation}
The union of all these index sets is the index set for the basis of $\mf g$:
\begin{equation}\label{20140221:eq6}
J
=
\bigsqcup_{h\in\frac12\mb Z}J_h
=
\Big\{
(j,n)\,\Big|\,
j\in J^f,\,n\in\{0,\dots,2\delta(j)\}
\Big\}
\,.
\end{equation}

The corresponding dual basis of $\mf g$ is given by the following lemma.
\begin{lemma}\label{20140221:lem1}
For 
$i,j\in J^f$ and $m,n\in\mb Z_+$, we have
\begin{equation}\label{20140221:eq2}
((\ad e)^nq_j|(\ad f)^mq^i)
=
(-1)^n(n!)^2\binom{2\delta(j)}{n}
\delta_{i,j}\delta_{m,n}
\,.
\end{equation}
Hence,
the basis of $\mf g$ dual to \eqref{20140221:eq1} is given by ($(j,n)\in J$):
\begin{equation}\label{20140221:eq3}
q_j^n
=
\frac{(-1)^n}{(n!)^2\binom{2\delta(j)}{n}}
(\ad e)^nq_j
\,.
\end{equation}
\end{lemma}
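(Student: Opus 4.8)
The plan is to reduce the entire statement to the computation of the single inner product \eqref{20140221:eq2}, and then to read off the dual basis \eqref{20140221:eq3} as an immediate consequence. The computation of \eqref{20140221:eq2} is pure $\mf{sl}_2$-representation theory: the element $q^i$ is a highest weight vector for the $\mf{sl}_2$-action $\ad$, of weight $2\delta(i)$ (since $[x,q^i]=\delta(i)q^i$ and $[e,q^i]=0$), while $q_j\in\mf g^f$ is a lowest weight vector, $[f,q_j]=0$. Since the bilinear form is invariant, each operator $\ad\xi$ is skewadjoint, and this is what lets me transport all the $\ad e$'s from the left factor onto the right one and reduce to a known module computation.

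First I would apply skewadjointness of $\ad e$ exactly $n$ times to write
\begin{equation*}
((\ad e)^nq_j|(\ad f)^mq^i)=(-1)^n(q_j|(\ad e)^n(\ad f)^mq^i)\,.
\end{equation*}
Next I would compute $(\ad e)^n(\ad f)^mq^i$ from the standard relation $(\ad e)(\ad f)^kq^i=k(2\delta(i)-k+1)(\ad f)^{k-1}q^i$, valid because $q^i$ is a highest weight vector of weight $2\delta(i)$. Iterating this ladder relation gives
\begin{equation*}
(\ad e)^n(\ad f)^mq^i=\frac{m!}{(m-n)!}\,\frac{(2\delta(i)-m+n)!}{(2\delta(i)-m)!}\,(\ad f)^{m-n}q^i
\end{equation*}
for $n\leq m$, and $0$ for $n>m$ (the iteration then hits $(\ad e)q^i=0$). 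Finally, since $q_j$ is annihilated by $\ad f$, skewadjointness forces $(q_j|(\ad f)^{m-n}q^i)=0$ as soon as $m-n\geq1$, whereas for $m=n$ it equals $(q_j|q^i)=\delta_{ij}$ by the choice of dual bases of $\mf g^f$ and $\mf g^e$. Thus only the term $m=n$ survives, which accounts for the factors $\delta_{m,n}\delta_{i,j}$.

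Combining these and specializing to $m=n$ (so that $2\delta(i)=2\delta(j)$), the surviving coefficient is $(-1)^n n!\,(2\delta(j))!/(2\delta(j)-n)!$, and the routine identity $(n!)^2\binom{2\delta(j)}{n}=n!\,(2\delta(j))!/(2\delta(j)-n)!$ turns this into exactly the right-hand side of \eqref{20140221:eq2}. The dual basis \eqref{20140221:eq3} is then immediate: dividing $(\ad e)^nq_j$ by the scalar $(-1)^n(n!)^2\binom{2\delta(j)}{n}$ produces elements $q_j^n$ satisfying $(q_j^n|q^i_m)=\delta_{i,j}\delta_{n,m}$, which is precisely the assertion that $\{q_j^n\}$ is dual to the basis $\{q^i_m\}$ of \eqref{20140221:eq1}. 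I expect the only genuine bookkeeping difficulty to be the careful tracking of the $\mf{sl}_2$-ladder coefficients together with the vanishing ranges (i.e.\ verifying that the double sequence of $\ad e$ and $\ad f$ applications really collapses onto $m=n$); once that is pinned down, everything reduces to invariance of the form and an elementary binomial simplification.
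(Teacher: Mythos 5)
Your proof is correct and is essentially the paper's argument: the paper disposes of \eqref{20140221:eq2} with a one-line "induction on $n$ using invariance of the bilinear form," and your write-up simply makes that induction explicit (skewadjointness of $\ad e$ applied $n$ times, the standard $\mf{sl}_2$ ladder coefficients for the highest weight vector $q^i$, and $(\ad f)q_j=0$ to kill the terms with $m\neq n$), with the binomial bookkeeping coming out right. The deduction of the dual basis \eqref{20140221:eq3} from \eqref{20140221:eq2} is immediate, exactly as you say.
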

\begin{proof}
Equation \eqref{20140221:eq2} is easily proved by induction on $n$, 
using the invariance of the bilinear form.
\end{proof}
We will also need the following simple facts about the $\mf{sl}_2$ action
on the dual bases $\{q^n_j\}$ and $\{q^j_n\}$.
\begin{lemma}\label{20140304:lem2}
For $j\in J^f$ and $n\in\{0,1,\dots,2\delta(j)\}$, we have
\begin{enumerate}[(i)]
\item
$[f,q^n_j]=-q^{n-1}_j$,
\item
$[e,q^n_j]=-(n+1)(2\delta(j)-n)q^{n+1}_j$,
\item
$[f,q^j_n]=q^j_{n+1}$,
\item
$[e,q^j_n]=n(2\delta(j)-n+1)q^j_{n-1}$.
\end{enumerate}
In the above equations we let $q_j^{-1}=q^j_{-1}=q_j^{2\delta(j)+1}=q^j_{2\delta(j)+1}=0$.
\end{lemma}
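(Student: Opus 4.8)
The plan is to reduce the four identities to two ``primitive'' ones, namely (iii) and (iv), and then obtain (i) and (ii) from them by a duality argument based on Lemma \ref{20140221:lem1}. First I would dispose of (iii), which is immediate from the definition $q^j_n=(\ad f)^nq^j$ in \eqref{20140221:eq1}: applying $\ad f$ once more gives $q^j_{n+1}$. Since $q^j\in\mf g^e$ it is a highest weight vector for the $\mf{sl}_2$-triple $\{f,2x,e\}$, of $h=2x$-weight $2\delta(j)$ by \eqref{20130520:eq5}; hence the string $q^j_0,\dots,q^j_{2\delta(j)}$ has length $2\delta(j)+1$ and $q^j_{2\delta(j)+1}=(\ad f)^{2\delta(j)+1}q^j=0$, matching the stated boundary convention. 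For (iv) I would invoke the standard $\mf{sl}_2$ formula for the action of $e$ on the $f$-string generated by a highest weight vector $v$ of weight $\mu$, namely $e\,f^nv=n(\mu-n+1)f^{n-1}v$, which is proved by a one-line induction on $n$ using $[e,f]=h=2x$ together with $[x,q^j_{n-1}]=(\delta(j)-n+1)q^j_{n-1}$. Specializing $v=q^j$, $\mu=2\delta(j)$ gives exactly $[e,q^j_n]=n(2\delta(j)-n+1)q^j_{n-1}$, with $q^j_{-1}=0$ handling the case $n=0$.

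Next I would derive (i) and (ii) by duality. The point is that Lemma \ref{20140221:lem1} says precisely that $\{q^j_n\}_{(j,n)\in J}$ and $\{q^m_i\}_{(i,m)\in J}$ are dual bases, $(q^m_i|q^j_n)=\delta_{ij}\delta_{mn}$, and that invariance of the form makes $\ad f$ and $\ad e$ skewadjoint, $([f,u]|v)=-(u|[f,v])$. Therefore, expanding $[f,q^n_j]$ in the basis $\{q^{n'}_{j'}\}$, its coefficient on $q^{n'}_{j'}$ equals $([f,q^n_j]|q^{j'}_{n'})=-(q^n_j|[f,q^{j'}_{n'}])=-(q^n_j|q^{j'}_{n'+1})=-\delta_{j,j'}\delta_{n,n'+1}$ by (iii); this is nonzero only for $(j',n')=(j,n-1)$, where it equals $-1$, giving $[f,q^n_j]=-q^{n-1}_j$, which is (i). The identical computation with (iv) in place of (iii) produces the coefficient $-(q^n_j|[e,q^{j'}_{n'}])=-n'(2\delta(j')-n'+1)\delta_{j,j'}\delta_{n,n'-1}$, nonzero only for $(j',n')=(j,n+1)$ with value $-(n+1)(2\delta(j)-n)$, yielding (ii).

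Honestly, I do not expect a serious obstacle here: the whole lemma is elementary $\mf{sl}_2$ representation theory once the right viewpoint is fixed. The only points requiring care are bookkeeping ones: keeping the highest/lowest weight equal to $2\delta(j)$ in the $h=2x$ normalization (rather than the $x$-eigenvalue $\delta(j)$), and respecting the end-of-string conventions $q^{-1}_j=q^j_{-1}=q^{2\delta(j)+1}_j=q^j_{2\delta(j)+1}=0$ so that the four formulas hold uniformly at the extremities of the strings. As a cross-check, one could instead prove (i) and (ii) directly from the explicit coefficients in \eqref{20140221:eq3}, applying $\ad f$ and $\ad e$ to $(\ad e)^nq_j$ (using that $q_j\in\mf g^f$ is a lowest weight vector) and verifying that the ratio of the normalizations $\frac{(-1)^n}{(n!)^2\binom{2\delta(j)}{n}}$ for consecutive $n$ reproduces exactly the stated factors $-1$ and $-(n+1)(2\delta(j)-n)$; the duality argument above simply packages this binomial computation into the single identity $(q^m_i|q^j_n)=\delta_{ij}\delta_{mn}$ already established in Lemma \ref{20140221:lem1}.
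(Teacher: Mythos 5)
Your proof is correct; every identity checks out, including the boundary conventions (for $n=0$ in (i) the Kronecker delta $\delta_{0,n'+1}$ vanishes identically, and for $n=2\delta(j)$ in (ii) the coefficient $-(n+1)(2\delta(j)-n)$ vanishes, so both endpoint cases come out automatically). The route differs mildly from the paper's: the paper proves all four formulas by direct induction using the explicit expressions \eqref{20140221:eq1} and \eqref{20140221:eq3}, i.e.\ by tracking the normalization constants $\frac{(-1)^n}{(n!)^2\binom{2\delta(j)}{n}}$ through an application of $\ad f$ or $\ad e$, whereas you compute only the ``easy'' string (iii)--(iv) this way and then transport the result to the dual string via the pairing $(q^m_i|q^j_n)=\delta_{ij}\delta_{mn}$ of Lemma \ref{20140221:lem1} together with the skewadjointness of $\ad f$ and $\ad e$. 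What your version buys is that the binomial bookkeeping is done once (inside Lemma \ref{20140221:lem1}, which the paper has already established) rather than repeated for the basis $\{q^n_j\}$; what the paper's direct induction buys is independence of (i)--(ii) from the duality statement, which is a matter of taste here since Lemma \ref{20140221:lem1} precedes this one in the text. Either way the content is elementary $\mf{sl}_2$ representation theory, and your cross-check in the last paragraph is precisely the paper's argument.
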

\begin{proof}
All formulas are easily proved by induction,
using the formulas \eqref{20140221:eq1} and \eqref{20140221:eq3}
for $q^j_n$ and $q^n_j$ respectively.
\end{proof}

Clearly, the bases \eqref{20140221:eq1} and \eqref{20140221:eq3} 
are compatible with the direct sum decompositions \eqref{20140221:eq4}.
In fact, we can write the corresponding projections
$\pi_{\mf g^f}$, $\pi_{[e,\mf g]}=1-\pi_{\mf g^f}$,
$\pi_{\mf g^e}$, and $\pi_{[f,\mf g]}=1-\pi_{\mf g^e}$,
in terms of these bases:
\begin{equation}\label{20130520:eq1}
\begin{array}{l}
\displaystyle{
\vphantom{Big(}
a^\sharp = \pi_{\mf g^f}(a)=\sum_{j\in J^f}(a|q^j)q_j
\,\,,\,\,\,\,
\pi_{[e,\mf g]}(a)=\sum_{j\in J^f}\sum_{n=1}^{2\delta(j)}(a|q^j_n)q_j^n
\,,} \\
\displaystyle{
\vphantom{Big(}
\pi_{\mf g^e}(a)=\sum_{j\in J^f}(a|q_j)q^j
\,\,,\,\,\,\,
\pi_{[f,\mf g]}(a)=\sum_{j\in J^f}\sum_{n=1}^{2\delta(j)}(a|q_j^n)q^j_n
\,.}
\end{array}
\end{equation}

\subsection{Preliminary results}\label{slod.5}

Due to the decomposition \eqref{20140221:eq4},
the adjoint action of $f$ restricts to a bijective map
\begin{equation}\label{20130520:eq2}
\ad f\,:\,\,[e,\mf g]\stackrel{\sim}{\longrightarrow}[f,\mf g]
\,,
\end{equation}
and we denote by $(\ad f)^{-1}:\,[f,\mf g]\to[e,\mf g]$ the inverse map.
Therefore, we have the following well defined map
$(\ad f)^{-1}\circ\pi_{[f,\mf g]}:\,\mf g\to[e,\mf g]$,
which is obviously surjective and with kernel $\mf g^e$.
We have an explicit formula for it, in terms of the bases \eqref{20140221:eq1}-\eqref{20140221:eq3},
using the last completeness relation in \eqref{20130520:eq1}:
\begin{equation}\label{20130520:eq3}
(\ad f)^{-1}\circ\pi_{[f,\mf g]}(a)
=
\sum_{j\in J^f}\sum_{n=0}^{2\delta(j)-1}(a|q_j^{n+1})q^j_n
\,.
\end{equation}
\begin{lemma}\label{20130520:lem1}
Let $r\in\mf g_{\geq0}$.
Then the map $(\ad r)\circ(\ad f)^{-1}\circ\pi_{[f,\mf g]}:\,\mf g\to\mf g$ is nilpotent.
In fact, it is zero when raised to a power greater than twice the depth $d$ of $\mf g$.
\end{lemma}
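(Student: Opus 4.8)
The plan is to exploit the $\ad x$-grading $\mf g=\bigoplus_{k\in\frac12\mb Z}\mf g_k$ and to show that the operator $M:=(\ad r)\circ(\ad f)^{-1}\circ\pi_{[f,\mf g]}$ \emph{strictly raises} the grading, in the sense that $M(\mf g_k)\subseteq\mf g_{\geq k+1}$ for every $k\in\frac12\mb Z$. Once this is established, nilpotency with the stated bound is immediate from the boundedness of the grading. First I would record the degree shift of each of the three factors separately.

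Since both $\mf g^e$ and $[f,\mf g]$ are $\ad x$-invariant graded subspaces—they are spanned by the homogeneous basis vectors of \eqref{20140221:eq1} and \eqref{20140221:eq3}—the projection $\pi_{[f,\mf g]}$ along the decomposition $\mf g=\mf g^e\oplus[f,\mf g]$ of \eqref{20140221:eq4} preserves the grading, mapping $\mf g_k$ into $[f,\mf g]\cap\mf g_k$. Next, because $f\in\mf g_{-1}$, the operator $\ad f$ lowers the grading by one; hence, under the bijection \eqref{20130520:eq2}, its inverse $(\ad f)^{-1}:\,[f,\mf g]\to[e,\mf g]$ raises the grading by one, sending $[f,\mf g]\cap\mf g_k$ into $[e,\mf g]\cap\mf g_{k+1}$. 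Finally, writing $r=\sum_{j\geq0}r_j$ with $r_j\in\mf g_j$, the operator $\ad r$ raises the grading by a nonnegative amount. Composing the three factors therefore gives $M(\mf g_k)\subseteq\mf g_{\geq k+1}$, as claimed.

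Iterating this inclusion yields $M^N(\mf g_k)\subseteq\mf g_{\geq k+N}$ for all $N\geq1$ and all $k$. Since $\mf g_k=0$ whenever $k>d$ and the lowest occurring degree is $-d$, we obtain $M^N(\mf g)\subseteq\mf g_{\geq -d+N}$, which vanishes as soon as $-d+N>d$, i.e. $N>2d$. This gives $M^N=0$ for every $N>2d$, establishing both the nilpotency and the explicit bound. The only point requiring care is checking that each factor is homogeneous of the asserted degree—in particular that $\pi_{[f,\mf g]}$ respects the grading and that $(\ad f)^{-1}$ is well defined on $[f,\mf g]$ with image in $[e,\mf g]$—but both follow at once from \eqref{20140221:eq4} and \eqref{20130520:eq2}, so I expect no substantive obstacle beyond this bookkeeping.
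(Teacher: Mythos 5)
Your proof is correct and follows essentially the same route as the paper: all three factors are analyzed for their effect on the $\ad x$-grading ($\pi_{[f,\mf g]}$ preserves degree, $(\ad f)^{-1}$ raises it by $1$, $\ad r$ does not decrease it), so the composite strictly raises degree and must vanish when raised to a power exceeding $2d$. Your write-up merely makes explicit the bookkeeping that the paper's proof leaves implicit.
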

\begin{proof}
Clearly, $\pi_{[f,\mf g]}$ is homogeneous with respect to the $\ad x$-eigenspace decomposition,
and it does not change the $\ad x$-eigenvalues,
the map $(\ad f)^{-1}$ is also homogeneous with respect to the $\ad x$-eigenspace decomposition,
and it increases the $\ad x$-eigenvalues by $1$,
while the map $\ad r$ is not homogeneous, but it does not decrease the $\ad x$-eigenvalues 
(since, by assumption, $r\in\mf g_{\geq0}$).
The claim follows.
\end{proof}
For $r\in\mf g_{\geq0}$, consider the map
$\Phi^{(r)}:\,\mf g\to\mf g$, given by
\begin{equation}\label{20130520:eq8}
\begin{array}{l}
\displaystyle{
\Phi^{(r)}
=
\pi_{\mf g^e}\circ
\big(1+(\ad r)\circ(\ad f)^{-1}\circ\pi_{[f,\mf g]}\big)^{-1}
} \\
\displaystyle{
=
\pi_{\mf g^e}\circ\sum_{t=0}^{2d}\big(-(\ad r)\circ(\ad f)^{-1}\circ\pi_{[f,\mf g]}\big)^t
\,.}
\end{array}
\end{equation}
Note that, thanks to Lemma \ref{20130520:lem1},
we can replace $d$ by $\infty$ in the above summation.
Associated to $r\in\mf g_{\geq0}$, we also introduce the following vector space :
\begin{equation}\label{20130520:eq9b}
V^{(r)}:=\Big\{[f+r,a]\,\Big|\,a\in\mf g,\,\big(a|[f+r,\mf g]\cap\mf g^e\big)=0\Big\}\,.
\end{equation}
\begin{lemma}\label{20130520:lem2}
For every $r\in\mf g_{\geq0}$ we have:
\begin{enumerate}[(a)]
\item
$\Phi^{(r)}(a)\in\mf g^e$ for every $a\in\mf g$.
\item
$\Phi^{(r)}(a)=a$
for every $a\in\mf g^e$.
\item For every $a\in\mf g$, we have
\begin{equation}\label{20130520:eq10}
a-\Phi^{(r)}(a)
=
\Big[
f+r
,
(\ad f)^{-1}\circ\pi_{[f,\mf g]}\circ\big(1+(\ad r)\circ(\ad f)^{-1}\circ\pi_{[f,\mf g]}\big)^{-1}(a)
\Big]
\,.
\end{equation}
\item
$\ker(\Phi^{(r)})=\Span\{a-\Phi^{(r)}(a)\,|\,a\in\mf g\}\subset[f+r,\mf g]$.
\item
$V^{(r)}\cap\mf g^e=0$.
\item
$\ker(\Phi^{(r)})= V^{(r)}$.
\item
We have the direct sum decomposition
\begin{equation}\label{20130520:eq9a}
\mf g=\mf g^e\oplus V^{(r)}
\,,
\end{equation}
and 
$\Phi^{(r)}$ 
is the projection onto $\mf g^e$ with kernel $V^{(r)}$.
\item
We have the direct sum decomposition
\begin{equation}\label{20130520:eq9}
[f+r,\mf g]=\big([f+r,\mf g]\cap\mf g^e\big)\oplus V^{(r)}
\,,
\end{equation}
and 
$\Phi^{(r)}\big|_{[f+r,\mf g]}$ 
is the projection onto $[f+r,\mf g]\cap\mf g^e$ with kernel $V^{(r)}$.
\end{enumerate}
\end{lemma}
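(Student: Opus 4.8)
The plan is to establish parts (a)--(c) by direct computation from the definition \eqref{20130520:eq8}, and then read off the remaining parts as essentially formal consequences, the single genuinely substantive point being the identification of $\ker(\Phi^{(r)})$ with $V^{(r)}$ in part (f). Throughout I would abbreviate $A:=(\ad f)^{-1}\circ\pi_{[f,\mf g]}:\,\mf g\to[e,\mf g]$ and $N:=(\ad r)\circ A$, so that $N$ is nilpotent by Lemma \ref{20130520:lem1} and $\Phi^{(r)}=\pi_{\mf g^e}\circ(1+N)^{-1}$ with $(1+N)^{-1}=\sum_{t\geq0}(-N)^t$. Part (a) is then immediate, since $\Phi^{(r)}$ ends with the projection $\pi_{\mf g^e}$. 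For (b), if $a\in\mf g^e$ then $\pi_{[f,\mf g]}(a)=0$ by \eqref{20140221:eq4}, whence $N(a)=0$, $(1+N)^{-1}(a)=a$, and $\Phi^{(r)}(a)=\pi_{\mf g^e}(a)=a$. For (c), I would set $b=(1+N)^{-1}(a)$, so that $a=b+[r,A(b)]$ and $\Phi^{(r)}(a)=\pi_{\mf g^e}(b)$; using $[f,A(b)]=\pi_{[f,\mf g]}(b)=b-\pi_{\mf g^e}(b)$, I obtain $a-\Phi^{(r)}(a)=[f,A(b)]+[r,A(b)]=[f+r,A(b)]$, which is \eqref{20130520:eq10}.

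Parts (a) and (b) together show that $\Phi^{(r)}$ is idempotent with image $\mf g^e$, i.e. a projection onto $\mf g^e$. Part (d) is then the standard fact $\ker(\Phi^{(r)})=\im(1-\Phi^{(r)})=\{a-\Phi^{(r)}(a)\mid a\in\mf g\}$ for a projection, while the inclusion into $[f+r,\mf g]$ is exactly \eqref{20130520:eq10}. Part (e) is a direct application of Lemma \ref{20140318:lem}(c): an element $[f+r,a]\in V^{(r)}\cap\mf g^e$ satisfies both $[f+r,a]\in\mf g^e$ and $(a|[f+r,\mf g]\cap\mf g^e)=0$, forcing $[f+r,a]=0$.

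The crux is part (f), where I would compute both sides and show each equals $[f+r,[e,\mf g]]$. On one hand, by (c) and (d), $\ker(\Phi^{(r)})=\{[f+r,A(1+N)^{-1}a]\mid a\in\mf g\}=[f+r,\im A]$, and $\im A=[e,\mf g]$, since $(\ad f)^{-1}$ maps $[f,\mf g]$ bijectively onto $[e,\mf g]$ by \eqref{20130520:eq2} and $(1+N)^{-1}$ is bijective; thus $\ker(\Phi^{(r)})=[f+r,[e,\mf g]]$. On the other hand, writing $V^{(r)}=[f+r,W]$ with $W=([f+r,\mf g]\cap\mf g^e)^\perp$, I would reuse the linear-algebra identity from the proof of Lemma \ref{20140318:lem}(c), namely $([f+r,\mf g]\cap\mf g^e)^\perp=\ker(\ad(f+r))+[e,\mf g]$; since $[f+r,\ker(\ad(f+r))]=0$, this yields $V^{(r)}=[f+r,[e,\mf g]]$ as well. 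Matching these two descriptions of the same space is where the real work lies, and it is the step I expect to be the main obstacle.

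Finally, (g) follows because an idempotent decomposes $\mf g=\im(\Phi^{(r)})\oplus\ker(\Phi^{(r)})=\mf g^e\oplus V^{(r)}$, with $\Phi^{(r)}$ the associated projection. Part (h) follows by restricting $\Phi^{(r)}$ to $[f+r,\mf g]$: its kernel there is $V^{(r)}\cap[f+r,\mf g]=V^{(r)}$ (as $V^{(r)}=[f+r,[e,\mf g]]\subset[f+r,\mf g]$), and its image is $[f+r,\mf g]\cap\mf g^e$, using (b) for surjectivity onto the latter and (c) to keep $\Phi^{(r)}(b)=b-(1-\Phi^{(r)})(b)$ inside $[f+r,\mf g]$.
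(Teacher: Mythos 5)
Your argument is correct, and for parts (a)--(e), (g), (h) it is essentially the paper's proof: (a) and (b) are read off from the definition \eqref{20130520:eq8} and the decomposition \eqref{20140221:eq4}, (c) is the same telescoping computation, (d) is the standard kernel--image identity for an idempotent together with \eqref{20130520:eq10}, and (e) is exactly Lemma \ref{20140318:lem}(c). The one place where you genuinely diverge is part (f). The paper only proves the inclusion $\ker\Phi^{(r)}\subset V^{(r)}$ directly (from (c), (d) and the orthogonality of $[e,\mf g]$ and $\mf g^e$), and then upgrades it to an equality by a containment-plus-dimension argument: $\mf g=\mf g^e\oplus\ker\Phi^{(r)}\subset\mf g^e+V^{(r)}$, which is a direct sum by (e), forcing $\ker\Phi^{(r)}=V^{(r)}$. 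You instead identify both spaces explicitly as $[f+r,[e,\mf g]]$: for the kernel this uses (c), (d), the bijectivity of $(1+N)^{-1}$ and the surjectivity of $(\ad f)^{-1}\circ\pi_{[f,\mf g]}$ onto $[e,\mf g]$ (cf.\ \eqref{20130520:eq2}); for $V^{(r)}$ it uses the computation $([f+r,\mf g]\cap\mf g^e)^\perp=\ker(\ad(f+r))+[e,\mf g]$ already present in the proof of Lemma \ref{20140318:lem}(c). Both routes are sound and rest on the same underlying facts; yours has the merit of not needing part (e) as an input to (f) and of producing the sharper statement $\ker\Phi^{(r)}=V^{(r)}=[f+r,[e,\mf g]]$ (so that (e) becomes literally Lemma \ref{20140318:lem}(a)), while the paper's version is shorter because it reuses (e) and a dimension count instead of computing $V^{(r)}$ explicitly. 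The step you flagged as the likely obstacle is in fact fully carried out by the two identifications you give, so there is no gap.
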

\begin{proof}
Part (a) is obvious.
Part (b) is clear, since $\pi_{[f,\mf g]}$ acts trivially on $\mf g^e$ due to the decomposition \eqref{20140221:eq4}.
For part (c), we have
$$
\begin{array}{l}
\displaystyle{
\vphantom{\Big(}
a-\Phi^{(r)}(a)
=
a-
\pi_{\mf g^e}\circ
\big(1+(\ad r)\circ(\ad f)^{-1}\circ\pi_{[f,\mf g]}\big)^{-1}(a)
} \\
\displaystyle{
\vphantom{\Big(}
=
a
-\big(1+(\ad r)\circ(\ad f)^{-1}\circ\pi_{[f,\mf g]}\big)^{-1}(a)
} \\
\displaystyle{
\vphantom{\Big(}
\,\,\,\,\,\,\,\,\,
+\pi_{[f,\mf g]}\circ
\big(1+(\ad r)\circ(\ad f)^{-1}\circ\pi_{[f,\mf g]}\big)^{-1}(a)
} \\
\displaystyle{
\vphantom{\Big(}
=
(\ad r)\circ(\ad f)^{-1}\circ\pi_{[f,\mf g]}\circ\big(1+(\ad r)\circ(\ad f)^{-1}\circ\pi_{[f,\mf g]}\big)^{-1}(a)
} \\
\displaystyle{
\vphantom{\Big(}
\,\,\,\,\,\,\,\,\,
+\pi_{[f,\mf g]}\circ
\big(1+(\ad r)\circ(\ad f)^{-1}\circ\pi_{[f,\mf g]}\big)^{-1}(a)
} \\
\displaystyle{
\vphantom{\Big(}
=
\Big[
f+r
,
(\ad f)^{-1}\circ\pi_{[f,\mf g]}\circ\big(1+(\ad r)\circ(\ad f)^{-1}\circ\pi_{[f,\mf g]}\big)^{-1}(a)
\Big]
\,.}
\end{array}
$$
The equality in part (d) follows from (a) and (b),
and the inclusion follows from part (c).
Part (e) is the same as Lemma \ref{20140318:lem}(c).
The inclusion $\ker\Phi^{(r)}\subset V^{(r)}$ 
follows from (d) and (c), and the observation that 
the map $(\ad f)^{-1}\circ\pi_{[f,\mf g]}$ has values in $[e,\mf g]$,
which is orthogonal to $\mf g^e$.
By parts (a), (b), (e) and the inclusion $\ker\Phi^{(r)}\subset V^{(r)}$ 
we have 
$$
\mf g=\mf g^e\oplus\ker(\Phi^{(r)})\subset \mf g^e+V^{(r)}=\mf g^e\oplus V^{(r)}
\,.
$$
Part (f) follows.
Finally, part (g) is an immediate consequence of (a)--(f),
and part (h) follows from (g) and the fact that $V^{(r)}\subset[f+r,\mf g]$.
\end{proof}

%%%
\subsection{Explicit formula for the Poisson structure of the Slodowy slice}\label{slod.6}

%We want to describe explicitly the Poisson bracket
%on the algebra $\mc W^{\text{fin}}(\mf g,f)$ of polynomial functions on the Slodowy slice $\mc S$.
%
Note that we can identify the ``dual space'' to $\mc S=\psi(f+\mf g^e)\simeq\mf g^e$
as $\mf g^f$ (via the non-degenerate pairing of $\mf g^e$ and $\mf g^f$).
Hence, the classical finite $\mc W$-algebra is, as a commutative associative algebra,
\begin{equation}\label{20130519:eq15}
\mc W^{\text{fin}}(\mf g,f)\simeq S(\mf g^f)\,.
\end{equation}
A polynomial function $P$ on $\mc S$
can be identified with an element $P\in S(\mf g^f)$
which can viewed as an element of $S(\mf g)$,
and therefore it can be considered as a polynomial function on $\mf g^*$.
Clearly, the restriction of it to $\mc S$ coincides with the polynomial function $P$
we started with 
(we are using the fact that $P(f+r)=P(r)$, since $(f|\mf g^f)=0$).
The Poisson structure $\eta^{\mc S}$ on $\mc S$ 
is given by Theorem \ref{20130519:prop}.
Fix $\xi=\psi(f+r)\in\mc S$, where $r\in\mf g^e$.
We have the $\omega_{f+r}$-orthogonal decomposition
(cf. \eqref{20130519:eq11}, \eqref{20130519:eq12} and \eqref{20130519:eq14})
\begin{equation}\label{20130519:eq18}
T_\xi\mc S\simeq[f+r,\mf g]=(\mf g^e\cap[f+r,\mf g])\oplus V^{(r)}\,,
\end{equation}
where $V^{(r)}$ is as in \eqref{20130520:eq9b}.
By Lemma \ref{20130520:lem2}(h),
the projection onto the first component
is the map $\Phi^{(r)}|_{[f+r,\mf g]}:\,[f+r,\mf g]\to\mf g^e\cap[f+r,\mf g]$
given by \eqref{20130520:eq8}.
If we consider $\eta^{\mc S}(\xi)$ as a map
$\eta^{\mc S}(\xi):\,T^*_\xi\mc S\simeq\mf g^f\to T_\xi\mc S\simeq\mf g^e$,
we have that ($q\in\mf g^f$):
\begin{equation}\label{20130519:eq16}
\eta^{\mc S}(\xi)(q)
=\Phi^{(r)}([q,f+r])
=\Phi^{(r)}([q,r])
\,.
\end{equation}
Equivalently, we can consider $\eta^{\mc S}(\xi)$ as a skewsymmetric map
$\eta^{\mc S}(\xi):\, \mf g^f\times\mf g^f\to\mb F$,
given by ($p,q\in\mf g^f$)
\begin{equation}\label{20130519:eq17}
\eta^{\mc S}(\xi)(p,q)
=
\big(p|\Phi^{(r)}([q,r])\big)
\,.
\end{equation}
To get the corresponding Poisson bracket on $\mc W^{\text{fin}}(\mf g,f)=S(\mf g^f)$,
we should 
write the RHS of \eqref{20130519:eq16}
as a polynomial function $P^{p,q}(q_1,\dots,q_\ell)$ ($\ell=\dim\mf g^f$) in the elements of $\mf g^f$,
i.e.
$$
\big(p|\Phi^{(r)}([q,r])\big)=P^{p,q}((q_1|r),\dots,(q_\ell|r))\,,
$$ 
for all $r\in\mf g^e$.
Then, this polynomial gives the corresponding Poisson bracket among
generators of the classical finite $\mc W$-algebra:
\begin{equation}\label{20130519:eq19}
\{p,q\}_{\mc S}=
P^{p,q}(q_1,\dots,q_\ell)
\in S(\mf g^f)
\,.
\end{equation}
\begin{example}\label{20130519:ex}
If $q\in\mf g^f_0=\mf g^e_0$, 
then $[q,f+r]=[q,r]\in\mf g^e\cap[f+r,\mf g]$,
and therefore $\Phi^{(r)}([q,r])=[q,r]$.
It follows that $P^{p,q}(r)=(p|\Phi^{(r)}([q,r]))=(p|[q,r])=([p,q]|r)$,
i.e. $P^{p,q}=[p,q]\in\mf g^f\subset S(\mf g^f)$.
By skewsymmetry, if $p\in\mf g^f_0$, we also have $P^{p,q}=[p,q]$.
In conclusion, 
$$
\{p,q\}_{\mc S}=[p,q]\in\mf g^f
\,\,\text{ if } p,q\in\mf g^f \text{ and either } p \text{ or } q \in\mf g^f_0
\,.
$$
\end{example}
\begin{example}\label{20140318:ex}
If $f\in\mf g$ is a principal nilpotent element and $r\in\mf g^e$,
then, by Lemma \ref{20140318:lem}(e),
we have $\mf g=[f+r,\mf g]\oplus\mf g^e$.
It follows by \eqref{20130520:eq9b} that $V^{(r)}=[f+r,\mf g]$.
On the other hand, for $q\in\mf g^f$, we have $[r,q]=[f+r,q]\in V^{(r)}=\ker(\Phi^{(r)})$.
Hence, $\Phi^{(r)}([q,r])=0$.
It follows that
$\{p,q\}_{\mc S}=0$ for every $p,q\in\mf g^f$:
the Poisson bracket is identically zero in this case,
which is a well known result of Kostant \cite{Kos78}.
\end{example}
\begin{theorem}\label{20130521:prop}
The general formula for the Poisson bracket on the $\mc W$-algebra 
$\mc W^{\text{fin}}(\mf g,f)$ is ($p,q\in\mf g^f$):
\begin{equation}\label{20140319:eq7}
\{p,q\}_{\mc S}=
[p,q]+
\sum_{t=1}^\infty
\sum_{j_1,\dots,j_t\in J^f}
\sum_{n_1=0}^{2\delta(j_1)-1}
\dots
\sum_{n_t=0}^{2\delta(j_t)-1}
[p,q^{j_1}_{n_1}]^\sharp
[q^{n_1+1}_{j_1},q^{j_2}_{n_2}]^\sharp
\dots
[q^{n_t+1}_{j_t},q]^\sharp
\,,
\end{equation}
where $a^\sharp=\pi_{\mf g^f}(a)$.
\end{theorem}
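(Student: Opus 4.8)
The plan is to compute directly from the geometric description of the bracket already assembled in \eqref{20130519:eq17}--\eqref{20130519:eq19}, rather than via the affine Theorem \ref{20140304:thm}. By \eqref{20130519:eq17}, for $p,q\in\mf g^f$ and $\xi=\psi(f+r)$ with $r\in\mf g^e$, the induced Poisson structure is $\eta^{\mc S}(\xi)(p,q)=\big(p|\Phi^{(r)}([q,r])\big)$, and by \eqref{20130519:eq19} the bracket $\{p,q\}_{\mc S}\in S(\mf g^f)$ is read off by writing this as a polynomial in the coordinates $x_j=(q_j|r)$ and substituting $x_j\mapsto q_j$. The first thing I would record is a translation rule: since $r=\sum_j(q_j|r)q^j$, any factor $(b|r)$ with $b\in\mf g$ fixed equals $\sum_j(b|q^j)x_j$, which under $x_j\mapsto q_j$ becomes $\sum_j(b|q^j)q_j=b^\sharp$ by \eqref{20130520:eq1}; since the substitution is an algebra homomorphism, a product of such linear-in-$r$ factors translates into the product of the corresponding $\sharp$-projections. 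Hence it suffices to expand $\big(p|\Phi^{(r)}([q,r])\big)$ as a sum of products of factors each linear in $r$.

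I would then expand $\Phi^{(r)}$ by its geometric series \eqref{20130520:eq8}. Setting $A=(\ad r)\circ(\ad f)^{-1}\circ\pi_{[f,\mf g]}$, and using that $p\in\mf g^f$ is orthogonal to $[f,\mf g]$ by \eqref{20140221:eq4} (so the outer $\pi_{\mf g^e}$ may be dropped against $p$), one gets
\[
\big(p|\Phi^{(r)}([q,r])\big)=\sum_{t\geq0}(-1)^t\big(p|A^t[q,r]\big),
\]
a finite sum since $A$ is nilpotent by Lemma \ref{20130520:lem1}. Expanding each of the $t$ copies of $(\ad f)^{-1}\circ\pi_{[f,\mf g]}$ by the explicit formula \eqref{20130520:eq3} introduces one summation pair $(j_k,n_k)$ per copy, with $0\leq n_k\leq 2\delta(j_k)-1$ --- precisely the ranges in \eqref{20140319:eq7} --- and the $t+1$ copies of $r$ (one from $[q,r]$, one from each $\ad r$) get distributed one to a factor.

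The core step is the bookkeeping that turns each factor into the form $(b|r)$ by invariance of $(\cdot\,|\,\cdot)$ and then applies the translation rule. The outermost $\ad r$ paired against $p$ gives $(p|[r,q^{j_1}_{n_1}])=-([p,q^{j_1}_{n_1}]|r)$; the factor gluing indices $k$ and $k+1$ gives $([r,q^{j_{k+1}}_{n_{k+1}}]|q^{n_k+1}_{j_k})=-([q^{n_k+1}_{j_k},q^{j_{k+1}}_{n_{k+1}}]|r)$; and the innermost factor from $[q,r]$ gives $([q,r]|q^{n_t+1}_{j_t})=([q^{n_t+1}_{j_t},q]|r)$. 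Collecting, each chain $(j_1,n_1),\dots,(j_t,n_t)$ produces exactly the product $[p,q^{j_1}_{n_1}]^\sharp[q^{n_1+1}_{j_1},q^{j_2}_{n_2}]^\sharp\cdots[q^{n_t+1}_{j_t},q]^\sharp$ of the statement, while the $t=0$ term gives $(p|[q,r])=([p,q]|r)\mapsto[p,q]^\sharp$, the leading term denoted $[p,q]$ in \eqref{20140319:eq7}.

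The step I expect to be the main obstacle is the uniform sign count. The $p$-factor contributes one sign flip, each of the $t-1$ gluing factors contributes one, and the $q$-factor none, for $t$ flips in all; these cancel the $(-1)^t$ from the geometric series, leaving coefficient $(-1)^{2t}=+1$ for every chain, matching \eqref{20140319:eq7}. I would pin this down by checking $t=1,2$ by hand and then running a clean induction on $t$ that peels off the outermost $\ad r$ together with one application of \eqref{20130520:eq3}, confirming along the way that no contributions arise outside the stated index ranges. The remaining manipulations are routine juggling of invariant pairings.
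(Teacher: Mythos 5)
Your proposal is correct and follows essentially the same route as the paper: expand $\Phi^{(r)}$ by its geometric series \eqref{20130520:eq8}, drop the outer $\pi_{\mf g^e}$ against $p\in\mf g^f$ using $\mf g^f\perp[f,\mf g]$, insert \eqref{20130520:eq3} for each copy of $(\ad f)^{-1}\circ\pi_{[f,\mf g]}$, and use invariance of the form to peel off one factor $([p,q^{j_1}_{n_1}]|r)$ at a time, concluding by induction on $t$. Your sign bookkeeping ($t$ flips cancelling the $(-1)^t$ of the series) and the translation rule $(b|r)\mapsto b^\sharp$ are exactly the details the paper leaves implicit in its "easy induction".
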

\begin{proof}
By \eqref{20130519:eq19} and \eqref{20130520:eq8}, we have
$\{p,q\}_{\mc S}=\sum_{t=0}^\infty\{p,q\}_{\mc S}^{(t)}$, where
$$
\{p,q\}_{\mc S}^{(t)}(r)=
\Big(p\Big|
\pi_{\mf g^e}\circ\big(-(\ad r)\circ(\ad f)^{-1}\circ\pi_{[f,\mf g]}\big)^t([q,r])
\Big)
\,.
$$
Note that we can remove $\pi_{\mf g^e}$ since, by assumption, $p\in\mf g^f$.
For $t=0$ we get $\{p,q\}_{\mc S}^{(0)}(r)=(p|[q,r])$,
i.e. $\{p,q\}_{\mc S}^{(0)}=[p,q]$.
For $t\geq1$ we have
$$
\{p,q\}_{\mc S}^{(t)}(r)
=
\Big(r\Big|
\Big[p,
(\ad f)^{-1}\circ\pi_{[f,\mf g]}\circ
\big(-(\ad r)\circ(\ad f)^{-1}\circ\pi_{[f,\mf g]}\big)^{t-1}([q,r])
\Big]\Big)
\,.
$$
We can use equation \eqref{20130520:eq3} to rewrite the above equation as
$$
\{p,q\}_{\mc S}^{(t)}(r)
=
\sum_{j_1\in J^f}\sum_{n_1=0}^{2\delta(j_1)-1}
([p,q^{j_1}_{n_1}]|r)
\Big(
q_{j_1}^{n_1+1}
\Big|
\big(-(\ad r)\circ(\ad f)^{-1}\circ\pi_{[f,\mf g]}\big)^{t-1}([q,r])
\Big)
\,.
$$
The general formula follows by an easy induction on $t$.
\end{proof}
\begin{remark}\label{20140310:rem1}
Formula \eqref{20140319:eq7} appeared for the first time in \cite{DS13}.
\end{remark}
\begin{remark}\label{20140310:rem2}
The Poisson bracket \eqref{20140319:eq7} is homogeneous
with respect to the conformal weight (cf. Section \ref{sec:2.2}),
which coincides with the so-called Kazhdan grading.
In fact, the Poisson algebra $\mb F[\mc S]$ can be viewed as the associated graded
(or ``classical limit'') of the quantum finite $W$-algebra $W^{\text{fin}}(\mf g,f)$
with respect to the Kazhdan filtration, \cite{GG02}.
(Here $W$, as opposed to $\mc W$, refers to ``quantum'' $W$-algebras.)
\end{remark}

%%%
\subsection{Twisted Slodowy slice}\label{slod.7}

We can consider also the following ``twisted'' Slodowy slice 
$$
\mc S_z=\psi(f+zx+\mf g^e)
\,\,,\,\,\,\,
z\in\mb F
\,.
$$
Since $x\in\mf g_0$,
and all the preliminary results from Sections \ref{sec:slod.3} and \ref{slod.5}
hold for $r\in\mf g_{\geq0}$ (not only for $r\in\mf g^e$),
we can repeat the same arguments that lead to Theorem \ref{20130521:prop}
(replacing everywhere $r$ by $zx+r$)
to get the Poisson algebra structure on the algebra of polynomial functions 
$\mc W^{\text{fin}}_z(\mf g,f)\simeq S(\mf g^f)$ on $\mc S_z$.
We thus get the following ``twisted'' analogue of Theorem \ref{20130521:prop}:
\begin{theorem}\label{20140318:rem}
The general formula for the Poisson bracket on generators
of the $z$-twisted classical finite $\mc W$-algebra $\mc W_z(\mf g,f)$
is as follows ($p,q\in\mf g^f$):
\begin{equation}\label{20140319:eq6}
\begin{array}{l}
\displaystyle{
\vphantom{\Big(}
\{p,q\}_{\mc S_z}
} \\
\displaystyle{
\vphantom{\Big(}
=
[p,q]+z(x|[p,q])+
\sum_{t=1}^\infty
\sum_{j_1,\dots,j_t\in J^f}
\sum_{n_1=0}^{2\delta(j_1)-1}
\dots
\sum_{n_t=0}^{2\delta(j_t)-1}
\big(
[p,q^{j_1}_{n_1}]^\sharp
+z(x|[p,q^{j_1}_{n_1}])
\big)
} \\
\displaystyle{
\vphantom{\Big(}
\big(
[q^{n_1+1}_{j_1},q^{j_2}_{n_2}]^\sharp
+z(x|[q^{n_1+1}_{j_1},q^{j_2}_{n_2}])
\big)
\dots
\big(
[q^{n_t+1}_{j_t},q]^\sharp
+z(x|[q^{n_t+1}_{j_t},q])
\big)
\,.}
\end{array}
\end{equation}
\end{theorem}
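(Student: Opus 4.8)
The plan is to rerun the proof of Theorem~\ref{20130521:prop} with $r\in\mf g^e$ replaced throughout by $s:=zx+r$, and then to read off the extra $z$-terms. First I would check that the replacement is legitimate. Since $x\in\mf g_0$ and $\mf g^e\subset\mf g_{\geq0}$, we have $s=zx+r\in\mf g_{\geq0}$ for all $r\in\mf g^e$ and all $z\in\mb F$; hence every preliminary result of Sections~\ref{sec:slod.3} and \ref{slod.5}, each of which was proved for an arbitrary element of $\mf g_{\geq0}$, applies with $s$ in place of $r$. In particular, Lemma~\ref{20140318:lem}(c),(d) (with $s$ for $r$) give conditions (i) and (ii) of Theorem~\ref{20130519:prop} for the twisted slice, so $\mc S_z$ indeed inherits a Poisson structure; Lemma~\ref{20130520:lem1} guarantees that $\Phi^{(s)}$ is well defined (the defining sum \eqref{20130520:eq8} is finite); and Lemma~\ref{20130520:lem2}(h) identifies the projection $[f+s,\mf g]\to[f+s,\mf g]\cap\mf g^e$ with $\Phi^{(s)}\big|_{[f+s,\mf g]}$. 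Exactly as in \eqref{20130519:eq16}, and using $[q,f]=0$ for $q\in\mf g^f$, the induced structure at $\xi=\psi(f+zx+r)$ is $\eta^{\mc S_z}(\xi)(p,q)=\big(p\big|\Phi^{(s)}([q,s])\big)$.

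Next I would expand $\Phi^{(s)}$ by its series \eqref{20130520:eq8} and repeat the chain computation of Theorem~\ref{20130521:prop} verbatim, with $s$ in place of $r$. Writing $\{p,q\}_{\mc S_z}=\sum_{t\geq0}\{p,q\}^{(t)}_{\mc S_z}$, the $t=0$ term is $(p|[q,s])$, while for $t\geq1$, after dropping $\pi_{\mf g^e}$ (legitimate since $p\in\mf g^f\perp[f,\mf g]$), peeling off one factor of $\ad s$ at a time by invariance of the form, and substituting \eqref{20130520:eq3} at each step, an easy induction on $t$ gives a sum over chains $j_1,\dots,j_t\in J^f$, $0\leq n_k\leq2\delta(j_k)-1$, of products of $t+1$ pairings
\[
\big(s\big|[p,q^{j_1}_{n_1}]\big)\,\big(s\big|[q^{n_1+1}_{j_1},q^{j_2}_{n_2}]\big)\cdots\big(s\big|[q^{n_t+1}_{j_t},q]\big)\,.
\]
The crucial structural observation is that every one of these $t+1$ factors is a pairing of $s$ with a bracket of two basis vectors; in particular the outermost factor (coming from the pairing with $p$) and the innermost one (coming from $[q,s]$) are of this same shape.

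The key step is then to pass to $S(\mf g^f)$ and split each factor according to $s=zx+r$, writing $(s|[a,b])=(r|[a,b])+z(x|[a,b])$. By the first completeness relation in \eqref{20130520:eq1}, expanding $r=\sum_j(q_j|r)q^j$ and replacing the coordinate $(q_j|r)$ by the generator $q_j$ (the coordinates on $\mc S_z$ are still $(q_j|r)$, $r\in\mf g^e$, since $(f+zx|\mf g^f)=0$), the linear-in-$r$ part $(r|[a,b])$ becomes the degree-one element $[a,b]^\sharp\in\mf g^f$, while the $x$-part is the degree-zero scalar $z(x|[a,b])$; thus each factor $(s|[a,b])$ becomes $[a,b]^\sharp+z(x|[a,b])$ in $S(\mf g^f)$. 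Since each of the $t+1$ factors is affine in $r$, their product is a genuine polynomial in the coordinates, and decomposes as the product of the corresponding expressions. Summing over $t\geq0$, with the $t=0$ term producing $[p,q]^\sharp+z(x|[p,q])$, gives exactly \eqref{20140319:eq6}. I do not expect a genuine obstacle here, since this is just a bookkeeping refinement of Theorem~\ref{20130521:prop}; the one point that must be checked with care is that \emph{every} factor---not merely the interior ones---acquires a $z(x|\cdot)$ summand, which is precisely why the outermost pairing must be shown to be with $s$ (rather than with $r$) even though it originates from the pairing with $p\in\mf g^f$.
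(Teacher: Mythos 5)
Your proposal is correct and is essentially the paper's own argument: the paper proves Theorem \ref{20140318:rem} precisely by observing that all the preliminaries of Sections \ref{sec:slod.3} and \ref{slod.5} were established for arbitrary elements of $\mf g_{\geq0}$ and then rerunning the proof of Theorem \ref{20130521:prop} with $r$ replaced by $zx+r$. Your additional care about the coordinates on $\mc S_z$ (using $(f+zx|\mf g^f)=0$) and about the outermost and innermost factors both being pairings with $zx+r$ is exactly the bookkeeping the paper leaves implicit.
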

Note that all the $z$-twisted Poisson algebras $\mc W_z^{\text{fin}}(\mf g,f)$
are isomorphic for every $z\in\mb F$.
Indeed (as Pasha Etingof pointed out), 
the automorphism $e^{\frac12 z\ad^*(e)}$ of $\mf g^*$
maps $\mc S=\psi(f+\mf g^e)$ to $\mc S_z=\psi(f+zx+\mf g^e)$,
hence it induces a Poisson algebra isomorphism 
$\mc W^{\text{fin}}(\mf g,f)\stackrel{\sim}{\to}\mc W_z^{\text{fin}}(\mf g,f)$.
This algebra isomorphism is obtained as pullback of the map of Poisson manifolds
$\phi:\,\mc S_z\stackrel{\sim}{\to}\mc S$, given by 
$$
\phi:\,f+zx+r\mapsto e^{-\frac12z\ad e}(f+zx+r)=f+r+\frac{z^2}{4}e
\,.
$$
Hence, it maps the generators $q\in\mf g^f$ 
(viewed as a linear function $(q|\,\cdot)$ on $\mc S\simeq\mf g^e$)
to 
\begin{equation}\label{20140402:eq1}
\phi^*(q)=q+\frac{z^2}{4}(q|e)
\,.
\end{equation}
As a consequence, we get the induced Poisson algebra isomorphism
$\phi^*:\,\mc W(\mf g,f)\simeq S(\mf g^f)\stackrel{\sim}{\longrightarrow}\mc W_z(\mf g,f)\simeq S(\mf g^f)$.
In other words, we have the following
\begin{corollary}\label{20140318:remb}
The map 
$\phi^*:\,\mc W(\mf g,f)\simeq S(\mf g^f)\stackrel{\sim}{\longrightarrow}
\mc W_z(\mf g,f)\simeq S(\mf g^f)$
defined, on generators, by \eqref{20140402:eq1},
is a Poisson algebra isomorphism
from the Poisson algebra $\mc W(\mf g,f)\simeq S(\mf g^f)$ with Poisson bracket \eqref{20140319:eq7}
to the Poisson algebra $\mc W_z(\mf g,f)\simeq S(\mf g^f)$ with Poisson bracket \eqref{20140319:eq6}.
In other words, equation \eqref{20140319:eq6} is unchanged
if we replace $zx$ by $\frac{z^2}{4}e$.
\end{corollary}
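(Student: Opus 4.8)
\emph{Plan.} The first assertion is, in essence, already contained in the paragraph preceding the statement, and I would only recast it as a clean application of Weinstein's theorem. Under the identification $\psi$, the map $\phi$ is the restriction to $\mc S_z$ of the automorphism $e^{-\frac12 z\ad e}$ of $\mf g$; since this is a Lie algebra automorphism preserving $(\cdot\,|\,\cdot)$, it induces a Poisson automorphism of $\mf g^*$ carrying coadjoint orbits to coadjoint orbits, i.e.\ permuting the symplectic leaves. As it maps $\mc S_z$ onto $\mc S$, Theorem \ref{20130519:prop} guarantees that $\phi:\mc S_z\to\mc S$ is an isomorphism of the induced Poisson manifolds, so $\phi^*$ is a Poisson algebra isomorphism; on generators it is given by \eqref{20140402:eq1}. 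This settles the first claim.

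For the reformulation, my plan is to re-run the derivation of Theorems \ref{20130521:prop} and \ref{20140318:rem}, but shifting by $\tfrac{z^2}{4}e$ in place of $zx$. The point is that $\tfrac{z^2}{4}e\in\mf g^e\subset\mf g_{\ge0}$: hence, on one side, all the preliminary results (Lemmas \ref{20140318:lem} and \ref{20130520:lem2}) are available with $r$ replaced by $\tfrac{z^2}{4}e+r$; on the other side, the affine subspace $\psi(f+\tfrac{z^2}{4}e+\mf g^e)$ is literally $\mc S$ again, only carried through the shifted parametrization $\pi':\mf g^e\to\mc S$, $r\mapsto\psi(f+\tfrac{z^2}{4}e+r)$. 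Running the computation verbatim with $zx$ replaced by $\tfrac{z^2}{4}e$ then yields that the bracket on $S(\mf g^f)$ induced through $\pi'$ is given by \eqref{20140319:eq6} with every $z(x|\,\cdot\,)$ replaced by $\tfrac{z^2}{4}(e|\,\cdot\,)$.

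It remains to identify this bracket with \eqref{20140319:eq6} itself, and here I would exploit the standard parametrization $\rho:\mf g^e\to\mc S_z$, $r\mapsto\psi(f+zx+r)$. The defining computation of $\phi$ is precisely the identity $\phi\circ\rho=\pi'$, whence $\pi'^*=\rho^*\circ\phi^*$ as maps $\mb F[\mc S]\to S(\mf g^f)$. Now $\rho^*$ transports $\{\cdot,\cdot\}_{\mc S_z}$, $\pi'^*$ transports the bracket of the previous paragraph, and $\phi^*$ is Poisson by the first step; since $\pi'^*=\rho^*\circ\phi^*$ is a single bijective map which is Poisson for \emph{both} target structures simultaneously, the two brackets on $S(\mf g^f)$ coincide. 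This is exactly the statement that \eqref{20140319:eq6} is unchanged upon replacing $zx$ by $\tfrac{z^2}{4}e$.

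The one place that needs genuine care is the recognition that this is a \emph{global} identity rather than a term-by-term one: already the lowest-order contributions $z(x|[p,q])$ and $\tfrac{z^2}{4}(e|[p,q])$ differ, since they detect different $\ad x$-homogeneous components of $[p,q]$, and equality emerges only after the whole series is summed. The reparametrization identity $\phi\circ\rho=\pi'$ is what forces the higher-order terms to conspire; verifying it reduces to the short calculation $e^{-\frac12 z\ad e}(f+zx+r)=f+r+\tfrac{z^2}{4}e$ for $r\in\mf g^e$, which I expect to be the only real computational step.
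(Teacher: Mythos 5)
Your proof is correct, and its core is exactly the paper's argument: the Lie algebra automorphism $e^{-\frac12 z\ad e}$ preserves $(\cdot\,|\,\cdot)$, hence induces a Poisson automorphism of $\mf g^*$ carrying $\mc S_z$ to $\mc S$ (via the computation $e^{-\frac12z\ad e}(f+zx+r)=f+r+\frac{z^2}{4}e$), and its pullback acts on generators by \eqref{20140402:eq1}. The only divergence is in the ``in other words'' clause: you re-run the Weinstein derivation with the shift $\frac{z^2}{4}e\in\mf g_{\geq0}$ and invoke the factorization $\pi'=\phi\circ\rho$ of parametrizations, whereas the paper reads the rephrasing off directly from the first claim, since applying the algebra homomorphism $\phi^*$ factor-by-factor to \eqref{20140319:eq7} replaces each $[a,b]^\sharp$ by $[a,b]^\sharp+\frac{z^2}{4}([a,b]^\sharp|e)=[a,b]^\sharp+\frac{z^2}{4}(e|[a,b])$ (as $[e,\mf g]\perp\mf g^e$), which is literally \eqref{20140319:eq6} with $zx$ replaced by $\frac{z^2}{4}e$. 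Both routes are valid; yours is slightly more roundabout but has the merit of making explicit that the resulting identity of brackets is global rather than term-by-term.
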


%%%
\section{Classical affine \texorpdfstring{$\mc W$}{W}-algebras
\texorpdfstring{$\mc W(\mf g,f)$}{W(g,f)}}\label{sec:2}

In this section we recall the definition of classical affine $\mc W$-algebras $\mc W(\mf g,f)$ in the language 
of Poisson vertex algebras, following \cite{DSKV13}
(which is a development of \cite{DS85}).
We refer to \cite{BDSK09} for the definition of Poisson vertex algebras (PVA)
and their basic properties.
We shall use the setup and notation of Section \ref{slod.4}.

%%%
\subsection{Construction of the classical affine \texorpdfstring{$\mc W$}{W}-algebra}
\label{sec:2.1}

Let $\mf g$ be a simple finite-dimensional Lie algebra over the field $\mb F$
with a non-degenerate symmetric invariant bilinear form $(\cdot\,|\,\cdot)$.
Given an element $s\in\mf g$, we have a PVA structure on
the algebra of differential polynomials $\mc V(\mf g)=S(\mb F[\partial]\mf g)$,
with $\lambda$-bracket given on generators by 
\begin{equation}\label{lambda}
\{a_\lambda b\}_z=[a,b]+(a| b)\lambda+z(s|[a,b]),
\qquad a,b\in\mf g\,,
\end{equation}
and extended to $\mc V(\mf g)$ by the sesquilinearity axioms and the Leibniz rules.
Here $z$ is an element of the field $\mb F$.

We shall assume that $s$ lies in $\mf g_d$.
In this case the $\mb F[\partial]$-submodule
$\mb F[\partial]\mf g_{\geq\frac12}\subset\mc V(\mf g)$ 
is a Lie conformal subalgebra 
with the $\lambda$-bracket $\{a_\lambda b\}_z=[a,b]$, $a,b\in\mf g_{\geq\frac12}$
(it is independent of $z$, since $s$ commutes with $\mf g_{\geq\frac12}$).
Consider the differential subalgebra
$\mc V(\mf g_{\leq\frac12})=S(\mb F[\partial]\mf g_{\leq\frac12})$ of $\mc V(\mf g)$,
and denote by $\rho:\,\mc V(\mf g)\twoheadrightarrow\mc V(\mf g_{\leq\frac12})$,
the differential algebra homomorphism defined on generators by
\begin{equation}\label{rho}
\rho(a)=\pi_{\leq\frac12}(a)+(f| a),
\qquad a\in\mf g\,,
\end{equation}
where $\pi_{\leq\frac12}:\,\mf g\to\mf g_{\leq\frac12}$ denotes 
the projection with kernel $\mf g_{\geq1}$.
Recall from \cite{DSKV13} that
we have a representation of the Lie conformal algebra $\mb F[\partial]\mf g_{\geq\frac12}$ 
on the differential subalgebra $\mc V(\mf g_{\leq\frac12})\subset\mc V(\mf g)$ given by
($a\in\mf g_{\geq\frac12}$, $g\in\mc V(\mf g_{\leq\frac12})$):
\begin{equation}\label{20120511:eq1}
a\,^\rho_\lambda\,g=\rho\{a_\lambda g\}_z
\end{equation}
(note that the RHS is independent of $z$ since, by assumption, $s\in Z(\mf g_{\geq\frac12})$).

The \emph{classical} $\mc W$-\emph{algebra} is, by definition,
the differential algebra
\begin{equation}\label{20120511:eq2}
\mc W=\mc W(\mf g,f)
%=\mc V(\mf g_{\leq\frac12})^{\mb F[\partial]\mf g_{\geq\frac12}}
=\big\{g\in\mc V(\mf g_{\leq\frac12})\,\big|\,a\,^\rho_\lambda\,g=0\,\text{ for all }a\in\mf g_{\geq\frac12}\}\,,
\end{equation}
endowed with the following PVA $\lambda$-bracket
\begin{equation}\label{20120511:eq3}
\{g_\lambda h\}_{z,\rho}=\rho\{g_\lambda h\}_z,
\qquad g,h\in\mc W\,.
\end{equation}
\begin{theorem}[\cite{DSKV13}]\phantomsection\label{daniele1}
\begin{enumerate}[(a)]
\item
Equation \eqref{20120511:eq1} defines a representation 
of the Lie conformal algebra $\mb F[\partial]\mf g_{\geq\frac12}$ on $\mc V(\mf g_{\leq\frac12})$
by derivations
(i.e. $a\,^\rho_\lambda\,(gh)=(a\,^\rho_\lambda\,g)h+(a\,^\rho_\lambda\,h)g$).
\item
$\mc W\subset\mc V(\mf g_{\leq\frac12})$ is a differential subalgebra.
\item
We have
$\rho\{g_\lambda\rho(h)\}_z=\rho\{g_\lambda h\}_z$,
and $\rho\{\rho(h)_\lambda g\}_z=\rho\{h_\lambda g\}_z$
for every $g\in\mc W$ and $h\in\mc V(\mf g)$.
\item
For every $g,h\in\mc W$, we have $\rho\{g_\lambda h\}_z\in\mc W[\lambda]$.
\item
The $\lambda$-bracket 
$\{\cdot\,_\lambda\,\cdot\}_{z,\rho}:\,\mc W\otimes\mc W\to\mc W[\lambda]$
given by \eqref{20120511:eq3} defines a PVA structure on $\mc W$.
\end{enumerate}
\end{theorem}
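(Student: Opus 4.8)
The plan is to prove the five parts in the natural dependency order $(a)\Rightarrow(b)\Rightarrow(c)\Rightarrow(d)\Rightarrow(e)$, since each statement feeds into the next. The overarching principle is that $\rho$ is a differential algebra homomorphism, so it is compatible with products and with $\partial$; the entire theorem is about verifying that $\rho$ is also suitably compatible with the $\lambda$-bracket $\{\cdot\,_\lambda\,\cdot\}_z$ on $\mc V(\mf g)$, at least when one of the arguments lies in $\mc W$. The key structural input I would isolate first is that $\ker\rho$ is the differential ideal of $\mc V(\mf g_{\leq\frac12})$ generated by $\{m-(f|m)\mid m\in\mf g_{\geq1}\}$ (equivalently, $\rho$ kills the relevant part of $\mf g_{\geq1}$ after the shift by $(f|\cdot)$), and that for $a\in\mf g_{\geq\frac12}$ the $\lambda$-action $a\,^\rho_\lambda\,\cdot$ measures exactly the failure of an element to be $\rho$-invariant.

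For part (a), I would first check the derivation property $a\,^\rho_\lambda(gh)=(a\,^\rho_\lambda g)h+(a\,^\rho_\lambda h)g$ for $a\in\mf g_{\geq\frac12}$: this follows by applying $\rho$ to the left Leibniz rule for $\{a_\lambda\,\cdot\}_z$ on $\mc V(\mf g)$ and using that $\rho$ is an algebra homomorphism. To verify that \eqref{20120511:eq1} is a genuine Lie conformal algebra representation, I would compute $[a\,^\rho_\lambda,\,b\,^\rho_\mu]\,g$ and compare it with $[a,b]\,^\rho_{\lambda+\mu}\,g$ for $a,b\in\mf g_{\geq\frac12}$; the point is that $\mf g_{\geq\frac12}$ is a Lie conformal subalgebra whose $\lambda$-bracket is just $[a,b]$ (the $z$-term and the $\lambda$-term drop out, as noted in the excerpt), and the obstruction to $\rho$ commuting with two successive brackets lands in $\ker\rho$. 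The main technical lemma here is that $\rho\{a_\lambda(1-\rho)(g)\}_z=0$ for $a\in\mf g_{\geq\frac12}$, i.e.\ that applying $a\,^\rho_\lambda$ does not see the part of $g$ already killed by $\rho$; I expect this to reduce to a Jacobi-identity computation on generators combined with the invariance of $(\cdot\,|\,\cdot)$. Part (b), that $\mc W$ is a differential subalgebra, is then immediate: if $a\,^\rho_\lambda g=a\,^\rho_\lambda h=0$ for all $a\in\mf g_{\geq\frac12}$, the derivation property of (a) gives $a\,^\rho_\lambda(gh)=0$, and sesquilinearity gives $a\,^\rho_\lambda(\partial g)=(\lambda+\partial)(a\,^\rho_\lambda g)=0$, so $\mc W$ is closed under product and $\partial$.

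Part (c) is the crucial compatibility statement and I anticipate it will be the main obstacle. I would prove the two identities $\rho\{g_\lambda\rho(h)\}_z=\rho\{g_\lambda h\}_z$ and $\rho\{\rho(h)_\lambda g\}_z=\rho\{h_\lambda g\}_z$ for $g\in\mc W$, $h\in\mc V(\mf g)$ separately. The content is that $g\in\mc W$ may be bracketed through $\rho$ freely. For the first identity, writing $h-\rho(h)$ as a sum of terms in the differential ideal generated by $m-(f|m)$, $m\in\mf g_{\geq1}$, it suffices to show $\rho\{g_\lambda(m-(f|m))\}_z=0$ for $m\in\mf g_{\geq1}$ and then propagate through the Leibniz rule; here is exactly where the defining invariance condition $a\,^\rho_\lambda g=0$ for $a\in\mf g_{\geq\frac12}$ must be brought to bear, relating $\{g_\lambda m\}_z$ to $\{m_\lambda g\}_z$ via skewsymmetry and then to $m\,^\rho_\lambda g$. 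The shift by $(f|m)$ is precisely what makes the $\rho$-image vanish, using $(f|[\cdot,\cdot])$ and the grading $f\in\mf g_{-1}$. The second identity follows symmetrically from skewsymmetry of the $\lambda$-bracket together with the first. Once (c) is in hand, part (d) follows by checking that $\rho\{g_\lambda h\}_z$ is annihilated by every $a\,^\rho_\mu$ for $a\in\mf g_{\geq\frac12}$: one applies $a\,^\rho_\mu$, uses the representation property from (a) and the Jacobi identity for the $\lambda$-bracket to move $a$ past $g$ and $h$, and invokes $a\,^\rho_\mu g=a\,^\rho_\mu h=0$; the terms that survive are controlled by (c). Finally, for part (e) I would verify the PVA axioms for $\{\cdot\,_\lambda\,\cdot\}_{z,\rho}$ on $\mc W$: skewsymmetry and the left/right Leibniz rules descend directly from those on $\mc V(\mf g)$ after applying $\rho$ and using (c) to remove $\rho$ from the inner arguments, while the Jacobi identity is inherited from the Jacobi identity on $\mc V(\mf g)$ by repeated use of (c) and (d) to commute $\rho$ through the nested brackets. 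The well-definedness of the target, namely that the bracket lands in $\mc W[\lambda]$ rather than merely $\mc V(\mf g_{\leq\frac12})[\lambda]$, is exactly part (d), so the pieces assemble cleanly.
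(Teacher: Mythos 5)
This theorem is quoted from \cite{DSKV13}: the present paper gives no proof of it, so there is no in-paper argument to compare against. Your outline is nonetheless essentially the standard proof and is correct in structure: the single lemma everything hinges on, namely that $\rho\{a_\lambda N\}_z=0$ for $a\in\mf g_{\geq\frac12}$ and $N$ in the differential ideal of $\mc V(\mf g)$ generated by $\{m-(f|m)\mid m\in\mf g_{\geq1}\}$, is exactly the right one, though it is established by a grading computation rather than a Jacobi-identity computation as you suggest (for $a\in\mf g_{\geq\frac12}$ and $m\in\mf g_{\geq1}$ one has $[a,m]\in\mf g_{\geq\frac32}$, whence $\pi_{\leq\frac12}([a,m])=0$, $(f|[a,m])=0$, $(a|m)=0$ and $(s|[a,m])=0$, so $\rho\{a_\lambda(m-(f|m))\}_z=0$, and the Leibniz rule together with $\rho(\partial^k(m-(f|m)))=0$ propagates this to the whole ideal). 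With that lemma, plus the skewsymmetry step in (c) reducing $\rho\{g_\lambda m\}_z$ to $m\,^\rho_\mu\,g=0$ for $g\in\mc W$, the chain (a)$\Rightarrow$(b)$\Rightarrow$(c)$\Rightarrow$(d)$\Rightarrow$(e) assembles exactly as you describe.
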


%%%
\subsection{Structure Theorem for classical affine \texorpdfstring{$\mc W$}{W}-algebras}
\label{sec:2.2}

In the algebra of differential polynomials $\mc V(\mf g_{\leq\frac12})$
we introduce the grading by \emph{conformal weight},
denoted by $\Delta\in\frac12\mb Z$, defined as follows.
For $a\in\mf g$ such that $[x,a]=\delta(a)a$, we let $\Delta(a)=1-\delta(a)$.
For a monomial $g=a_1^{(m_1)}\dots a_s^{(m_s)}$,
product of derivatives of $\ad x$ eigenvectors $a_i\in\mf g_{\leq\frac12}$,
we define its conformal weight as
\begin{equation}\label{degcw}
\Delta(g)=\Delta(a_1)+\dots+\Delta(a_s)+m_1+\dots+m_s\,.
\end{equation}
Thus we get the conformal weight space decomposition
$$
\mc V(\mf g_{\leq\frac12})=\bigoplus_{\Delta\in\frac12\mb Z_+}\mc V(\mf g_{\leq\frac12})\{\Delta\}\,.
$$
For example $\mc V(\mf g_{\leq\frac12})\{0\}=\mb F$,
$\mc V(\mf g_{\leq\frac12})\{\frac12\}=\mf g_{\frac12}$,
and $\mc V(\mf g_{\leq\frac12})\{1\}=\mf g_{0}\oplus S^2\mf g_{\frac12}$.

\begin{theorem}[\cite{DSKV13}]\label{daniele2}
Consider the PVA $\mc W=\mc W(\mf g,f)$ with the $\lambda$-bracket $\{\cdot\,_\lambda\,\cdot\}_{z,\rho}$
defined by equation \eqref{20120511:eq3}.
\begin{enumerate}[(a)]
\item
For every element $q\in\mf g^f_{1-\Delta}$ there exists a (not necessarily unique)
element $w\in\mc W\{\Delta\}=\mc W\cap\mc V(\mf g_{\leq\frac12})\{\Delta\}$ 
of the form $w=q+g$, where 
\begin{equation}\label{20140221:eq9}
g=\sum b_1^{(m_1)}\dots b_s^{(m_s)}\in\mc V(\mf g_{\leq\frac12})\{\Delta\}\,,
\end{equation}
is a sum of products of derivatives of $\ad x$-eigenvectors 
$b_i\in\mf g_{1-\Delta_i}\subset\mf g_{\leq\frac12}$,
such that
$$
\Delta_1+\dots+\Delta_s+m_1+\dots+m_s=\Delta
\,\,\text{ and }\,\,
s+m_1+\dots+m_s>1
\,.
$$
\item
Let $\{w_j=q_j+g_j\}_{j\in J^f}$ be any collection of elements in $\mc W$ as in part (a).
(Recall, from Section \ref{slod.4}, that $\{q_j\}_{j\in J^f}$ is a basis of $\mf g^f$ consisting of
$\ad x$-eigenvectors.)
Then the differential subalgebra $\mc W\subset\mc V(\mf g_{\leq\frac12})$ 
is the algebra of differential polynomials in the variables $\{w_j\}_{j\in J^f}$.
The algebra $\mc W$ is a graded associative algebra,
graded by the conformal weights defined in \eqref{degcw}:
$\mc W
=
\mb F\oplus\mc W\{1\}\oplus\mc W\{\frac32\}\oplus\mc W\{2\}\oplus\mc W\{\frac52\}\oplus\dots$.
\end{enumerate}
\end{theorem}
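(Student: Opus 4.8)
The plan is to prove Theorem \ref{daniele2} by exploiting the conformal weight grading on $\mc W$ together with the explicit ``leading term plus lower correction'' structure coming from the definition \eqref{20120511:eq2}. For part (a), fix $q\in\mf g^f_{1-\Delta}$, so $\Delta(q)=\Delta$. The idea is to seek $w=q+g\in\mc W$ with $g$ a sum of monomials of conformal weight $\Delta$ but of ``length'' (number of generators plus number of derivatives) strictly greater than $1$; in other words $w$ agrees with $q$ modulo terms that are genuinely nonlinear or involve derivatives. The condition $w\in\mc W$ means $a\,^\rho_\lambda\,w=0$ for all $a\in\mf g_{\geq\frac12}$. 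First I would note that it suffices to treat $a$ homogeneous of positive $\ad x$-eigenvalue, and that the operators $a\,^\rho_\lambda\,(\cdot)$ are compatible with the conformal weight grading (they lower conformal weight in a controlled way), so the equation $a\,^\rho_\lambda\,w=0$ decomposes into finitely many homogeneous pieces. The construction of $g$ would proceed by induction on conformal weight $\Delta$ (equivalently, on $\delta(q)$ decreasing): one builds the correction $g$ so as to successively cancel $a\,^\rho_\lambda\,q$ against $a\,^\rho_\lambda\,g$, using that the ``symbol'' of the constraint operators is governed by $\ad e:\mf g_{\leq\frac12}\to\mf g$, whose relevant restriction is surjective by $\mf{sl}_2$ representation theory (this is exactly what makes the gauge-fixing solvable). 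The key input is that the obstruction at each stage lands in a space on which the relevant map is surjective, so one can always solve for the next correction term.

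The cleaner route, however, is to invoke the original argument of \cite{DSKV13} directly, since Theorem \ref{daniele2} is attributed to that paper. Under that reading, the proof of part (a) reduces to recalling that $\mc W$ can be realized via a Hamiltonian reduction / gauge-fixing in which $\mc V(\mf g_{\leq\frac12})$ carries a free action making $q\mapsto w(q)$ a well-defined section. I would state that the existence of $w=q+g$ follows from the surjectivity, at each conformal weight, of the differential of the gauge action, which is controlled by $\ad f$ and $\ad e$ through \eqref{20140221:eq4} and \eqref{20130520:eq2}; the requirement $s+m_1+\dots+m_s>1$ is just the statement that the correction $g$ contains no term linear in a single undifferentiated generator, which is arranged by projecting out the $\mf g^f$-component at each step.

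For part (b), the plan is to show that the elements $\{w_j\}_{j\in J^f}$ are differential-algebraically independent generators. Independence follows from part (a): since $w_j=q_j+(\text{higher length terms})$, the leading (length-one, no-derivative) parts $q_j$ are linearly independent (they form a basis of $\mf g^f$), and the grading by conformal weight ensures no algebraic relations can be manufactured among the $w_j$ and their derivatives. Concretely, I would filter $\mc V(\mf g_{\leq\frac12})$ by the ``length'' filtration and observe that the associated graded images of the $w_j^{(n)}$ are the $q_j^{(n)}$, which are free generators of $\mc V(\mf g^f)\subset\mc V(\mf g_{\leq\frac12})$; hence the $w_j^{(n)}$ are themselves algebraically independent. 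To see that they generate all of $\mc W$, I would take an arbitrary $w\in\mc W\{\Delta\}$ and subtract off a suitable differential polynomial in the $w_j$ matching its leading length-one part, then argue by downward induction on the length filtration (or on conformal weight) that the remainder lies in $\mc W$ and has strictly smaller leading length, forcing it eventually to zero. The grading decomposition $\mc W=\mb F\oplus\mc W\{1\}\oplus\dots$ is then immediate from \eqref{degcw} once one knows $\mc W=\mc V$ in the variables $w_j$, since each $w_j$ is homogeneous of conformal weight $\Delta_j=1-\delta(j)$ and conformal weights are additive.

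The main obstacle I anticipate is part (a): verifying that at each inductive step the obstruction to extending the partial correction $g$ actually lies in the image of the relevant structure map (so that a solution exists), and simultaneously that the solution can be chosen within the prescribed space of length-$>1$ monomials of the correct conformal weight. This amounts to a careful bookkeeping of how $a\,^\rho_\lambda\,(\cdot)$ acts on the conformal-weight-graded pieces and why the surjectivity of $\ad e$ (equivalently the bijectivity in \eqref{20130520:eq2}) guarantees solvability; since the clean proof is already in \cite{DSKV13}, I would present the construction schematically and cite that reference for the full recursive verification rather than grind through the explicit cancellation at every weight.
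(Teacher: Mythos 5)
The paper states Theorem \ref{daniele2} as a recalled result and gives no proof of it, simply citing \cite{DSKV13}; your proposal ultimately does the same, deferring the key recursive solvability verification to that reference. The sketch you wrap around the citation (induction on conformal weight for existence of $w=q+g$, and a length filtration whose leading symbols are the $q_j^{(n)}$ for independence and generation) is consistent with the argument of \cite{DSKV13} and with how the paper later uses the theorem (e.g.\ in Corollary \ref{20140221:cor}), so there is nothing further in the paper to compare against beyond the citation itself.
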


%%%
\section{Generators of the \texorpdfstring{$\mc W$}{W}-algebra
\texorpdfstring{$\mc W=\mc W(\mf g,f)$}{W=(g,f)}
}\label{sec:3.2}

Recall the first of the direct sum decompositions \eqref{20140221:eq4}.
By assumption, the elements $q^0_j=q_j,\,j\in J^f$, form  a basis of $\mf g^f$,
and by construction the elements $q^n_j,\,(j,n)\in J$, with $n\geq1$,
form a basis of $[e,\mf g]$
(here we are using the notation from Section \ref{slod.4}).
Since $\mf g^f\subset\mf g_{\leq\frac12}$, we have the corresponding direct sum decomposition
\begin{equation}\label{20140221:eq7}
\mf g_{\leq\frac12}=\mf g^f\oplus[e,\mf g_{\leq-\frac12}]\,.
\end{equation}
It follows that the algebra of differential polynomials $\mc V(\mf g_{\leq\frac12})$
admits the following decomposition in a direct sum of subspaces
\begin{equation}\label{20140221:eq8}
\mc V(\mf g_{\leq\frac12})
=
\mc V(\mf g^f)
\oplus
\big\langle[e,\mf g_{\leq-\frac12}]\big\rangle_{\mc V(\mf g_{\leq\frac12})}
\,,
\end{equation}
where $\mc V(\mf g^f)$ is the algebra of differential polynomials over $\mf g^f$, 
and $\big\langle[e,\mf g_{\leq-\frac12}]\big\rangle_{\mc V(\mf g_{\leq\frac12})}$
is the differential ideal of $\mc V(\mf g_{\leq\frac12})$
generated by $[e,\mf g_{\leq-\frac12}]$.

As a consequence of Theorem \ref{daniele2}, we get the following result:
\begin{corollary}\label{20140221:cor}
For every $q\in\mf g^f$ there exists a unique
element $w=w(q)\in\mc W$ 
of the form $w=q+r$, where 
$r\in\big\langle[e,\mf g_{\leq-\frac12}]\big\rangle_{\mc V(\mf g_{\leq\frac12})}$.
Moreover, if $q\in\mf g_{1-\Delta}$,
then $w(q)$ lies in $\mc W\{\Delta\}$ and $r$ is of the form \eqref{20140221:eq9}.
Consequently, $\mc W$ coincides with the algebra of differential polynomials in the variables
$w_j=w(q_j)$, $j\in J^f$.
\end{corollary}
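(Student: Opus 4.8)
The plan is to deduce Corollary \ref{20140221:cor} from the Structure Theorem \ref{daniele2} by combining an \emph{existence} part, a \emph{uniqueness} part, and the identification of generators. Existence will come essentially verbatim from Theorem \ref{daniele2}(a), once I reconcile the shape of the tail $g$ there with membership in the differential ideal $\big\langle[e,\mf g_{\leq-\frac12}]\big\rangle_{\mc V(\mf g_{\leq\frac12})}$. Uniqueness is the genuinely new assertion here (Theorem \ref{daniele2}(a) only gives a \emph{not necessarily unique} $w$), and it is where the direct sum decomposition \eqref{20140221:eq8} does the real work.

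\textbf{Existence.} Given $q\in\mf g^f$, decompose it into $\ad x$-eigenvectors; by linearity of the claim it suffices to treat $q\in\mf g^f_{1-\Delta}$ for a single $\Delta$. Theorem \ref{daniele2}(a) produces $w=q+g\in\mc W\{\Delta\}$ with $g$ as in \eqref{20140221:eq9}. I must check $g\in\big\langle[e,\mf g_{\leq-\frac12}]\big\rangle_{\mc V(\mf g_{\leq\frac12})}$. This is the content of the decomposition \eqref{20140221:eq8}: write $g=g_0+g_1$ according to \eqref{20140221:eq8}, where $g_0\in\mc V(\mf g^f)$ and $g_1$ lies in the differential ideal. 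The constraint $s+m_1+\dots+m_s>1$ in Theorem \ref{daniele2}(a) forces each monomial of $g$ to be either of degree $\geq 2$ in the generators or to carry at least one derivative, so $g$ has no term that is a single undifferentiated generator; hence the component $g_0\in\mc V(\mf g^f)$ has no linear-undifferentiated part either. But $w=q+g\in\mc W$ and $q\in\mf g^f$, so matching the $\mc V(\mf g^f)$-components shows $q+g_0\in\mc W\cap\mc V(\mf g^f)$; I then argue $g_0$ must vanish, so that $r:=g=g_1$ lies in the ideal. The precise vanishing of $g_0$ is best extracted from the uniqueness argument below, so I would organize the two together.

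\textbf{Uniqueness.} Suppose $w=q+r$ and $w'=q+r'$ both lie in $\mc W$ with $r,r'\in\big\langle[e,\mf g_{\leq-\frac12}]\big\rangle_{\mc V(\mf g_{\leq\frac12})}$. Then $w-w'=r-r'\in\mc W$ lies in the differential ideal generated by $[e,\mf g_{\leq-\frac12}]$, i.e. in the second summand of \eqref{20140221:eq8}. So it suffices to prove the key claim: \emph{the only element of $\mc W$ lying in $\big\langle[e,\mf g_{\leq-\frac12}]\big\rangle_{\mc V(\mf g_{\leq\frac12})}$ is $0$.} By Theorem \ref{daniele2}(b), $\mc W$ is freely generated as a differential algebra by the $w_j=q_j+g_j$, and the leading term of $w_j$ (in the decomposition \eqref{20140221:eq8}) is exactly $q_j\in\mf g^f$. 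Thus under the projection $\mc V(\mf g_{\leq\frac12})\twoheadrightarrow\mc V(\mf g^f)$ along \eqref{20140221:eq8}, a differential polynomial $P(w_j)$ maps to $P(q_j)$, which is $P$ itself viewed in $\mc V(\mf g^f)$; this projection is therefore injective on $\mc W$. An element of $\mc W$ inside the ideal projects to $0$, hence is $0$. This simultaneously kills $g_0$ above (giving existence in the stated form) and yields uniqueness.

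\textbf{Main obstacle and conclusion.} The delicate point is verifying that the decomposition \eqref{20140221:eq8} is respected cleanly, i.e. that the projection along \eqref{20140221:eq8} restricted to $\mc W$ agrees with ``evaluate the abstract differential polynomial in the $w_j$ at their leading terms $q_j$'' and is consequently injective. This requires knowing that the generators $w_j$ of Theorem \ref{daniele2}(b) are themselves of the form $q_j+(\text{ideal})$, which is exactly the existence part applied to the basis $\{q_j\}$, so the argument is mildly circular-looking but unwinds: one first establishes existence for basis elements (from Theorem \ref{daniele2}(a,b)), then the injectivity of the projection, and only then uniqueness for general $q$. The final sentence of the corollary is then immediate: since $\mc W$ is the differential polynomial algebra in the $w_j$ by Theorem \ref{daniele2}(b), and each $w_j=w(q_j)$ is the unique element just constructed, $\mc W$ coincides with the algebra of differential polynomials in the variables $w_j=w(q_j)$, $j\in J^f$.
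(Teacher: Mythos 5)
Your uniqueness argument is essentially the paper's: fix generators $w_j=q_j+r_j$ with $r_j$ in the ideal, note that the projection $\pi:\mc V(\mf g_{\leq\frac12})\twoheadrightarrow\mc V(\mf g^f)$ along \eqref{20140221:eq8} sends a differential polynomial $P(w_j)$ to $P(q_j)$, and conclude that $\mc W\cap\big\langle[e,\mf g_{\leq-\frac12}]\big\rangle_{\mc V(\mf g_{\leq\frac12})}=0$. That part is fine \emph{once existence is in place}. The gap is in existence. The step ``matching the $\mc V(\mf g^f)$-components shows $q+g_0\in\mc W\cap\mc V(\mf g^f)$'' is invalid: $\mc W$ is not compatible with the direct sum \eqref{20140221:eq8}, so the $\mc V(\mf g^f)$-component of an element of $\mc W$ need not lie in $\mc W$. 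Worse, the conclusion you want to extract from it --- that $g_0$ vanishes --- is simply false for the element $w=q+g$ supplied by Theorem \ref{daniele2}(a): that $w$ is not unique, and for instance adding to it a product $w_{j_1}w_{j_2}$ of lower-weight generators of the right total conformal weight changes $g_0$; generically $g_0\neq 0$. The injectivity of $\pi|_{\mc W}$ tells you that an element of the form $q+(\text{ideal})$ is unique \emph{if it exists}; it cannot tell you that the particular $w$ from Theorem \ref{daniele2}(a) already has that form. So the circularity you flag at the end (``one first establishes existence for basis elements'') is never actually broken, because no mechanism for producing $w_j=q_j+(\text{ideal})$ is ever given.

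What is missing is the correction step, which the paper carries out by induction on the conformal weight $\Delta$. For $\Delta=1$ the tail $g=\sum b_1b_2$ with $b_i\in\mf g_{\frac12}=[e,\mf g_{-\frac12}]$ automatically lies in the ideal, so the base case is free. For $\Delta>1$ one decomposes $g=a+b$ via \eqref{20140221:eq8}; each monomial of $a=\sum c^{m_1\dots m_s}_{j_1\dots j_s}(\partial^{m_1}q_{j_1})\dots(\partial^{m_s}q_{j_s})$ involves only $q_j$ of conformal weight strictly less than $\Delta$ (here your observation that $s+m_1+\dots+m_s>1$ is the relevant one), so by induction each such $q_j$ has already been promoted to $w_j=q_j+r_j\in\mc W$ with $r_j$ in the ideal. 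Setting $A=\sum c^{m_1\dots m_s}_{j_1\dots j_s}(\partial^{m_1}w_{j_1})\dots(\partial^{m_s}w_{j_s})\in\mc W$, one has $A-a$ in the ideal, and $w-A$ is the required element. Your proof never subtracts this $A$, so existence --- and with it the identification $\pi(w_j)=q_j$ on which your injectivity argument rests --- is not established. (One could alternatively avoid the subtraction by showing that $q_j\mapsto q_j+\pi(g_j)$ is a differential algebra automorphism of $\mc V(\mf g^f)$ and inverting it, but that is again an induction on conformal weight and is not in your write-up.)
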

\begin{proof}
We prove the existence of an element $w(q)=q+r$,
with $q\in\mf g^f_{1-\Delta}$ and 
$r\in\big\langle[e,\mf g_{\leq-\frac12}]\big\rangle_{\mc V(\mf g_{\leq\frac12})}$,
by induction on $\Delta$.
For $\Delta=1$, an element $w$ given by Theorem \ref{daniele2}(a)
has the form
\begin{equation}\label{20140222:eq5}
w=q+\sum b_1b_2
\,,
\end{equation}
with $b_1,b_2\in\mf g_{\frac12}$.
Since $\mf g_{\frac12}=[e,\mf g_{-\frac12}]$,
the element $w$ is of the desired form.
For $\Delta>1$,
again by Theorem \ref{daniele2}(a)
we have an element $w\in\mc W$ of the form
$w=q+g$, with $g$ as in \eqref{20140221:eq9}.
We decompose $g$ according to the direct sum decomposition \eqref{20140221:eq8}:
$g=a+b$,
where 
\begin{equation}\label{20140221:eq10}
a=\sum c_{j_1\dots j_s}^{m_1\dots m_s}(\partial^{m_1}q_{j_1})\dots(\partial^{m_s}q_{j_s})
\in\mc V(\mf g^f)
\,,
\end{equation}
and $b\in\big\langle[e,\mf g_{\leq-\frac12}]\big\rangle_{\mc V(\mf g_{\leq\frac12})}$.
Each summand in the expression \eqref{20140221:eq10} of $a$ has conformal weight $\Delta$,
therefore each $q_j$ has conformal weight strictly smaller than $\Delta$.
Hence, by inductive assumption,
there is $w_j\in\mc W$ of the form $q_j+r_j$,
with $r_j\in\big\langle[e,\mf g_{\leq-\frac12}]\big\rangle_{\mc V(\mf g_{\leq\frac12})}$.
But then, letting 
$$
A=\sum c_{j_1\dots j_s}^{m_1\dots m_s}(\partial^{m_1}w_{j_1})\dots(\partial^{m_s}w_{j_s})
\in\mc W
\,,
$$
we have that $A-a\in\big\langle[e,\mf g_{\leq-\frac12}]\big\rangle_{\mc V(\mf g_{\leq\frac12})}$.
Therefore, $w-A$
is an element of $\mc W$ of the desired form.

Next, we claim that 
\begin{equation}\label{20140221:eq12}
\mc W\cap\big\langle[e,\mf g_{\leq-\frac12}]\big\rangle_{\mc V(\mf g_{\leq\frac12})}=0\,.
\end{equation}
Let us fix, by the existence part,
a collection of elements $w_j=q_j+r_j$, where 
$r_j\in \big\langle[e,\mf g_{\leq-\frac12}]\big\rangle_{\mc V(\mf g_{\leq\frac12})}$,
for $j\in J^f$.
By Theorem \ref{daniele2}(b),
$\mc W$ is the algebra of differential polynomials in the variables $w_j$.
Take an element
$$
\sum c_{j_1\dots j_s}^{m_1\dots m_s}(\partial^{m_1}w_{j_1})\dots(\partial^{m_s}w_{j_s})
\in\mc W\cap \big\langle[e,\mf g_{\leq-\frac12}]\big\rangle_{\mc V(\mf g_{\leq\frac12})}\,.
$$
After projecting onto $\mc V(\mf g^f)$,
according to the direct sum decomposition \eqref{20140221:eq8},
we get
$$
\sum c_{j_1\dots j_s}^{m_1\dots m_s}(\partial^{m_1}q_{j_1})\dots(\partial^{m_s}q_{j_s})
=0
\,.
$$
Hence, all the coefficients $c_{j_1\dots j_s}^{m_1\dots m_s}$ must be zero,
proving \eqref{20140221:eq12}.
The uniqueness of $w(q)=q+r\in\mc W$, with 
$r\in\big\langle[e,\mf g_{\leq-\frac12}]\big\rangle_{\mc V(\mf g_{\leq\frac12})}$,
follows immediately from \eqref{20140221:eq12}.
\end{proof}

Consider the direct sum decomposition \eqref{20140221:eq8},
and let $\pi:\,\mc V(\mf g_{\leq\frac12})\twoheadrightarrow\mc V(\mf g^f)$
be the projection onto the first summand.
According to Corollary \ref{20140221:cor},
we have an injective map $w:\,\mf g^f\hookrightarrow\mc W$,
extending to a differential algebra isomorphism
$w:\,\mc V(\mf g^f)\stackrel{\sim}{\rightarrow}\mc W$,
such that 
\begin{equation}\label{20140222:eq2}
\pi\circ w=\id_{\mc V(\mf g^f)}
\,.
\end{equation}
For $q\in\mf g^f$, we have $w(q)=q+\widetilde{r(q)}$, where
$\widetilde{r(q)}\in\big\langle[e,\mf g_{\leq-\frac12}]\big\rangle_{\mc V(\mf g_{\leq\frac12})}$.
The element $\widetilde{r(q)}$ can be expanded uniquely in the form,
\begin{equation}\label{20140222:eq1}
\widetilde{r(q)}=r(q)+r^2(q)+r^3(q)+\dots=r(q)+r^{\geq2}(q)
\,,
\end{equation}
where
$r^n(q)\in\mc V(\mf g^f)\otimes S^n\big(\mb F[\partial][e,\mf g_{\leq-\frac12}]\big)$
(due to \eqref{20140221:eq8}).

\begin{remark}\label{20140312:rem}
In \cite{DSKV14} we computed the generators of the $\mc W$-algebra $\mc W(\mf g,f)$
for the minimal and short nilpotent elements $f$,
and all the expression obtained there are of the form described above.
\end{remark}

The following theorem gives us an explicit formula for the first contribution $r(q)$ 
in the expansion \eqref{20140222:eq1}.
\begin{theorem}\label{20140221:thm1}
For $q\in\mf g^f_{-k}$, the unique element $w=w(q)\in\mc W=\mc W(\mf g,f)$, 
given by Corollary \ref{20140221:cor},
has the form $w(q)=q+r(q)+r^{\geq2}(q)$ as in \eqref{20140222:eq1}, where
\begin{equation}\label{20140222:eq3}
\begin{array}{l}
\displaystyle{
\vphantom{\Big(}
r(q)=
\sum_{\frac12\leq k_1\leq k}
\sum_{(j,n)\in J_{-k_1}}
\big([q,q^{j}_{n}]^\sharp-(q|q^{j}_{n})\partial\big)q^{n+1}_{j}
} \\
\displaystyle{
\vphantom{\Big(}
+\!\!\!\!\!\!\!\!\!
\sum_{\substack{ \frac12\leq k_1\leq k \\ \frac12\leq k_2\leq k_1-1} }
\sum_{\substack{ (j_1,n_1)\in J_{-k_1} \\ (j_2,n_2)\in J_{-k_2}} }
\!\!\!\!\!\!\!\!\!
\big([q,q^{j_1}_{n_1}]^\sharp-(q|q^{j_1}_{n_1})\partial\big)
\big([q^{n_1+1}_{j_1},q^{j_2}_{n_2}]^\sharp-(q^{n_1+1}_{j_1}|q^{j_2}_{n_2})\partial\big)
q^{n_2+1}_{j_2}
} \\
\displaystyle{
\vphantom{\Big(}
+\dots
\,.}
\end{array}
\end{equation}
\end{theorem}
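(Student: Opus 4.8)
The plan is to use the defining property of $\mc W$: by \eqref{20120511:eq1}--\eqref{20120511:eq2}, membership $g\in\mc W$ is equivalent to $\rho\{a_\lambda g\}_z=0$ for all $a\in\mf g_{\geq\frac12}$ (and it suffices to check this on a basis of $\mf g_{\geq\frac12}$). Since Corollary \ref{20140221:cor} already guarantees that $w(q)=q+\widetilde{r(q)}$ exists and is unique, it is enough to isolate from $\rho\{a_\lambda w(q)\}_z=0$ a closed linear equation for the first term $r(q)$ of the expansion \eqref{20140222:eq1}, and to solve it. First I would equip $\mc V(\mf g_{\leq\frac12})$ with the filtration by degree in $[e,\mf g_{\leq-\frac12}]$ furnished by the iterated decomposition \eqref{20140221:eq8}, so that $w(q)=q+r(q)+r^{\geq2}(q)$ has $q$ of degree $0$, $r(q)$ of degree $1$, and $r^{\geq2}(q)$ of degree $\geq2$.

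The key structural observation is that, for $a\in\mf g_{\geq\frac12}$, the operator $\rho\{a_\lambda\,\cdot\}_z$ shifts this degree only by $-1$, $0$, or $+1$: bracketing a single generator and applying $\rho$ produces, via \eqref{lambda} and \eqref{rho}, an element of $\mf g^f\oplus[e,\mf g_{\leq-\frac12}]$ together with scalars, i.e. a degree-$0$ plus a degree-$1$ term, and the Leibniz rule then bounds the total shift. Writing $\rho\{a_\lambda\,\cdot\}_z=D^{(-1)}_a+D^{(0)}_a+D^{(+1)}_a$ accordingly and collecting the degree-$0$ component of $\rho\{a_\lambda w(q)\}_z=0$ gives the closed equation
\begin{equation*}
D^{(-1)}_a\big(r(q)\big)=-\,\pi\big(\rho\{a_\lambda q\}_z\big)
\qquad\text{for all }a\in\mf g_{\geq\tfrac12}\,,
\end{equation*}
where $\pi$ is the projection of \eqref{20140222:eq2}; crucially, $r^{\geq2}(q)$ does not enter, as there is no degree-$(-2)$ component. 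I would then make both sides explicit in the bases of Section \ref{slod.4}. On the right, using $[f,q]=0$ together with the $\ad x$-eigenvalue bookkeeping, one checks that the scalar contributions $(f|[a,q])$ and $z(s|[a,q])$ vanish, so the right-hand side collapses to $-[a,q]^\sharp-(a|q)\lambda$; this is exactly why the final answer is independent of $s$ and $z$.

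On the left, inserting $r(q)=\sum_\beta P_\beta\,\partial^{m_\beta}e_\beta$ with $e_\beta\in[e,\mf g_{\leq-\frac12}]$ and $P_\beta\in\mc V(\mf g^f)$, sesquilinearity turns each $\partial$ into $\lambda+\partial$ and only the degree-$0$ part of $\rho\{a_\lambda e_\beta\}$ survives. Pairing the relation against $a$ ranging over the basis $\{q^j_n\}$ of $\mf g_{\geq\frac12}$ and inverting $\ad f$ on $[f,\mf g]$ through \eqref{20130520:eq3}, the equation becomes the requirement that $r(q)$ be the image of $q$ under $(1+N)^{-1}-1$, with $N$ the affine analogue of the operator $(\ad\,\cdot)\circ(\ad f)^{-1}\circ\pi_{[f,\mf g]}$ from Lemma \ref{20130520:lem2}: each application of $N$ introduces one factor $[\,\cdot\,,q^{j}_{n}]^\sharp-(\,\cdot\,|q^{j}_{n})\partial$, summed over $(j,n)$ with $q^j_n\in\mf g_{\geq\frac12}$, and carries the paired element $q^{n+1}_{j}$ into the next slot, the $-(\,\cdot\,|\,\cdot\,)\partial$ correction arising precisely from the $(a|b)\lambda$ term of \eqref{lambda} via the $\lambda\rightsquigarrow\partial$ substitution. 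By the eigenvalue argument of Lemma \ref{20130520:lem1}, $N$ is nilpotent, so $(1+N)^{-1}=\sum_{t\geq0}(-N)^t$ is a finite sum; expanding it and simplifying with the invariance of the form exactly as in the proof of Theorem \ref{20130521:prop} yields the nested products of \eqref{20140222:eq3}. The index constraints then drop out automatically: $k_1\leq k$ and $k_{i+1}\leq k_i-1$ are the conditions that the successive $\sharp$-projections land in $\mf g^f$ (non-positive $\ad x$-eigenvalue), while $k_i\geq\frac12$ records that the bracketing elements lie in $\mf g_{\geq\frac12}$. Uniqueness in Corollary \ref{20140221:cor} then identifies this solution with the required $r(q)$.

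I expect the main obstacle to be the derivative bookkeeping rather than the algebra of the bracket. Concretely, the hard part is to verify that the $\lambda$-linear contributions, propagated through the nested Leibniz and sesquilinearity steps, reassemble into exactly the differential operators $[\,\cdot\,,q^{j}_{n}]^\sharp-(\,\cdot\,|q^{j}_{n})\partial$ acting \emph{to the right} on the entire tail of each product, and to confirm that the scalar terms generated by the shift $\rho(a)=\pi_{\leq\frac12}(a)+(f|a)$ and by the $z(s|[a,b])$ term never contaminate the degree-$0$ equation. Once these $\partial$-placements and vanishings are pinned down, the geometric-series expansion reproduces \eqref{20140222:eq3} term by term.
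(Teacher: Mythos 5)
Your proposal takes essentially the same route as the paper: project the defining equation $\rho\{a_\lambda w(q)\}_z=0$ onto $\mc V(\mf g^f)$ via $\pi$, observe that $r^{\geq2}(q)$ cannot contribute to the projected equation (the paper states this as $\pi\rho\{a_\lambda r^{\geq2}(q)\}_z=0$, which is your degree-shift argument), and reduce to the closed linear equation $\pi\rho\{a_\lambda(q+r(q))\}_z=0$ for the degree-one part alone. The only difference is presentational: the paper establishes uniqueness of the solution separately (Lemma \ref{20140224:lem1}) and then verifies formula \eqref{20140222:eq3} by a telescoping cancellation, whereas you obtain the same nested products by inverting the unipotent operator $1+N$ as a finite geometric series — two packagings of the same computation.
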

\begin{lemma}\label{20140221:lem2}
For $h,k\in\frac12\mb Z$, $(i,m)\in J_{-h}$ and $(j,n)\in J_{-k}$,
we have
\begin{equation}\label{20140224:eq3}
\rho\{{q^i_m} {}_\lambda q^n_j\}_z
=
\left\{\begin{array}{ll}
0 & \text{ if } h-k>1 \,, \\
\delta_{i,j}\delta_{n,m+1} & \text{ if } h-k=1 \,, \\
{[q^i_m,q^n_j]}+\delta_{i,j}\delta_{m,n}\lambda+z(s|[q^i_m,q^n_j]) & \text{ if }  h-k\leq\frac12
\,.
\end{array}\right.
\end{equation}
\end{lemma}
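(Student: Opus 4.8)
The plan is to expand the $\lambda$-bracket directly from its definition \eqref{lambda} and then apply the differential algebra homomorphism $\rho$, reducing the whole statement to a computation inside $\mf g$. Writing out \eqref{lambda} and using that $\rho$ fixes scalars and $\lambda$, we get
\begin{equation*}
\rho\{{q^i_m}{}_\lambda q^n_j\}_z
=
\rho\big([q^i_m,q^n_j]\big)
+(q^i_m|q^n_j)\lambda
+z(s|[q^i_m,q^n_j])
\,.
\end{equation*}
Two structural facts then control each term. First, by the $\ad x$-grading we have $q^i_m\in\mf g_h$ and $q^n_j\in\mf g_{-k}$ (since $(i,m)\in J_{-h}$, $(j,n)\in J_{-k}$; cf. \eqref{20140221:eq5}), so the bracket is homogeneous, $[q^i_m,q^n_j]\in\mf g_{h-k}$. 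Second, by the duality \eqref{20140221:eq2} the two bases are dual, so $(q^i_m|q^n_j)=\delta_{i,j}\delta_{m,n}$, and this pairing can be nonzero only when $h=k$.

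The heart of the argument is to track how $\rho$ acts on the homogeneous element $[q^i_m,q^n_j]\in\mf g_{h-k}$. By \eqref{rho}, $\rho(a)=\pi_{\leq\frac12}(a)+(f|a)$, where $\pi_{\leq\frac12}$ annihilates $\mf g_{\geq1}$ and is the identity on $\mf g_{\leq\frac12}$, while $(f|\,\cdot\,)$ is supported on $\mf g_1$ (as $f\in\mf g_{-1}$). I would then split into the three stated ranges of $h-k$. If $h-k>1$, the bracket lies in $\mf g_{\geq\frac32}\subset\mf g_{\geq1}$, so $\pi_{\leq\frac12}$ kills it and $(f|\,\cdot\,)$ vanishes for degree reasons; the $\lambda$-term vanishes because $h\neq k$, and the $z$-term vanishes because $s\in\mf g_d$ would need $h-k=-d<0$; this gives $0$. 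If $h-k\leq\frac12$, the bracket lies in $\mf g_{\leq\frac12}$, so $\pi_{\leq\frac12}$ is the identity and $(f|\,\cdot\,)$ contributes nothing, leaving exactly $[q^i_m,q^n_j]+\delta_{i,j}\delta_{m,n}\lambda+z(s|[q^i_m,q^n_j])$ (the $z$-term being retained, since now $h-k=-d$ is possible).

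The one case requiring an actual computation is $h-k=1$, where $[q^i_m,q^n_j]\in\mf g_1$ is still killed by $\pi_{\leq\frac12}$, yet its pairing with $f$ survives. Here I would use invariance of the bilinear form together with Lemma \ref{20140304:lem2}(iii):
\begin{equation*}
(f|[q^i_m,q^n_j])=([f,q^i_m]|q^n_j)=(q^i_{m+1}|q^n_j)=\delta_{i,j}\delta_{m+1,n}
\,,
\end{equation*}
the last step again by \eqref{20140221:eq2}; the $\lambda$- and $z$-terms vanish as before, yielding $\delta_{i,j}\delta_{n,m+1}$. There is no serious obstacle in this lemma; the only thing requiring care is the \emph{grading bookkeeping at the boundaries}, i.e.\ checking that the thresholds ``$h-k>1$'', ``$h-k=1$'', ``$h-k\leq\frac12$'' correspond precisely to the regimes in which $\pi_{\leq\frac12}$ is zero, the $(f|\,\cdot\,)$ term is active, and $\pi_{\leq\frac12}$ is the identity. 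Once the grading is pinned down, the three cases close immediately.
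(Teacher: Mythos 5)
Your proof is correct and is exactly the computation the paper has in mind — the paper's own proof is the one-liner ``it follows immediately from the definitions,'' and your write-up simply spells out that unwinding of \eqref{lambda} and \eqref{rho}, with the grading bookkeeping ($[q^i_m,q^n_j]\in\mf g_{h-k}$, the pairing forcing $h=k$, and $(f|[q^i_m,q^n_j])=([f,q^i_m]|q^n_j)=\delta_{i,j}\delta_{n,m+1}$ in the boundary case) all handled correctly.
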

\begin{proof}
It follows immediately from the definitions.
\end{proof}
\begin{lemma}\label{20140224:lem1}
If $r\in\mc V(\mf g^f)\mb F[\partial][e,\mf g_{\leq-\frac12}]\,\subset\mc V(\mf g_{\leq\frac12})$ 
is such that
\begin{equation}\label{20140224:eq1}
\pi\rho\{a_\lambda r\}_z=0
\,\,\text{ for all }\,\,
a\in\mf g_{\geq\frac12}
\,,
\end{equation}
then $r=0$.
\end{lemma}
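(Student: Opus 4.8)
The plan is to write $r$ explicitly in the dual basis and reduce the statement to a diagonal linear system on its coefficients. Recall from Section~\ref{slod.4} that the elements $q_j^n$ with $n\geq1$ form a basis of $[e,\mf g]$, and those lying in $\mf g_{\leq\frac12}$ (equivalently, with $\ad x$-eigenvalue $n-\delta(j)\leq\frac12$) form a basis of $[e,\mf g_{\leq-\frac12}]$. Hence any $r\in\mc V(\mf g^f)\mb F[\partial][e,\mf g_{\leq-\frac12}]$ can be written uniquely as
\begin{equation*}
r=\sum_{(j,n)}\sum_{m\geq0}P_{j,n}^{(m)}\,\partial^m q_j^n\,,
\qquad P_{j,n}^{(m)}\in\mc V(\mf g^f)\,,
\end{equation*}
the first sum running over pairs with $n\geq1$ and $q_j^n\in\mf g_{\leq\frac12}$, with finitely many coefficients nonzero. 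Proving $r=0$ amounts to proving that all $P_{j,n}^{(m)}$ vanish.

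First I would establish the reduction formula
\begin{equation*}
\pi\rho\{a_\lambda r\}_z
=\sum_{(j,n)}\sum_{m\geq0}P_{j,n}^{(m)}\,(\lambda+\partial)^m\,\pi\rho\{a_\lambda q_j^n\}_z\,,
\qquad a\in\mf g_{\geq\frac12}\,.
\end{equation*}
To get this, apply the left Leibniz rule and sesquilinearity to $\{a_\lambda(P_{j,n}^{(m)}\partial^m q_j^n)\}_z$, then use that $\rho$ is a differential algebra homomorphism which fixes $\mc V(\mf g^f)$ pointwise (since $(f|\mf g^f)=0$) and fixes each generator $q_j^n\in\mf g_{\leq\frac12}$ (since $(f|q_j^n)=0$ for $q_j^n$ of eigenvalue $\leq\frac12$), so that $\rho(\partial^m q_j^n)=\partial^m q_j^n$. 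The term coming from $\{a_\lambda P_{j,n}^{(m)}\}_z$ still carries a factor $\partial^m q_j^n$, hence lies in the differential ideal $\langle[e,\mf g_{\leq-\frac12}]\rangle_{\mc V(\mf g_{\leq\frac12})}$; since $\pi$ is precisely the projection killing this ideal (see \eqref{20140221:eq8}), such terms die. Finally $\pi$ is $\mc V(\mf g^f)$-linear and commutes with $\partial$ (both summands of \eqref{20140221:eq8} being $\partial$-stable), which lets me pull $P_{j,n}^{(m)}(\lambda+\partial)^m$ out, yielding the formula.

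The heart of the proof is then a leading-term argument governed by the $\ad x$-grading. Suppose $r\neq0$ and let $\mu$ be the smallest $\ad x$-eigenvalue $n-\delta(j)$ occurring among the factors $q_j^n$ with some $P_{j,n}^{(m)}\neq0$. Fix such a pair $(j,n)$ and test with $a=q^j_{n-1}\in\mf g_{\delta(j)-n+1}$; since $q_j^n$ has eigenvalue $\leq\frac12$ we have $\delta(j)-n+1\geq\frac12$, so $a\in\mf g_{\geq\frac12}$ is admissible. For any factor $q_i^l$, Lemma~\ref{20140221:lem2} (first argument $a=q^j_{n-1}$, second argument $q_i^l$) evaluates $\pi\rho\{a_\lambda q_i^l\}_z$ according to the sign of $(\delta(j)-n+1)-(\delta(i)-l)=1-\mu+(l-\delta(i))$: it is $0$ when $l-\delta(i)>\mu$, it is the scalar $\delta_{ij}\delta_{ln}$ when $l-\delta(i)=\mu$, and it falls in the $h-k\leq\frac12$ case only when $l-\delta(i)<\mu$. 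By minimality of $\mu$ the last case never occurs, the factors of eigenvalue $>\mu$ contribute zero, and among those of eigenvalue exactly $\mu$ only $q_j^n$ itself survives the Kronecker deltas. Substituting into the reduction formula and using $(\lambda+\partial)^m1=\lambda^m$ gives
\begin{equation*}
0=\pi\rho\{{q^j_{n-1}}{}_\lambda r\}_z=\sum_{m\geq0}P_{j,n}^{(m)}\lambda^m\,,
\end{equation*}
forcing $P_{j,n}^{(m)}=0$ for all $m$, contradicting the choice of $(j,n)$. Hence $r=0$.

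The main obstacle, and the point requiring care, is the bookkeeping in the reduction formula: one must verify that the coefficient contributions $\{a_\lambda P_{j,n}^{(m)}\}_z$ are annihilated by $\pi$ — this is where the linearity of $r$ in $[e,\mf g_{\leq-\frac12}]$ is essential — and that $\pi$, $\rho$ and $(\lambda+\partial)$ commute as claimed. The choice of the \emph{minimal} eigenvalue $\mu$ is exactly what suppresses the off-diagonal $h-k\leq\frac12$ contributions of Lemma~\ref{20140221:lem2}, turning an a priori triangular system into the diagonal one solved above.
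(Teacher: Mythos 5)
Your proof is correct and follows essentially the same route as the paper's: expand $r$ in the basis of $[e,\mf g_{\leq-\frac12}]$ with coefficients in $\mc V(\mf g^f)$, select the extremal $\ad x$-eigenvalue (your minimal $\mu$ is the paper's maximal $K$ via $\mu=1-K$), and test against the dual element in $\mf g_{\geq\frac12}$ so that Lemma \ref{20140221:lem2} leaves only the diagonal contribution $\sum_m P^{(m)}_{j,n}\lambda^m$. The additional care you take in justifying the reduction formula (Leibniz rule, $\mc V(\mf g^f)$-linearity of $\pi$, killing of the ideal) is left implicit in the paper but is correct.
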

\begin{proof}
Recall that, for $k\in\frac12\mb Z$, $\frac12\leq k\leq d$, a basis of $[e,\mf g_{-k}]$ is
$\{q^{n+1}_j\}$, where $(j,n)\in J_{-k}$.
Hence,
any element $r\in\mc V(\mf g^f)\mb F[\partial][e,\mf g_{\leq-\frac12}]$
has the form
\begin{equation}\label{20140224:eq2}
r=\sum_{\frac12\leq k\leq d} \sum_{(j,n)\in J_{-k}} \sum_{p=0}^N
A_{p,j,n}\partial^p q^{n+1}_j
\,\,\text{ where }\,\,
A_{p,j,n}\in\mc V(\mf g^f)
\,.
\end{equation}
Suppose, by contradiction, that $r\neq0$,
and let $K\geq\frac12$ 
be the largest value of $k$ with non-zero contribution in the sum \eqref{20140224:eq2}.
For $(i,m)\in J_{-K}$ and $(j,n)\in J_{-k}$ (so that $(j,n+1)\in J_{-k+1}$) we have, 
by equation \eqref{20140224:eq3},
$$
\rho\{{q^i_m} {}_\lambda q^{n+1}_j\}_z
=
\left\{\begin{array}{ll}
\delta_{i,j}\delta_{n,m} & \text{ if } k=K \,, \\
0 & \text{ if } k<K \\
\end{array}\right.
\,.
$$
Taking $a=q^i_m\in\mf g_K$ in equation \eqref{20140224:eq1}, we thus get
$$
0=\pi\rho\{{q^i_m}_\lambda r\}_z
=
\sum_{\frac12\leq k\leq d} \sum_{(j,n)\in J_{-k}} \sum_{p=0}^N
A_{p,j,n} (\lambda+\partial)^p
\pi\rho\{{q^i_m}_\lambda q^{n+1}_j \}_z
=
\sum_{p=0}^N
A_{p,i,m}\lambda^p
\,.
$$
Hence, $A_{p,i,m}=0$ for all $(i,m)\in J_{-K}$, contradicting the assumption that $r\neq0$.
\end{proof}
\begin{proof}[Proof of Theorem \ref{20140221:thm1}]
By the definition \eqref{20120511:eq2} of the $\mc W$-algebra,
we have
$$
\rho\{a_\lambda w\}_z=
\rho\{a_\lambda q+r(q)+r^{\geq2}(q)\}_z
=
0
\,\,\text{ for all }\,\,
a\in\mf g_{\geq\frac12}
\,.
$$
On the other hand, by the Leibniz rule and the definition of the projection map 
$\pi:\,\mc V(\mf g_{\leq\frac12})\twoheadrightarrow\mc V(\mf g^f)$,
we clearly have $\pi\rho\{a_\lambda r^{\geq2}(q)\}_z=0$.
Therefore, thanks to Lemma \ref{20140224:lem1}, it suffices to prove that 
the element $r(q)\in\mc V(\mf g^f)\mb F[\partial][e,\mf g_{\leq-\frac12}]$ 
given by \eqref{20140222:eq3} satisfies the equation
\begin{equation}\label{20140224:eq4}
\pi\rho\{a_\lambda q+r(q)\}_z=0
\,\,\text{ for all }\,\,
a\in\mf g_{\geq\frac12}
\,.
\end{equation}
Let $h\geq\frac12$ and $(i,m)\in J_{-h}$.
By equation \eqref{20140224:eq3}, we have, for $q\in\mf g^f_{-k}$,
\begin{equation}\label{20140224:eq5a}
\pi\rho\{{q^i_m} {}_\lambda q\}_z
=
{[q^i_m,q]^\sharp}+(q^i_m|q)\lambda 
\,.
\end{equation}
Furthermore, by equation \eqref{20140224:eq3},
for $k\geq\frac12$ and $(j,n)\in J_{-k}$, we have
\begin{equation}\label{20140224:eq5b}
\pi\rho\{{q^i_m} {}_\lambda q^{n+1}_j\}_z
=
\left\{\begin{array}{ll}
0 & \text{ if } h>k \text{ or } h=k-\frac12 \,, \\
\delta_{i,j}\delta_{n,m} & \text{ if } h=k \,, \\
{[q^i_m,q^{n+1}_j]^\sharp}+(q^i_m|q^{n+1}_j)\lambda & \text{ if }  h\leq k-1
\,.
\end{array}\right.
\end{equation}
Recalling the definition \eqref{20140222:eq3} of $r(q)$, we have
\begin{equation}\label{20140224:eq6}
\begin{array}{l}
\displaystyle{
\pi\rho\{{q^i_m} {}_\lambda r(q)\}_z
=
\sum_{\frac12\leq k_1\leq k}
\sum_{(j,n)\in J_{-k_1}}
\big([q,q^{j}_{n}]^\sharp-(q|q^{j}_{n})(\partial+\lambda)\big)
\pi\rho\{{q^i_m} {}_\lambda q^{n+1}_{j}
\}_z
} \\
\displaystyle{
+
\sum_{\substack{ \frac12\leq k_1\leq k \\ \frac12\leq k_2\leq k_1-1} }
\sum_{\substack{ (j_1,n_1)\in J_{-k_1} \\ (j_2,n_2)\in J_{-k_2}} }
\big([q,q^{j_1}_{n_1}]^\sharp-(q|q^{j_1}_{n_1})(\partial+\lambda)\big)
\times} \\
\displaystyle{
\vphantom{\bigg(}
\,\,\,\,\,\,\,\,\, \,\,\,\,\,\,\,\,\, \,\,\,\,\,\,\,\,\, \,\,\,\,\,\,\,\,\,
\times
\big([q^{n_1+1}_{j_1},q^{j_2}_{n_2}]^\sharp-(q^{n_1+1}_{j_1}|q^{j_2}_{n_2})(\partial+\lambda)\big)
\pi\rho\{{q^i_m} {}_\lambda 
q^{n_2+1}_{j_2}
\}_z
+\dots
\,.
}
\end{array}
\end{equation}
By \eqref{20140224:eq5b}, the first sum in the RHS of \eqref{20140224:eq6} is equal to
\begin{equation}\label{20140224:eq7}
\begin{array}{l}
\displaystyle{
\vphantom{\Big(}
\big([q,q^{i}_{m}]^\sharp-(q|q^{i}_{m})\lambda\big)
} \\
\displaystyle{
+
\sum_{\substack{ \frac12\leq k_1\leq k \\ (h\leq k_1-1) }}
\sum_{(j,n)\in J_{-k_1}}
\big([q,q^{j}_{n}]^\sharp-(q|q^{j}_{n})(\partial+\lambda)\big)
\big({[q^i_m,q^{n+1}_j]^\sharp}+(q^i_m|q^{n+1}_j)\lambda\big)
\,.}
\end{array}
\end{equation}
Similarly, the second sum in the RHS of \eqref{20140224:eq6} is equal to
\begin{equation}\label{20140224:eq8}
\begin{array}{l}
\displaystyle{
\sum_{\substack{ \frac12\leq k_1\leq k \\ (h\leq k_1-1)} }
\sum_{(j_1,n_1)\in J_{-k_1}}
\big([q,q^{j_1}_{n_1}]^\sharp-(q|q^{j_1}_{n_1})(\partial+\lambda)\big)
\big([q^{n_1+1}_{j_1},q^{i}_{m}]^\sharp-(q^{n_1+1}_{j_1}|q^{i}_{m})\lambda\big)
} \\
\displaystyle{
+
\sum_{\substack{ \frac12\leq k_1\leq k \\ \frac12\leq k_2\leq k_1-1 \\ (h\leq k_2-1) }}
\sum_{\substack{ (j_1,n_1)\in J_{-k_1} \\ (j_2,n_2)\in J_{-k_2}} }
\big([q,q^{j_1}_{n_1}]^\sharp-(q|q^{j_1}_{n_1})(\partial+\lambda)\big)
\times} \\
\displaystyle{
\,\,\,\,\,\,\,\,\, \,\,\,\,\,\,\,\,\, \,\,\,\,\,\,\,\,\,
\times
\big([q^{n_1+1}_{j_1},q^{j_2}_{n_2}]^\sharp-(q^{n_1+1}_{j_1}|q^{j_2}_{n_2})(\partial+\lambda)\big)
\big({[q^i_m,q^{n_2+1}_{j_2}]^\sharp}+(q^i_m|q^{n_2+1}_{j_2})\lambda\big)
\,.
}
\end{array}
\end{equation}
The RHS of \eqref{20140224:eq5a} is opposite to the first term in \eqref{20140224:eq7}.
The second term in \eqref{20140224:eq7} is opposite to the first sum in \eqref{20140224:eq8}.
Proceeding by induction, we conclude that \eqref{20140224:eq4} holds.
\end{proof}

We can rewrite equation \eqref{20140222:eq3} in a more compact form as follows.
For $h,k\in\frac12\mb Z$, we introduce the notation
\begin{equation}\label{20140304:eq5}
h\prec k
\,\,\text{ if and only if }\,\,
h\leq k-1
\,.
\end{equation}
Also, for $s\geq1$, we denote $\vec{k}=(k_1,k_2,\dots,k_s)\in(\frac12\mb Z)^s$,
and $J_{-\vec{k}}:=J_{-k_1}\times\dots J_{-k_s}$.
Therefore, 
an element $(\vec{j},\vec{n})\in J_{-\vec{k}}$ is an $s$-tuple with
\begin{equation}\label{20140304:eq6}
(j_1,n_1)\in J_{-k_1},\dots,(j_s,n_s)\in J_{-k_s}
\,.
\end{equation}
Using this notation, 
equation \eqref{20140222:eq3} can be equivalently rewritten as
\begin{equation}\label{20140222:eq3b}
\begin{array}{l}
\displaystyle{
\vphantom{\Big(}
w(q_{j_0})
=
q_{j_0}
+
\sum_{\frac12\leq k_1\leq k} 
\sum_{(j_1,n_1)\in J_{-k_1}}
\big([q_{j_0},q^{j_1}_{n_1}]^\sharp
-(q_{j_0}|q^{j_1}_{n_1})\partial\big)
q^{n_1+1}_{j_1}
} \\
\displaystyle{
\vphantom{\Big(}
+
\!
\sum_{\frac12\leq k_2\prec k_1\leq k} 
\sum_{(\vec{j},\vec{n})\in J_{-\vec{k}}}
\!
\big([q_{j_0},q^{j_1}_{n_1}]^\sharp
-(q_{j_0}|q^{j_1}_{n_1})\partial\big)
\big([q^{n_1+1}_{j_1},q^{j_2}_{n_2}]^\sharp
-(q^{n_1+1}_{j_1}|q^{j_2}_{n_2})\partial\big)
q^{n_2+1}_{j_2}
} \\
\displaystyle{
\vphantom{\Big(}
+\dots+r^{\geq2}(q_{j_0})
} \\
\displaystyle{
\vphantom{\Big(}
=
\sum_{s=0}^\infty
\sum_{
\substack{ \frac12\leq k_s\prec\dots\prec k_1\leq k \\ (\vec{j},\vec{n})\in J_{-\vec{k}} }
}
\prod_{\alpha=0}^{s-1}
\big([q^{n_\alpha+1}_{j_\alpha},q^{j_{\alpha+1}}_{n_{\alpha+1}}]^\sharp
-(q^{n_\alpha+1}_{j_\alpha}|q^{j_{\alpha+1}}_{n_{\alpha+1}})\partial\big)
\cdot
q^{n_s+1}_{j_s}
+r^{\geq2}(q_{j_0})
\,,}
\end{array}
\end{equation}
where, in the RHS of the last equation, 
we are letting $n_0=-1$ and the product is taken in the increasing order of $\alpha$
(the factors do not commute).

Now we rewrite equation \eqref{20140222:eq3b} in some special cases.
For $j_0\in J^f_0$ (i.e. $\Delta=1$), 
we have a non-zero contribution in the RHS of \eqref{20140222:eq3b} only for $s=0$.
Hence, equation \eqref{20140222:eq3b} gives, in this case,
$$
w(q_{j_0})=q_{j_0}+r^{\geq2}(q_{j_0})
\,,
$$
in agreement with equation \eqref{20140222:eq5}.
For $j_0\in J^f_{-\frac12}$ (i.e. $\Delta=\frac32$), 
we have a non-zero contribution in the RHS of \eqref{20140222:eq3b} only for $s=0$ or $1$,
and for $s=1$ we must have $k_1=\frac12$.
Hence, equation \eqref{20140222:eq3b} gives, in this case,
$$
w(q_{j_0})=
q_{j_0}
-\partial q^{1}_{j_0}
+\sum_{(j_1,n_1)\in J_{-\frac12}}
[q_{j_0},q^{j_1}_{n_1}]^\sharp q^{n_1+1}_{j_1}
+r^{\geq2}(q_{j_0})
\,.
$$
For $j_0\in J^f_{-1}$ (i.e. $\Delta=2$), 
again we have a non-zero contribution in the RHS of \eqref{20140222:eq3b} 
only for $s=0$ or $1$,
and for $s=1$ we must have $k_1=\frac12$ or $1$.
Hence, equation \eqref{20140222:eq3b} gives
$$
w(q_{j_0})=
q_{j_0}
-\partial q^{1}_{j_0}
+\sum_{(j_1,n_1)\in J_{-\frac12}\sqcup J_{-1}} [q_{j_0},q^{j_1}_{n_1}]^\sharp q^{n_1+1}_{j_1}
+
r^{\geq2}(q_{j_0})
\,.
$$
We can write, for arbitrary $j_0\in J^f_{-k}$ (corresponding to conformal weight $\Delta=k+1$),
the first few summands of \eqref{20140222:eq3b} more explicitly.
The term corresponding to $s=0$ is $q_{j_0}$.
The term corresponding to $s=1$ in \eqref{20140222:eq3b} is 
$$
\sum_{\frac12\leq k_1\leq k}\sum_{(j_1,n_1)\in J_{-k_1}} 
[q_{j_0},q^{j_1}_{n_1}]^\sharp q^{n_1+1}_{j_1}
-\partial q^{1}_{j_0}
\,.
$$
Moreover, the term corresponding to $s=2$ in \eqref{20140222:eq3b} is equal to
$$
\begin{array}{l}
\displaystyle{
\sum_{\frac12\leq k_2\prec k_1\leq k}
\sum_{\substack{ (j_1,n_1)\in J_{-k_1} \\ (j_2,n_2)\in J_{-k_2} }} 
[q_{j_0},q^{j_1}_{n_1}]^\sharp [q^{n_1+1}_{j_1},q^{j_2}_{n_2}]^\sharp q^{n_2+1}_{j_2}
+\partial^2 q^{2}_{j_0}
} \\
\displaystyle{
- \sum_{\frac12\leq k_2\leq k-1} \sum_{(j_2,n_2)\in J_{-k_2}} 
\partial\big( [q^{1}_{j_0},q^{j_2}_{n_2}]^\sharp q^{n_2+1}_{j_2} \big)
- \sum_{\frac12\leq k_1\leq k} \sum_{(j_1,n_1)\in J_{-k_1}}
[q_{j_0},q^{j_1}_{n_1}]^\sharp \partial q^{n_1+2}_{j_1}
\,.}
\end{array}
$$

%%%
\section{Explicit formula for the PVA structure of
\texorpdfstring{$\mc W=\mc W(\mf g,f)$}{W=W(g,f)}}\label{sec:3.3}

\begin{proposition}\label{20140224:thm}
For $i_0\in J^f_{-h}$ and $j_0\in J^f_{-k}$, we have
\begin{equation}\label{20140224:eq9}
\begin{array}{l}
\displaystyle{
\phantom{\Big(}
\{{w(q_{i_0})}_\lambda{w(q_{j_0})}\}_{z,\rho}
=
\sum_{s,t=0}^\infty
\sum_{
\substack{ \frac12\leq h_s\prec\dots\prec h_1\leq h \\ \frac12\leq k_t\prec\dots\prec k_1\leq k }
}
\sum_{
\substack{ (\vec{i},\vec{m})\in J_{-\vec{h}} \\ (\vec{j},\vec{n})\in J_{-\vec{k}} }
}
(-1)^s
} \\
\displaystyle{
\phantom{\Big(}
\prod_{\beta=0}^{t-1}
\Big(w([q^{n_\beta+1}_{j_\beta},q^{j_{\beta+1}}_{n_{\beta+1}}]^\sharp)
-(q^{n_\beta+1}_{j_\beta}|q^{j_{\beta+1}}_{n_{\beta+1}}) (\lambda+\partial)\Big)
} \\
\displaystyle{
\phantom{\Big(}
\Big(
w([q^{m_s+1}_{i_s},q^{n_t+1}_{j_t}]^\sharp)
+(f|[q^{m_s+1}_{i_s},q^{n_t+1}_{j_t}])
} \\
\displaystyle{
\phantom{\Big(}
\,\,\,\,\,\,\,\,\, \,\,\,\,\,\,\,\,\,
+(q^{m_s+1}_{i_s}|q^{n_t+1}_{j_t})(\lambda+\partial)
+z(s|[q^{m_s+1}_{i_s},q^{n_t+1}_{j_t}])
\Big)
} \\
\displaystyle{
\phantom{\Big(}
\prod_{\alpha=0}^{s-1}
\Big(
w([q^{i_{s-\alpha}}_{m_{s-\alpha}},q^{m_{s-\alpha-1}+1}_{i_{s-\alpha-1}}]^\sharp)
-(q^{i_{s-\alpha}}_{m_{s-\alpha}}|q^{m_{s-\alpha-1}+1}_{i_{s-\alpha-1}}) (\lambda+\partial)
\Big)
\,,}
\end{array}
\end{equation}
where the products are taken in the increasing order of the indices $\alpha$ and $\beta$.
\end{proposition}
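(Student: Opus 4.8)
The plan is to compute the bracket directly from its definition $\{g_\lambda h\}_{z,\rho}=\rho\{g_\lambda h\}_z$ with $g=w(q_{i_0})$, $h=w(q_{j_0})$, and to match the outcome with \eqref{20140224:eq9} after one convenient reduction. I would first note that both sides of \eqref{20140224:eq9} lie in $\mc W[\lambda]$: the left-hand side by Theorem \ref{daniele1}(d), and the right-hand side because it is a differential polynomial in the generators $w(\,\cdot\,)$. Since $w:\mc V(\mf g^f)\stackrel{\sim}{\to}\mc W$ satisfies $\pi\circ w=\id$ by \eqref{20140222:eq2}, the restriction $\pi|_{\mc W}=w^{-1}$ is injective, so it suffices to check \eqref{20140224:eq9} after applying $\pi$. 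On the right-hand side, $\pi$ merely replaces each $w([\,\cdot\,]^\sharp)$ by $[\,\cdot\,]^\sharp\in\mf g^f$ and leaves the $(\lambda+\partial)$-factors intact, giving a completely explicit element of $\mc V(\mf g^f)[\lambda]$.

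The crucial simplification is that, once $\pi$ is applied, the a priori unknown tails $r^{\geq2}(q_{i_0})$ and $r^{\geq2}(q_{j_0})$ of the two generators disappear. By \eqref{20140222:eq1}, every monomial of $r^{\geq2}(q)$ carries at least two factors from $\mb F[\partial][e,\mf g_{\leq-\frac12}]$; since each application of the Leibniz rule in the $\lambda$-bracket consumes only one such factor, and $\rho$ fixes the remaining ones, every resulting monomial still contains a factor in the differential ideal $\big\langle[e,\mf g_{\leq-\frac12}]\big\rangle_{\mc V(\mf g_{\leq\frac12})}$ and is killed by $\pi$. Hence I may replace each $w(q)$ by its explicit linear part $q+r(q)$ from \eqref{20140222:eq3b}, turning the problem into a finite, fully explicit computation.

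Substituting \eqref{20140222:eq3b} into both slots, each summand of $q+r(q)$ is a coefficient in $\mc V(\mf g^f)$ (assembled from the $[\,\cdot\,]^\sharp$ and $\partial$) times a single innermost variable $q^{m_s+1}_{i_s}$, respectively $q^{n_t+1}_{j_t}$. By the same ideal count, the only $\pi$-surviving term is the one pairing these two innermost variables; it produces the middle factor of \eqref{20140224:eq9} as the core bracket $\pi\rho\{q^{m_s+1}_{i_s}{}_\lambda q^{n_t+1}_{j_t}\}_z$, which by Lemma \ref{20140221:lem2} equals $[q^{m_s+1}_{i_s},q^{n_t+1}_{j_t}]^\sharp+(f|[q^{m_s+1}_{i_s},q^{n_t+1}_{j_t}])+(q^{m_s+1}_{i_s}|q^{n_t+1}_{j_t})(\lambda+\partial)+z(s|[q^{m_s+1}_{i_s},q^{n_t+1}_{j_t}])$. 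This core is dressed on the right by the surviving right coefficients (the $\beta$-product, with $\partial\to\lambda+\partial$ coming from right-sesquilinearity) and on the left by the left coefficients carried through the bracket; with the convention $m_0=n_0=-1$ inherited from \eqref{20140222:eq3b}, the boundary cases $s=0$ or $t=0$ reproduce $q_{i_0}$, $q_{j_0}$ directly.

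The main obstacle will be the final bookkeeping: showing that the nested left-coefficient operators of \eqref{20140222:eq3b} transform, under left-sesquilinearity and skew-symmetry of the $\lambda$-bracket, into precisely the reversed $\alpha$-product of \eqref{20140224:eq9} — with the order reversal, the correct $(\lambda+\partial)$-placements, the reindexed brackets $[q^{i_{s-\alpha}}_{m_{s-\alpha}},q^{m_{s-\alpha-1}+1}_{i_{s-\alpha-1}}]^\sharp$, and the sign $(-1)^s$ (one minus per factor from $[a,b]^\sharp=-[b,a]^\sharp$) — and then matching everything term-by-term with $\pi$ of the right-hand side. As in the telescoping induction at the end of the proof of Theorem \ref{20140221:thm1}, I expect the unwanted intermediate terms to cancel between consecutive blocks, so that only the chained products with the strict inequalities $h_s\prec\dots\prec h_1$ and $k_t\prec\dots\prec k_1$ remain. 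Once the $\pi$-projected identity is verified, injectivity of $\pi|_{\mc W}$ yields \eqref{20140224:eq9}.
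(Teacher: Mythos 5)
Your proposal is correct and follows essentially the same route as the paper: compute $\pi\rho\{{w(q_{i_0})}_\lambda{w(q_{j_0})}\}_z$, observe that the tails $r^{\geq2}$ and all pairings other than the innermost--innermost one are killed by $\pi$ (each such term retains a factor in $\big\langle[e,\mf g_{\leq-\frac12}]\big\rangle_{\mc V(\mf g_{\leq\frac12})}$), identify the core bracket $\pi\rho\{{q^{m_s+1}_{i_s}}{}_{\lambda+\partial}\,{q^{n_t+1}_{j_t}}\}_z$, and apply $w$. The only small correction is that no telescoping cancellation is needed here: the strict chains $h_s\prec\dots\prec h_1$ and $k_t\prec\dots\prec k_1$ are already built into \eqref{20140222:eq3b}, so the final bookkeeping is a direct application of sesquilinearity and the Leibniz rule, the sign $(-1)^s$ arising because each left factor $[a,b]^\sharp-(a|b)(-\lambda-\partial)$ equals $-\big([b,a]^\sharp-(b|a)(\lambda+\partial)\big)$.
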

\begin{proof}
According to Corollary \ref{20140221:cor},
the maps $w:\,\mc V(\mf g^f)$ and $\pi|_{\mc W}:\,\mc W\to\mc V(\mf g^f)$
are inverse to each other.
Hence, in order to compute $\{{w(q_{i_0})}_\lambda{w(q_{j_0})}\}_{z,\rho}\in\mc W[\lambda]$,
we can first compute
$$
\pi\rho\{{w(q_{i_0})}_\lambda{w(q_{j_0})}\}_{z}\,\in\mc V(\mf g^f)\,,
$$
and then apply the differential algebra isomorphism  
$w:\,\mc V(\mf g^f)\to\mc W$ to the result.
On the other hand, it is clear from the Leibniz rules and the definition of the map 
$\pi:\,\mc V(\mf g_{\leq\frac12})\twoheadrightarrow\mc V(\mf g^f)$,
that, in the expansions 
$w(q_{i_0})=q_{i_0}+r(q_{i_0})+r^{\geq2}(q_{i_0})$
and $w(q_{j_0})=q_{j_0}+r(q_{j_0})+r^{\geq2}(q_{j_0})$
(cf. equation \eqref{20140222:eq1}),
the terms $r^{\geq2}(q_{i_0})$ and $r^{\geq2}(q_{j_0})$ give no contribution.
Moreover, by equation \eqref{20140222:eq3b}
we get, using the Leibniz rules and the definition of the map $\pi$,
$$
\begin{array}{l}
\displaystyle{
\phantom{\Big(}
\pi\{{w(q_{i_0})}_\lambda{w(q_{j_0})}\}_{z,\rho}
} \\
\displaystyle{
\phantom{\Big(}
=
\sum_{s,t=0}^\infty
\sum_{
\substack{ \frac12\leq h_s\prec\dots\prec h_1\leq h \\ \frac12\leq k_t\prec\dots\prec k_1\leq k }
}
\sum_{
\substack{ (\vec{i},\vec{m})\in J_{-\vec{h}} \\ (\vec{j},\vec{n})\in J_{-\vec{k}} }
}
\prod_{\beta=0}^{t-1}
\Big([q^{n_\beta+1}_{j_\beta},q^{j_{\beta+1}}_{n_{\beta+1}}]^\sharp
-(q^{n_\beta+1}_{j_\beta}|q^{j_{\beta+1}}_{n_{\beta+1}})(\lambda+\partial)\Big)
} \\
\displaystyle{
\phantom{\Big(}
\times
{\pi\rho\{ {q^{m_s+1}_{i_s}} {}_{\lambda+\partial}\, {q^{n_t+1}_{j_t}} \}_z}_{\to}
\prod_{\alpha=s-1}^{0}
\Big([q^{m_\alpha+1}_{i_\alpha},q^{i_{\alpha+1}}_{m_{\alpha+1}}]^\sharp
-(q^{m_\alpha+1}_{i_\alpha}|q^{i_{\alpha+1}}_{m_{\alpha+1}})(-\lambda-\partial)\Big)
\,,}
\end{array}
$$
where the second product is taken in the decreasing order of $\alpha$.
Equation \eqref{20140224:eq9} is the result of applying the map $w$ to both sides of this equation.
\end{proof}

Formula \eqref{20140224:eq9} has the advantage of being manifestly skewsymmetric.
It can be simplified by bringing it to the form similar to \eqref{20140319:eq7},
at the price of loosing the manifest skewsymmetry.
For this, we will need the following result.
\begin{lemma}\label{20140304:lem}
For every $k\in\frac12\mb Z$, $-d\leq k\leq d$, we have
\begin{equation}\label{20140304:eq1}
\sum_{(i,m)\in J_{k-1}} q^{m+1}_i\otimes q^i_m
=-\sum_{(j,n)\in J_{-k}}q^j_n\otimes q^{n+1}_j
\,\,\in[e,\mf g_{k-1}]\otimes[e,\mf g_{-k}]
\,.
\end{equation}
\end{lemma}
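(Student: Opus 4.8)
The plan is to reorganize the asserted identity---which for a fixed $k$ looks lopsided---into a single symmetry statement in $\mf g\otimes\mf g$. First I would check the gradings: for $(j,n)\in J_{-k}$ one has $q^j_n\in\mf g_k$ and $q_j^{n+1}\in\mf g_{1-k}$, while for $(i,m)\in J_{k-1}$ one has $q_i^{m+1}\in\mf g_k$ and $q^i_m\in\mf g_{1-k}$, so that both sides of \eqref{20140304:eq1} lie in $\mf g_k\otimes\mf g_{1-k}$. Introducing
\[
\Theta=\sum_{(j,n)\in J}q^j_n\otimes q_j^{n+1}\in\mf g\otimes\mf g
\]
and writing $\tau$ for the flip $a\otimes b\mapsto b\otimes a$, the right-hand side of \eqref{20140304:eq1} is the $\mf g_k\otimes\mf g_{1-k}$-component of $-\Theta$, and the left-hand side is the $\mf g_k\otimes\mf g_{1-k}$-component of $\tau\Theta=\sum_{(i,m)\in J}q_i^{m+1}\otimes q^i_m$. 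Since $\bigsqcup_kJ_{-k}=\bigsqcup_kJ_{k-1}=J$ and the subspaces $\mf g_k\otimes\mf g_{1-k}$ are linearly independent, the whole family \eqref{20140304:eq1} (over all $k$) is equivalent to the single identity $\tau\Theta=-\Theta$, i.e. to the skew-symmetry of $\Theta$.

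Next I would convert skew-symmetry of $\Theta$ into an operator statement. Let $F\in\End\mf g$ be defined on the dual basis by $Fq_j^n=q_j^{n+1}$ (with the convention $q_j^{2\delta(j)+1}=0$), so that $\Theta=(1\otimes F)\Omega$, where $\Omega=\sum_{(j,n)\in J}q^j_n\otimes q_j^n$ is the symmetric canonical tensor of the form. Under the isomorphism $\mf g\otimes\mf g\cong\End\mf g$ induced by $(\cdot\,|\,\cdot)$ (for which $u\otimes v\mapsto(w\mapsto(v|w)u)$, and $\Omega\mapsto\id$), one has $(1\otimes F)\Omega\mapsto F^*$ and, using $\tau\Omega=\Omega$, $\tau\Theta=(F\otimes1)\Omega\mapsto F$. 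Hence $\tau\Theta=-\Theta$ is equivalent to $F^*=-F$, the skew-adjointness of $F$. Testing this on the dual basis, $F$ is skew-adjoint precisely when
\[
(q_j^{n+1}\,|\,q_i^m)=-(q_i^{m+1}\,|\,q_j^n)\qquad\text{for all }(i,m),(j,n)\in J.
\]

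Finally I would prove this last relation directly from the $\mf{sl}_2$-action. By Lemma \ref{20140304:lem2}(i) we have $[f,q_i^{m+1}]=-q_i^m$ and $[f,q_j^{n+1}]=-q_j^n$; substituting the first, then moving $\ad f$ across the form by invariance, then using the second and the symmetry of the form,
\[
(q_j^{n+1}\,|\,q_i^m)=-(q_j^{n+1}\,|\,[f,q_i^{m+1}])=([f,q_j^{n+1}]\,|\,q_i^{m+1})=-(q_j^n\,|\,q_i^{m+1})=-(q_i^{m+1}\,|\,q_j^n),
\]
which is exactly the required identity, completing the argument.

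The part that needs care is the reduction rather than any computation: recognizing that the superficially asymmetric identity \eqref{20140304:eq1} is the graded shadow of the clean skew-symmetry $\tau\Theta=-\Theta$, and that the latter collapses---via the $\Omega\leftrightarrow\id$ dictionary between $\mf g\otimes\mf g$ and $\End\mf g$---to the one-line pairing identity above. I would also keep careful track of the boundary conventions $q_j^{-1}=q_j^{2\delta(j)+1}=0$, so that the extreme terms of each $\mf{sl}_2$-string are handled consistently both in the definition of $F$ and in the manipulation with $[f,\cdot]$.
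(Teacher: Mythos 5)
Your argument is correct in substance and takes a genuinely different route from the paper's. The paper first shows both sides of \eqref{20140304:eq1} lie in $[e,\mf g_{k-1}]\otimes[e,\mf g_{-k}]$, then pairs them against test vectors $[f,a]\otimes[f,b]$ with $a\in\mf g_{-k+1}$, $b\in\mf g_k$, using the completeness relations \eqref{20140304:eq3} for the dual bases of $[e,\mf g_{k-1}]$, $[f,\mf g_{-k+1}]$ and of $[e,\mf g_{-k}]$, $[f,\mf g_{k}]$; invariance of the form shows both pairings equal $([a,f]|b)$. You instead assemble the identities over all $k$ into the single statement $\tau\Theta=-\Theta$ and translate it, via the dictionary $\Omega\leftrightarrow\id$, into skew-adjointness of the string-raising operator $F$, which collapses to the pairing identity $(q_j^{n+1}\,|\,q_i^m)=-(q_i^{m+1}\,|\,q_j^n)$. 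Concretely this amounts to contracting \eqref{20140304:eq1} slot-wise against dual basis vectors and using duality, which is arguably cleaner than the paper's choice of test vectors and makes the skewsymmetry of the whole family manifest. What the paper's route establishes up front is the refined membership $\in[e,\mf g_{k-1}]\otimes[e,\mf g_{-k}]$ asserted in the statement; in your setup this still follows, since the left side visibly lies in $[e,\mf g_{k-1}]\otimes\mf g_{1-k}$ and the right side in $\mf g_k\otimes[e,\mf g_{-k}]$, so their common value lies in the intersection.

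One step needs exactly the care you anticipated in your closing remark. The chain $(q_j^{n+1}\,|\,q_i^m)=-(q_j^{n+1}\,|\,[f,q_i^{m+1}])=\cdots$ uses $[f,q_i^{m+1}]=-q_i^m$, which by Lemma \ref{20140304:lem2}(i) is valid only for $m+1\leq 2\delta(i)$; at the top of a string, $m=2\delta(i)$, one has $q_i^{m+1}=0$ by convention while $q_i^m\neq 0$, so the first equality fails as written (and symmetrically for $n=2\delta(j)$). The required identity nevertheless holds there: $q_i^{2\delta(i)}\in\mf g^e$, while $q_j^{n+1}$ is either $0$ or lies in $[e,\mf g]$, which is orthogonal to $\mf g^e$ by \eqref{20140221:eq4}, so both sides vanish. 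Adding this one-line boundary case makes your proof complete.
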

\begin{proof}
First, we prove that both the LHS and the RHS of \eqref{20140304:eq1} 
lie in $[e,\mf g_{k-1}]\otimes[e,\mf g_{-k}]$.
By assumption $\{q^j_n\}_{(j,n)\in J_{-k}}$ is a basis of $\mf g_k=[e,\mf g_{k-1}]\oplus\mf g^f_k$.
On the other hand, for $q^j_n\in\mf g^f_k$,
we have $q^n_j\in\mf g^e_{-k}$,
and therefore $q^{n+1}_j=$const.$[e,q^n_j]=0$.
Hence, only the basis elements $q^j_n$ in $[e,\mf g_{k-1}]$ give a non-zero contribution 
to $\sum_{(j,n)\in J_{-k}}q^j_n\otimes q^{n+1}_j$,
which therefore lies in $[e,\mf g_{k-1}]\otimes[e,\mf g_{-k}]$.
The same argument applies to the LHS of \eqref{20140304:eq1}.

Note that 
the space $[e,\mf g_{k-1}]$ is non-degenerately paired by the bilinear form $(\cdot\,|\,\cdot)$
to $[f,\mf g_{-k+1}]$,
and the space $[e,\mf g_{-k}]$ is non-degenerately paired 
to $[f,\mf g_{k}]$.
In fact, 
the elements $\{q^{m+1}_i\}_{(i,m)\in J_{k-1}}$
form a basis of $[e,\mf g_{k-1}]$,
and $\{q_{m+1}^i\}_{(i,m)\in J_{k-1}}$
is the dual basis of $[f,\mf g_{-k+1}]$.
Similarly, $\{q^{n+1}_j\}_{(j,n)\in J_{-k}}$
form a basis of $[e,\mf g_{-k}]$,
and $\{q_{n+1}^j\}_{(j,n)\in J_{-k}}$
is the dual basis of $[f,\mf g_{k}]$.
Hence, we have the following completeness relations
\begin{equation}\label{20140304:eq3}
\begin{array}{l}
\displaystyle{
\vphantom{\Big(}
\sum_{(i,m)\in J_{k-1}} 
([f,a]|q^{m+1}_i)q^i_{m+1}
=[f,a]
\,\,\text{ for every }\,\,
a\in\mf g_{-k+1}
\,, } \\
\displaystyle{
\vphantom{\Big(}
\sum_{(j,n)\in J_{-k}} 
([f,b]|q^{n+1}_j)q^j_{n+1}
=[f,b]
\,\,\text{ for every }\,\,
b\in\mf g_k
\,.}
\end{array}
\end{equation}

Since both the LHS and the RHS of \eqref{20140304:eq1} 
lie in $[e,\mf g_{k-1}]\otimes[e,\mf g_{-k}]$,
to prove the equality in \eqref{20140304:eq1}
it suffices to show that, for every $a\in\mf g_{-k+1}$ and $b\in\mf g_k$,
we have
\begin{equation}\label{20140304:eq2}
\sum_{(i,m)\in J_{k-1}} 
([f,a]|q^{m+1}_i)([f,b]|q^i_m)
=
-
\sum_{(j,n)\in J_{-k}}
([f,a]|q^j_n)([f,b]|q^{n+1}_j)
\,.
\end{equation}
But by the invariance of the bilinear form and the completeness relations \eqref{20140304:eq3},
both sides of \eqref{20140304:eq2} are equal to $([a,f]|b)$.
\end{proof}
\begin{theorem}\label{20140304:thm}
For $a\in \mf g^f_{-h}$ and $b\in\mf g^f_{-k}$, we have
\begin{equation}\label{20140304:eq4}
\begin{array}{l}
\displaystyle{
\phantom{\Big(}
\{w(a)_\lambda w(b)\}_{z,\rho}
=
w([a,b])+(a|b)\lambda+z(s|[a,b])
} \\
\displaystyle{
\phantom{\Big(}
-\sum_{t=1}^\infty
\sum_{
-h+1\leq k_t\prec \dots\prec k_1\leq k
}
\sum_{
(\vec{j},\vec{n})\in J_{-\vec{k}}
}
\big(
w([b,q^{j_1}_{n_1}]^\sharp)
-(b|q^{j_1}_{n_1})(\lambda+\partial)
+z(s|[b,q^{j_{1}}_{n_{1}}])
\big)
} \\
\displaystyle{
\phantom{\Big(}
\big(
w([q^{n_1+1}_{j_1},q^{j_2}_{n_2}]^\sharp)
-(q^{n_1+1}_{j_1}|q^{j_2}_{n_2}) (\lambda+\partial)
+z(s|[q^{n_1+1}_{j_1},q^{j_2}_{n_2}])
\big)
\dots } \\
\displaystyle{
\phantom{\Big(}
\dots
\big(
w([q^{n_{t-1}+1}_{j_{t-1}},q^{j_t}_{n_t}]^\sharp)
-(q^{n_{t-1}+1}_{j_{t-1}}|q^{j_t}_{n_t}) (\lambda+\partial)
+z(s|[q^{n_{t-1}+1}_{j_{t-1}},q^{j_t}_{n_t}])
\big)
} \\
\displaystyle{
\phantom{\Big(}
\big(
w([q^{n_t+1}_{j_t},a]^\sharp)
-(q^{n_t+1}_{j_t}|a) \lambda
+z(s|[q^{n_t+1}_{j_t},a])
\big)
\,.}
\end{array}
\end{equation}
\end{theorem}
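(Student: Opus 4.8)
The plan is to derive \eqref{20140304:eq4} directly from the manifestly skew-symmetric formula \eqref{20140224:eq9} of Proposition \ref{20140224:thm}, using Lemma \ref{20140304:lem} to ``straighten out'' the double chain into a single one. First I would read each summand of \eqref{20140224:eq9} as a $V$-shaped object: a forward chain of $t$ factors running from $b=q_{j_0}$ out to $q^{n_t+1}_{j_t}$, a single central factor joining $q^{m_s+1}_{i_s}$ to $q^{n_t+1}_{j_t}$, and a reversed chain of $s$ factors running from $q^{m_s+1}_{i_s}$ back to $a=q_{i_0}$. Note that the central factor is the only one carrying the terms $(f|[\cdot,\cdot])$ and $z(s|[\cdot,\cdot])$, since it is the unique contraction $\pi\rho\{q^{m_s+1}_{i_s}{}_{\lambda+\partial}q^{n_t+1}_{j_t}\}_z$ produced by the Leibniz rules, whereas the remaining factors come from the expansion \eqref{20140222:eq3b} of $w(q_{i_0})$ and $w(q_{j_0})$.

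The heart of the argument is an induction on $s$ in which I reflect the $a$-side one link at a time. In each summand the index $(i_\alpha,m_\alpha)\in J_{-h_\alpha}$ appears as $q^{i_\alpha}_{m_\alpha}$ in one factor and as $q^{m_\alpha+1}_{i_\alpha}$ in the adjacent one; summing over it produces exactly the contraction on the left-hand side of \eqref{20140304:eq1}, so Lemma \ref{20140304:lem} lets me replace $\sum_{(i,m)\in J_{-h_\alpha}} q^{m+1}_i\otimes q^i_m$ by $-\sum_{(j,n)\in J_{h_\alpha-1}} q^j_n\otimes q^{n+1}_j$. Each such replacement reverses the orientation of one link and contributes a factor $-1$; after all $s$ reflections the explicit sign $(-1)^s$ of \eqref{20140224:eq9} is absorbed, and the reversed $a$-chain becomes a forward continuation of the $b$-chain, giving a single chain from $b$ to $a$. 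Under this reflection $J_{-h_\alpha}$, of conformal grading $h_\alpha\in[\frac12,h]$, is sent to gradings $1-h_\alpha\in[1-h,\frac12]$, which is precisely the source of the new lower bound $-h+1$ on the summation range; the two $\prec$-chains of \eqref{20140224:eq9} then merge into the single chain $-h+1\leq k_t\prec\dots\prec k_1\leq k$ of \eqref{20140304:eq4}.

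It remains to identify the surviving $f$- and $z$-contributions and the boundary terms. I would first check the degenerate cases: the $s=t=0$ summand gives the central factor with $q^{m_0+1}_{i_0}=a$ and $q^{n_0+1}_{j_0}=b$, which, since the rightmost $(\lambda+\partial)$ acts on nothing and $(f|[a,b])=0$ by grading, reproduces the leading term $w([a,b])+(a|b)\lambda+z(s|[a,b])$; likewise the last factor of each chain, being rightmost, carries only $\lambda$ rather than $\lambda+\partial$. A short grading computation then shows that every junction $[q^{n_\ell+1}_{j_\ell},q^{j_{\ell+1}}_{n_{\ell+1}}]$ of the straightened chain lies outside $\mf g_1$, because the $\prec$-ordering forces $k_{\ell+1}\leq k_\ell-1$ and hence the bracket out of degree $1$; thus all the $f$-terms vanish and writing $w([\cdot,\cdot]^\sharp)$ is legitimate. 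The same bookkeeping shows that at most one junction of a given chain can meet the condition $[\cdot,\cdot]\in\mf g_{-d}$ needed for $(s|[\cdot,\cdot])\neq0$, so although \eqref{20140304:eq4} displays a $z$-term in every factor, the expression is in fact linear in $z$, in agreement with the single central $z$-term of \eqref{20140224:eq9}.

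The main obstacle is precisely this bookkeeping in the inductive reflection: organizing the telescoping of the signs, keeping the ordering constraints and the $(\lambda+\partial)$-versus-$\lambda$ distinction consistent at each fold, and checking that the single $f$- and $z$-decorated central factor redistributes into the uniform product form of \eqref{20140304:eq4} with no spurious surviving terms. Once the reflection identity of Lemma \ref{20140304:lem} is set up, the remaining manipulations are formal, but tracking the indices through the fold is where all the care is needed.
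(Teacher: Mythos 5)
Your overall strategy --- starting from \eqref{20140224:eq9} and using Lemma \ref{20140304:lem} to reflect the $a$-side chain link by link, absorbing the sign $(-1)^s$ and translating the ranges $\frac12\leq h_s\prec\dots\prec h_1\leq h$ into $-h+1\leq k_{t+s}\prec\dots\prec k_{t+1}\leq\frac12$ --- is exactly the paper's route. The gap is in your treatment of the central factor. After the reflection, the two sub-chains are joined only by the condition $k_{t+1}\leq\frac12\leq k_t$, \emph{not} by $k_{t+1}\prec k_t$; so your grading argument (``the $\prec$-ordering forces $k_{\ell+1}\leq k_\ell-1$, hence the bracket is out of degree $1$ and all $f$-terms vanish'') does not apply at the junction. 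There the term $(f|[q^{j_{t+1}}_{n_{t+1}},q^{n_t+1}_{j_t}])$ equals $\delta_{j_t,j_{t+1}}\delta_{n_t,n_{t+1}}$, which is nonzero precisely when $k_t=k_{t+1}=\frac12$; it does not vanish, and it produces a genuine extra contribution in which the chain collapses by one link.

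What actually makes the formula close up is a cancellation you have not supplied. On one hand, the junction condition $k_{t+1}\leq\frac12\leq k_t$ admits, besides $k_{t+1}\prec k_t$, only the pairs $(k_t,k_{t+1})=(\frac12,0),(\frac12,\frac12),(1,\frac12)$, for which both the $\sharp$-bracket and the pairing vanish; hence the non-$f$ part of the straightened sum is indeed a sum over genuine $\prec$-chains, but each chain containing the value $\frac12$ is counted \emph{twice} (once with the marked junction satisfying $k_t=\frac12$, once with $k_{t+1}=\frac12$). On the other hand, the $\delta$-contribution from the $f$-term is exactly one copy, with opposite sign, of the sum over $\prec$-chains containing the value $\frac12$. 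These two effects cancel, leaving every $\prec$-chain counted exactly once --- which is formula \eqref{20140304:eq4}. Without this bookkeeping the ``merging of the two chains into one'' that you assert is unjustified, and the spurious collapsed-chain terms you flag in your last paragraph really are present until they are explicitly cancelled.
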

Note that in each summand of \eqref{20140304:eq4} 
the $z$ term can be non-zero at most in one factor.
In fact, $z$ may occur in the first factor only for $k_1\leq0$,
in the second factor only for $k_1\geq1$ and $k_2\leq-1$,
in the third factor only for $k_2\geq1$ and $k_3\leq-1$,
and so on, and it may occur in the last factor only for $k_t\geq1$.
Since these conditions are mutually exclusive,
the expression in the RHS of \eqref{20140304:eq4} is linear in $z$.
\begin{proof}
By Lemma \ref{20140304:lem}, we have, for $\alpha=0,\dots,s-1$,
$$
\sum_{(i_{s-\alpha},m_{s-\alpha})\in J_{-h_{s-\alpha}}}
\!
q^{m_{s-\alpha}+1}_{i_{s-\alpha}}\otimes q_{m_{s-\alpha}}^{i_{s-\alpha}}
=
-\sum_{(j_{t+1+\alpha},n_{t+1+\alpha})\in J_{-k_{t+1+\alpha}}}
\!
q_{n_{t+1+\alpha}}^{j_{t+1+\alpha}}\otimes q^{n_{t+1+\alpha}+1}_{j_{t+1+\alpha}}
\,,
$$
where
$$
k_{t+1+\alpha}=-h_{s-\alpha}+1
\,.
$$
Also, the inequalities $\frac12\leq h_s\prec\dots\prec h_1\leq h$ translate,
in terms of the new indices $k_{t+1},\dots,k_{t+s}$, to
$$
-h+1\leq k_{t+s}\prec k_{t+s-1}\prec\dots\prec k_{t+2}\prec k_{t+1}\leq\frac12
\,.
$$
Hence, formula \eqref{20140224:eq9} can be rewritten as follows
\begin{equation}\label{20140305:eq1}
\begin{array}{l}
\displaystyle{
\phantom{\Big(}
\{{w(q_{i_0})}_\lambda{w(q_{j_0})}\}_{z,\rho}
=
-\sum_{s,t=0}^\infty
\sum_{
\substack{ \frac12\leq k_t\prec\dots\prec k_1\leq k \\ -h+1\leq k_{t+s}\prec\dots\prec k_{t+1}\leq\frac12
}
}
\sum_{
(\vec{j},\vec{n})\in J_{-\vec{k}}
}
} \\
\displaystyle{
\phantom{\Big(}
\prod_{\beta=0}^{t-1}
\Big(w([q^{n_\beta+1}_{j_\beta},q^{j_{\beta+1}}_{n_{\beta+1}}]^\sharp)
-(q^{n_\beta+1}_{j_\beta}|q^{j_{\beta+1}}_{n_{\beta+1}}) (\lambda+\partial)\Big)
} \\
\displaystyle{
\phantom{\Big(}
\Big(
w([q^{n_t+1}_{j_t},q_{n_{t+1}}^{j_{t+1}}]^\sharp)
-(f|[q_{n_{t+1}}^{j_{t+1}},q^{n_t+1}_{j_t}])
} \\
\displaystyle{
\phantom{\Big(}
\,\,\,\,\,\,\,\,\, \,\,\,\,\,\,\,\,\,
-(q^{n_t+1}_{j_t}|q_{n_{t+1}}^{j_{t+1}})(\lambda+\partial)
+z(s|[q^{n_t+1}_{j_t},q_{n_{t+1}}^{j_{t+1}}])
\Big)
} \\
\displaystyle{
\phantom{\Big(}
\prod_{\alpha=0}^{s-1}
\Big(
w([q^{n_{t+1+\alpha}+1}_{j_{t+1+\alpha}},q_{n_{t+2+\alpha}}^{j_{t+2+\alpha}}]^\sharp)
-(q^{n_{t+1+\alpha}+1}_{j_{t+1+\alpha}}|q_{n_{t+2+\alpha}}^{j_{t+2+\alpha}}) (\lambda+\partial)
\Big)
\,,}
\end{array}
\end{equation}
where we let $q_{n_{t+s+1}}^{j_{t+s+1}}=q_{i_0}$ in the last factor.
First,
recall that $q_{n_{t+1}}^{j_{t+1}}\in\mf g_{k_{t+1}}$
and $q^{n_t+1}_{j_t}\in[e,\mf g_{-k_t}]\subset\mf g_{-k_t+1}$.
Since, by assumption, $s\in\mf g_d$,
it follows that $(s|[q^{n_t+1}_{j_t},q_{n_{t+1}}^{j_{t+1}}])$
can be non-zero only if $k_t\geq1$ and $k_{t+1}\leq-1$.
Therefore, the coefficient of $z$ in formula \eqref{20140305:eq1}
is the same as the coefficient of $z$ in formula \eqref{20140304:eq4},
(for $a=q_{i_0}$ and $b=q_{j_0}$).

Next, we study formula \eqref{20140305:eq1} when $z=0$.
We consider separately the two contributions 
to the RHS of \eqref{20140305:eq1}:
the one in which the term $(f|[q_{n_{t+1}}^{j_{t+1}},q^{n_t+1}_{j_t}])$ enters,
and the remainder.
For the first contribution, we note that
$$
(f|[q_{n_{t+1}}^{j_{t+1}},q^{n_t+1}_{j_t}])
=
(q_{n_{t+1}+1}^{j_{t+1}}|q^{n_t+1}_{j_t})
=
\delta_{j_t,j_{t+1}}\delta_{n_t,n_{t+1}}
\,.
$$
Therefore, 
the term in the RHS of \eqref{20140305:eq1}
in which $(f|[q_{n_{t+1}}^{j_{t+1}},q^{n_t+1}_{j_t}])$ enters is
(letting $\ell=t+s-1$ and $q_{n_\ell}^{j_\ell}=q_{i_0}$)
\begin{equation}\label{20140305:eq2}
\begin{array}{l}
\displaystyle{
\phantom{\Big(}
+\sum_{\ell=0}^\infty
\sum_{t=0}^\ell
\sum_{
\substack{ -h+1\leq k_\ell\prec\dots\prec k_1\leq k \\ k_t=\frac12
}
}
\sum_{
(\vec{j},\vec{n})\in J_{-\vec{k}}
}
} \\
\displaystyle{
\phantom{\Big(}
\prod_{\beta=0}^{\ell-1}
\Big(w([q^{n_\beta+1}_{j_\beta},q^{j_{\beta+1}}_{n_{\beta+1}}]^\sharp)
-(q^{n_\beta+1}_{j_\beta}|q^{j_{\beta+1}}_{n_{\beta+1}}) (\lambda+\partial)\Big)
\,.}
\end{array}
\end{equation}
Next, note that, if $k_t\geq\frac12$ and $k_{t+1}\leq\frac12$,
then either $k_{t+1}\prec k_t$,
or $(k_t,k_{t+1})$ is one of the following three pairs:
$(\frac12,0)$, $(\frac12,\frac12)$, $(1,\frac12)$.
But in all these three cases, we have
$[q^{n_t+1}_{j_t},q_{n_{t+1}}^{j_{t+1}}]^\sharp=0$
and $(q^{n_t+1}_{j_t}|q_{n_{t+1}}^{j_{t+1}})=0$.
Therefore, 
the contribution to the RHS of \eqref{20140305:eq1},
in which the term $(f|[q_{n_{t+1}}^{j_{t+1}},q^{n_t+1}_{j_t}])$ 
does not enter, is (letting $\ell=t+s$)
\begin{equation}\label{20140305:eq3}
\begin{array}{l}
\displaystyle{
\phantom{\Big(}
-\sum_{\ell=0}^\infty
\sum_{t=0}^\ell
\sum_{
\substack{ -h+1\leq k_\ell\prec\dots\prec k_1\leq k \\ k_{t+1}\leq\frac12\leq k_t
}
}
\sum_{
(\vec{j},\vec{n})\in J_{-\vec{k}}
}
} \\
\displaystyle{
\phantom{\Big(}
\prod_{\beta=0}^{\ell-1}
\Big(
w([q^{n_\beta+1}_{j_\beta},q^{j_{\beta+1}}_{n_{\beta+1}}]^\sharp)
-(q^{n_\beta+1}_{j_\beta}|q^{j_{\beta+1}}_{n_{\beta+1}}) (\lambda+\partial)
\Big)
\,.} 
\end{array}
\end{equation}
The sum over the indices $\vec{k}$ in \eqref{20140305:eq3}
has terms in which $k_{t+1}<\frac12<k_t$,
terms in which $k_{t+1}<\frac12=k_t$,
and terms in which $k_{t+1}=\frac12<k_t$.
Each of the last two types of terms give the same contribution as \eqref{20140305:eq2}
but with opposite sign.
Hence, combining \eqref{20140305:eq1} and \eqref{20140305:eq2}
we get
$$
\begin{array}{l}
\displaystyle{
\phantom{\Big(}
-\sum_{\ell=0}^\infty
\sum_{
-h+1\leq k_\ell\prec\dots\prec k_1\leq k 
}
\sum_{
(\vec{j},\vec{n})\in J_{-\vec{k}}
}
} \\
\displaystyle{
\phantom{\Big(}
\prod_{\beta=0}^{\ell-1}
\Big(
w([q^{n_\beta+1}_{j_\beta},q^{j_{\beta+1}}_{n_{\beta+1}}]^\sharp)
-(q^{n_\beta+1}_{j_\beta}|q^{j_{\beta+1}}_{n_{\beta+1}}) (\lambda+\partial)
\Big)
\,,} 
\end{array}
$$
which is the same as the RHS of \eqref{20140304:eq4} for $z=0$.
\end{proof}

\begin{remark}\label{20140310:rem}
Let $\zeta\in\mf g^e$.
Consider the differential algebra automorphism of $\mc W= S(\mb F[\partial]w(\mf g^f))$
defined, on generators, by
$$
w(a)\mapsto w(a)+(\zeta|a)
\,\,,\,\,\,\,a\in\mf g^f\,.
$$
(We could let $\zeta$ be an arbitrary element of $\mf g$,
but for $\zeta\in[f,\mf g]$ this map is the identity map.)
Under this automorphism, the PVA $\lambda$-bracket $\{\,\cdot_\lambda\,\cdot\}_{z=0,\rho}$
is mapped to the following deformed $\lambda$-bracket
\begin{equation}\label{20140304:eq4b}
\begin{array}{l}
\displaystyle{
\phantom{\Big(}
\{w(a)_\lambda w(b)\}^\zeta
=
w([a,b])+(a|b)\lambda+(\zeta|[a,b])
} \\
\displaystyle{
\phantom{\Big(}
-\sum_{t=1}^\infty
\sum_{
-h+1\leq k_t\prec \dots\prec k_1\leq k
}
\sum_{
(\vec{j},\vec{n})\in J_{-\vec{k}}
}
\big(
w([b,q^{j_1}_{n_1}]^\sharp)
-(b|q^{j_1}_{n_1})(\lambda+\partial)
+(\zeta|[b,q^{j_{1}}_{n_{1}}])
\big)
} \\
\displaystyle{
\phantom{\Big(}
\big(
w([q^{n_1+1}_{j_1},q^{j_2}_{n_2}]^\sharp)
-(q^{n_1+1}_{j_1}|q^{j_2}_{n_2}) (\lambda+\partial)
+(\zeta|[q^{n_1+1}_{j_1},q^{j_2}_{n_2}])
\big)
\dots } \\
\displaystyle{
\phantom{\Big(}
\dots
\big(
w([q^{n_{t-1}+1}_{j_{t-1}},q^{j_t}_{n_t}]^\sharp)
-(q^{n_{t-1}+1}_{j_{t-1}}|q^{j_t}_{n_t}) (\lambda+\partial)
+(\zeta|[q^{n_{t-1}+1}_{j_{t-1}},q^{j_t}_{n_t}])
\big)
} \\
\displaystyle{
\phantom{\Big(}
\big(
w([q^{n_t+1}_{j_t},a]^\sharp)
-(q^{n_t+1}_{j_t}|a) \lambda
+(\zeta|[q^{n_t+1}_{j_t},a])
\big)
\,.}
\end{array}
\end{equation}
This $\lambda$-bracket with $\zeta=zs$ coincides with the $\lambda$-bracket \eqref{20140304:eq4}.
This proves, in particular, that classical $\mc W$-algebras are isomorphic for different choices 
of $z\in\mb F$.
In fact, the $\lambda$-brackets $\{\cdot\,_\lambda\,\cdot\}^\zeta$
define a family of isomorphic PVA's parametrized by $\zeta\in\mf g^e$.
However the dependence on $\zeta$ is in general non-linear.
As we pointed out after Theorem \ref{20140304:thm}, for $\zeta=zs$ and $s\in\mf g_d$
the $\lambda$-bracket \eqref{20140304:eq4b}
is linear in $z$.
Hence, in this case, we get a compatible family of PVA's parametrized by elements of $\mf g_d$.
\end{remark}

%%%
\section{Special cases}\label{sec:3.4}

%%%
\subsection{Elements of conformal weight
\texorpdfstring{$1$}{1}}\label{sec:3.4.1}

Consider the case when either $a$ or $b$ lies in $\mf g^f_0$,
which corresponds to a generator $w(a)$ or $w(b)$ of $\mc W=\mc W(\mf g,f)$ of conformal weight $\Delta=1$.
Since $\mf g^f_0=\mf g^e_0$,
if $a\in\mf g^f_0$ and $n_t\geq0$,
we have $[q^{n_t+1}_{j_t},a]^\sharp=0$
and $(q^{n_t+1}_{j_t}|a)=0$.
For the first equation we used the fact that
\begin{equation}\label{20140227:eq3}
[[e,\mf g],\mf g^e]\subset[e,\mf g]
\,.
\end{equation}
Hence, the sum in the RHS of \eqref{20140304:eq4} is zero in this case.
The case when $b\in\mf g^f_0$ can be derived by skewsymmetry,
or by the fact that, thanks to Lemma \ref{20140304:lem},
we also have
$\sum_{(j_1,n_1)\in J_{-h_1}}[b,q^{j_1}_{n_1}]^\sharp\otimes q^{n_1+1}_{j_1}=0$
and 
$\sum_{(j_1,n_1)\in J_{-h_1}}(b|q^{j_1}_{n_1})\otimes q^{n_1+1}_{j_1}=0$.

In conclusion, 
if either $a$ or $b$ lies in $\mf g^f_0$, we have
\begin{equation}\label{20140225:eq1}
\{{w(a)}_\lambda{w(b)}\}_{z,\rho}
=
w([a,b])+(a|b)\lambda+z(s|[a,b])
\,.
\end{equation}
In particular, the map $w$ restricts to an injective PVA homomorphism
$\mc V(\mf g^f_0)\hookrightarrow\mc W$.
Furthermore, \eqref{20140225:eq1} defines a representation 
of the Lie conformal algebra $\mb F[\partial]\mf g^f_0$
on $\mb F[\partial]U_k$, where $U_k=\{b+z(s|b)\,|\,b\in\mf g^f_{-k}\}$ and $k\geq\frac12$.
(Explicitly, this representation is given by
$a {}_\lambda(b+z(s|b))=[a,b]+z(s|[a,b])$ for $a\in\mf g^f_0$, $b\in\mf g^f_{-k}$,
and extended by sesquilinearity.)

%%%
\subsection{
Elements of conformal weight \texorpdfstring{$\frac32$}{3/2}}
\label{sec:3.4.2}

Consider the case when $a,b\in\mf g^f_{-\frac12}$,
corresponding to generators $w(a)$ and $w(b)$ of $\mc W$ of conformal weight $\Delta=\frac32$.
In this case the sum over the indices $\vec{k}$ is non-empty only for $t=1$,
and in this case it must be $k_1=\frac12$.
Moreover, it is easy to check,
using Lemmas \ref{20140304:lem2} and \ref{20140304:lem}, that
$$
\sum_{(j,n)\in J_{-\frac12}} (a|q^j_n)q^{n+1}_j
=\sum_{j\in J^f_{-\frac12}} (a|q^j)q^{1}_j
=-[e,a]
\,\,,\,\,\,\,
\sum_{(j,n)\in J_{-\frac12}} (a|q^{n+1}_j)q^j_n
=[e,a]
\,.
$$

In conclusion, for $a,b\in\mf g^f_{-\frac12}$ we get
\begin{equation}\label{20140226:eq8}
\begin{array}{l}
\displaystyle{
\phantom{\Big(}
\{{w(a)}_\lambda{w(b)}\}_{z,\rho}
=
w([a,b])
+(\partial+2\lambda)
w([a,[e,b]]^\sharp)
-(e|[a,b])\lambda^2
} \\
\displaystyle{
\,\,\,\,\,\,\,\,\, \,\,\,\,\,\,\,\,\, \,\,\,\,\,\,\,\,\,
+\sum_{(j,n)\in J_{-\frac12}}
w([a,q^{j}_{n}]^\sharp)
w([q^{n+1}_{j},b]^\sharp)
+z(s|[a,b])
\,.}
\end{array}
\end{equation}
Equation \eqref{20140226:eq8} is the same as \cite[Eq.(3.11)]{DSKV14}
(cf. \cite{Suh13}).

%%%
\subsection{
Generator \texorpdfstring{$w(f)$}{w(f)}}\label{sec:3.4.3}

Next, we consider the case when $a=f$ (for which $h=1$).
In this case $[f,b]=0$ and $(f|b)=0$ for every $b\in\mf g^f$.
For $t\geq1$ we have, by Lemma \ref{20140304:lem2}(i),
\begin{equation}\label{20140305:eq4}
[q^{n_t+1}_{j_t},f]^\sharp=(q^{n_t}_{j_t})^\sharp
=\delta_{k_t\geq\frac12}\delta_{n_t,0}q_{j_t}
\,.
\end{equation}
Here and further we use the standard notation $\delta_{k\geq\frac12}$,
which is $1$ for $k\geq\frac12$ and $0$ otherwise.
By Lemma \ref{20140304:lem} we also have
\begin{equation}\label{20140305:eq5}
\sum_{(j_t,n_t)\in J_{-k_t}}
(q^{n_t+1}_{j_t}|f)q^{j_t}_{n_t}
=
-\sum_{(j_t,n_t)\in J_{k_t-1}}
(q^{j_t}_{n_t}|f)q^{n_t+1}_{j_t}
=
\delta_{k_t,0}x
\end{equation}
The term with $t=1$ in the RHS of \eqref{20140304:eq4} is,
by \eqref{20140305:eq4} and \eqref{20140305:eq5},
\begin{equation}\label{20140305:eq6}
\begin{array}{l}
\displaystyle{
\phantom{\Big(}
-
\sum_{
0\leq k_1\leq k
}
\sum_{
(j_1,n_1)\in J_{-k_1}
}
\big(
w([b,q^{j_1}_{n_1}]^\sharp)-(b|q^{j_1}_{n_1})(\lambda+\partial)
+z(s|[b,q^{j_{1}}_{n_{1}}])
\big)
} \\
\displaystyle{
\phantom{\Big(}
\times
\big(
w([q^{n_1+1}_{j_1},f]^\sharp)
-(q^{n_1+1}_{j_1}|f) \lambda
+z(s|[q^{n_1+1}_{j_1},f])
\big)
} \\
\displaystyle{
\phantom{\Big(}
=
\sum_{
\frac12\leq k_1\leq k
}
\sum_{
j_1\in J^f_{-k_1}
}
w(q_{j_1})
w([q^{j_1},b]^\sharp)
+
\delta_{k\geq\frac12}
(\lambda+(k+1)\partial)w(b)
} \\
\displaystyle{
\phantom{\Big(}
-
zw([b,s]^\sharp)
+
(k+1) z (s|b) \lambda
\,.}
\end{array}
\end{equation}
For $t\geq2$, the corresponding summand in the RHS of \eqref{20140304:eq4} is,
again by \eqref{20140305:eq4} and \eqref{20140305:eq5},
\begin{equation}\label{20140304:eq7}
\begin{array}{l}
\displaystyle{
\phantom{\Big(}
-
\sum_{
0\leq k_t\prec \dots\prec k_1\leq k
}
\sum_{
(\vec{j},\vec{n})\in J_{-\vec{k}}
}
\big(
w([b,q^{j_1}_{n_1}]^\sharp)
-(b|q^{j_1}_{n_1})(\lambda+\partial)
+z(s|[b,q^{j_{1}}_{n_{1}}])
\big)
\dots } \\
\displaystyle{
\phantom{\Big(}
\dots
\big(
w([q^{n_{t-1}+1}_{j_{t-1}},q^{j_t}_{n_t}]^\sharp)
-(q^{n_{t-1}+1}_{j_{t-1}}|q^{j_t}_{n_t}) (\lambda+\partial)
+z(s|[q^{n_{t-1}+1}_{j_{t-1}},q^{j_t}_{n_t}])
\big)
} \\
\displaystyle{
\phantom{\Big(}
\big(
w([q^{n_t+1}_{j_t},f]^\sharp)
-(q^{n_t+1}_{j_t}|f) \lambda
+z(s|[q^{n_t+1}_{j_t},f])
\big)
} \\
\displaystyle{
\phantom{\Big(}
=
-\sum_{
1\leq k_{t-1}\prec \dots\prec k_1\leq k
}
\sum_{
(\vec{j},\vec{n})\in J_{-\vec{k}}
}
\big(
w([b,q^{j_1}_{n_1}]^\sharp)
-(b|q^{j_1}_{n_1})(\lambda+\partial)
+z(s|[b,q^{j_{1}}_{n_{1}}])
\big)
\dots } \\
\displaystyle{
\phantom{\Big(}
\dots
\big(
w([q^{n_{t-2}+1}_{j_{t-2}},q^{j_{t-1}}_{n_{t-1}}]^\sharp)
-(q^{n_{t-2}+1}_{j_{t-2}}|q^{j_{t-1}}_{n_{t-1}}) (\lambda+\partial)
+z(s|[q^{n_{t-2}+1}_{j_{t-2}},q^{j_{t-1}}_{n_{t-1}}])
\big)
} \\
\displaystyle{
\phantom{\Big(}
(q^{n_{t-1}+1}_{j_{t-1}}|x) \lambda^2
\,.}
\end{array}
\end{equation}
Here we used that fact that, for $n_{t-1}\geq0$, we have
$[q^{n_{t-1}+1}_{j_{t-1}},q^{j_t}]^\sharp=0$ 
(by \eqref{20140227:eq3}), $(q^{n_{t-1}+1}_{j_{t-1}}|q^{j_t})=0$,
$(s|[q^{n_{t-1}+1}_{j_{t-1}},q^{j_t}])=0$,
and $[q^{n_{t-1}+1}_{j_{t-1}},x]^\sharp=0$.
Note that $(q^{n_{t-1}+1}_{j_{t-1}}|x)$ is zero unless $n_{t-1}=0$.
But for $n_{t-1}=0$ and $n_{t-2}\geq0$, we have
$[q^{n_{t-2}+1}_{j_{t-2}},q^{j_{t-1}}]^\sharp=0$
and $(q^{n_{t-2}+1}_{j_{t-2}}|q^{j_{t-1}})=0$.
Hence, for $t\geq2$ the RHS of \eqref{20140304:eq7} vanishes.
Moreover, we have
$$
\sum_{(j_1,n_1)\in J_{-k_1}}(q^{n_1+1}_{j_1}|x)q^{j_1}_{n_1}=-\delta_{k_1,1}\frac12 e
\,.
$$
Hence, tor $t=2$ the RHS of \eqref{20140304:eq7} becomes
\begin{equation}\label{20140304:eq8}
-\frac12
(b|e)
\lambda^3
\,.
\end{equation}

Combining \eqref{20140305:eq6} and \eqref{20140304:eq8},
we conclude that, for $a\in\mf g^f_{1-\Delta}$ we have
\begin{equation}\label{20140227:eq11}
\begin{array}{l}
\displaystyle{
\phantom{\Big(}
\{{w(f)}_\lambda{w(a)}\}_{z,\rho}
=
\sum_{j\in J^f_{\leq-\frac12}}
w(q_{j})w([q^{j},a]^\sharp)
+(1-\delta_{\Delta,1})(\partial+\Delta\lambda)w(a)
} \\
\displaystyle{
\phantom{\Big(}
-\frac{(e|a)}{2}
\lambda^3
+z w([s,a]^\sharp)
+z\Delta(s|a)\lambda
\,.}
\end{array}
\end{equation}

%%%
\subsection{
Virasoro element}
\label{sec:3.4.4}

\begin{proposition}\phantomsection\label{20140228:prop}
\begin{enumerate}[(a)]
\item
Consider the element
$L_0=\frac12\sum_{j\in J^f_0}w(q_j)w(q^j)\,\in\mc W\{2\}$.
For $a\in\mf g^f_{-k}$, we have
\begin{equation}\label{20140228:eq1}
\begin{array}{l}
\displaystyle{
\vphantom{\Big(}
\{{L_0}_\lambda w(a)\}_{z,\rho}
=
\sum_{j\in J^f_0}
w(q_{j})w([q^{j},a])
+
\delta_{k,0}(\partial+\lambda)w(a)
-
\delta_{k,d} zw([s,a]^\sharp)
\,,} \\
\displaystyle{
\vphantom{\Big(}
\{{w(a)}_\lambda L_0\}_{z,\rho}
=
-\sum_{j\in J^f_0}
w(q_{j})w([q^{j},a])
+
\delta_{k,0}w(a)\lambda
+
\delta_{k,d} zw([s,a]^\sharp)
\,,}
\end{array}
\end{equation}
In particular, for $a\in\mf g^f_0$, we have
\begin{equation}\label{20140228:eq1b}
\begin{array}{l}
\displaystyle{
\vphantom{\Big(}
\{{L_0}_\lambda w(a)\}_{z,\rho}
=
(\partial+\lambda)w(a)
\,\,,\,\,\,\,\,
\{{w(a)}_\lambda L_0\}_{z,\rho}
=
w(a)\lambda
\,.}
\end{array}
\end{equation}
Furthermore, we have
\begin{equation}\label{20140228:eq2}
\{{L_0}_\lambda{L_0}\}_{z,\rho}
=
(\partial+2\lambda)L_0
\,.
\end{equation}
In particular, $L_0$ is a Virasoro element of $\mc W$ with zero central charge,
and the generators $w(a),\,a\in\mf g^f_0$, are primary elements with respect to $L_0$
of conformal weight $1$.
\item
We have:
\begin{equation}\label{20140228:eq3}
\{{w(f)}_\lambda{w(f)}\}_{z,\rho}
=
(\partial+2\lambda)w(f)
-(x|x) \lambda^3
+2z(s|f)\lambda
\,,
\end{equation}
i.e. $w(f)\in\mc W\{2\}$ is a Virasoro element, with central charge $-(x|x)$.
Moreover,
\begin{equation}\label{20140228:eq3b}
\{{w(f)}_\lambda L_0\}_{z,\rho}
=
\{{L_0}_\lambda{w(f)}\}_{z,\rho}
=0
\,.
\end{equation}
\item
The element $L=w(f)+L_0\in\mc W\{2\}$ is also a Virasoro element of $\mc W$,
and we have
\begin{equation}\label{20140228:eq4}
\{L_\lambda L\}_{z,\rho}
=
(\partial+2\lambda)L-(x|x)\lambda^3+2z(s|f)\lambda
\,.
\end{equation}
For $a\in\mf g^f_{1-\Delta}$ we have
\begin{equation}\label{20140228:eq5}
\{L_\lambda w(a)\}_{z,\rho}
=
(\partial+\Delta\lambda)w(a)
-\frac{(e|a)}2\lambda^3+z\Delta(s|a)\lambda
\,.
\end{equation}
In particular,
for $z=0$, all the generators $w(a),\,a\in\mf g^f$, of $\mc W$ are primary elements for $L$,
provided that $(e|a)=0$.
In other words, for $z=0$, $\mc W$ is an algebra of differential polynomials
generated by $L$ and $\dim(\mf g^f)-1$ primary elements with respect to $L$.
So, $\mc W$ is a PVA of \emph{CFT type} (cf. \cite{DSKW10}).

\end{enumerate}
\end{proposition}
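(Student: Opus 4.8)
The plan is to reduce every bracket in the statement to the explicit brackets between generators that are already available---namely \eqref{20140225:eq1} (brackets involving a generator attached to $\mf g^f_0$) and \eqref{20140227:eq11} (the brackets $\{w(f)_\lambda w(a)\}$)---together with the left and right Leibniz rules of the PVA $\mc W$. Since $L_0$ is quadratic and $L=w(f)+L_0$ is affine in the generators $w(q_j)$, each bracket expands into a finite sum of brackets of generators, all of which are known. Beyond this, the only inputs are the decompositions \eqref{20140221:eq4} and the completeness relations \eqref{20130520:eq1}, the grading $\mf g^f\subset\mf g_{\leq0}$, invariance of $(\cdot\,|\,\cdot)$, and the $\mf{sl}_2$-identities of Lemmas \ref{20140304:lem2} and \ref{20140304:lem}. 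Two elementary identities will recur: the \emph{completeness identity} $\sum_{j\in J^f_0} q^j\otimes[q_j,a]=\sum_{j\in J^f_0} q_j\otimes[q^j,a]$ (obtained by pairing the first tensor factor with an arbitrary element of $\mf g^f_0$), and the \emph{skew identity} $([q^j,a]|q^i)+([q^i,a]|q^j)=0$ (invariance of the form under $\ad a$), which forces $\sum_{j\in J^f}w(q_j)w([q^j,a]^\sharp)=0$.

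First I would compute $\{{L_0}_\lambda w(a)\}$ for $a\in\mf g^f_{-k}$ by the right Leibniz rule, substituting \eqref{20140225:eq1} into each factor $\{w(q_j)_\mu w(a)\}$ and $\{w(q^j)_\mu w(a)\}$. Three families of terms appear. The adjoint terms collapse, via the completeness identity, to $\sum_{j\in J^f_0}w(q_j)w([q^j,a])$. The $\lambda$-linear terms survive only for $k=0$, since $(\mf g^f_0|\mf g_{-k})=0$ otherwise, yielding $\delta_{k,0}(\partial+\lambda)w(a)$. The $z$-terms survive only for $k=d$; the key observation here is that on $\mf g_0$ the two complements $[e,\mf g_{-1}]$ and $[f,\mf g_1]$ of $\mf g^f_0=\mf g^e_0$ are both \emph{orthogonal} (by \eqref{20140221:eq4}, $\mf g^f_0\perp[e,\mf g]$ and $\mf g^f_0\perp[f,\mf g]$), so $\pi_{\mf g^f_0}=\pi_{\mf g^e_0}$ on $\mf g_0$ and the two $z$-contributions coincide, assembling into $-\delta_{k,d}zw([s,a]^\sharp)$. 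This gives the first line of \eqref{20140228:eq1}; the second follows by PVA skewsymmetry, and \eqref{20140228:eq1b} is the case $k=0$. Finally $\{{L_0}_\lambda L_0\}$ follows from \eqref{20140228:eq1b} and the left Leibniz rule: each of the two factors of $L_0$ contributes $(\partial+\lambda)w(q_i)$ resp.\ $(\partial+\lambda)w(q^i)$, and the sum collapses to $(\partial+2\lambda)L_0$ with no $\lambda^3$ term, proving \eqref{20140228:eq2} and that $L_0$ is Virasoro of central charge $0$.

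For part (b) I set $a=f$ in \eqref{20140227:eq11}. The adjoint sum vanishes by the skew identity (the full $J^f$-sum is $0$, and its $J^f_0$-part is $0$ termwise because $[q^j,f]=0$ for $q^j\in\mf g^f_0$). The relation $(e|f)=2(x|x)$ (from $(h|x)=(e|f)$ by invariance) turns $-\tfrac{(e|f)}{2}\lambda^3$ into $-(x|x)\lambda^3$, and $w([s,f]^\sharp)=0$ since $[s,f]^\sharp\in\mf g^f_{d-1}$, which is $0$ for $d\geq2$, while for $d=1$ one has $([s,f]|u)=(s|[f,u])=0$ for every $u\in\mf g^e_0$, so the orthogonal projection $[s,f]^\sharp$ vanishes. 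This proves \eqref{20140228:eq3}, and \eqref{20140228:eq3b} follows from \eqref{20140228:eq1} at $a=f$ (both the $\ad$-sum and the $z$-term vanish as above). Part (c) is then almost formal: $\{L_\lambda L\}$ is the sum of the four brackets among $w(f)$ and $L_0$, the mixed ones vanishing by \eqref{20140228:eq3b}, giving \eqref{20140228:eq4}. For \eqref{20140228:eq5} I add \eqref{20140227:eq11} and the first line of \eqref{20140228:eq1}: the two adjoint sums, over $J^f_{\leq-\frac12}$ and over $J^f_0$, merge into $\sum_{j\in J^f}w(q_j)w([q^j,a]^\sharp)$, which is $0$ by the skew identity; the derivative terms combine to $(\partial+\Delta\lambda)w(a)$ uniformly in $k$ (the $\delta_{k,0}$ and $1-\delta_{\Delta,1}$ cases matching); and the two $w([s,a]^\sharp)$ pieces cancel, using $[s,a]^\sharp\in\mf g^f_{d-k}=0$ unless $k=d$. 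The primary-field and CFT-type conclusions are read off directly.

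The main obstacle I anticipate is organizational rather than conceptual: correctly tracking which summands survive the grading---encoded by the Kronecker symbols $\delta_{k,0}$ and $\delta_{k,d}$---and verifying that the various $\mf g^f$-valued projections $(\cdot)^\sharp$ assemble or cancel exactly as claimed. The single genuinely non-formal ingredient is the identification $\pi_{\mf g^f_0}=\pi_{\mf g^e_0}$ on $\mf g_0$; without it the $z$-term in part (a) would not reduce to a single copy of $w([s,a]^\sharp)$, and the corresponding cancellation in \eqref{20140228:eq5} would fail.
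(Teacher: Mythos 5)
Your proposal is correct and follows essentially the same route as the paper's proof: both reduce every bracket to \eqref{20140225:eq1} and \eqref{20140227:eq11} via the Leibniz rules and skewsymmetry, and dispose of the quadratic sums by the symmetry of $\sum_{j}q_j[q^j,a]^\sharp$ in $S^2(\mf g^f)$. The only immaterial deviations are that you obtain \eqref{20140228:eq3b} by setting $a=f$ in \eqref{20140228:eq1} rather than recomputing $\{w(f)_\lambda L_0\}_{z,\rho}$ by the Leibniz rule, and you kill $\sum_{j\in J^f_{\leq-\frac12}}w(q_j)w([q^j,f]^\sharp)$ by comparing with the full $J^f$-sum instead of the paper's dual-bases argument on $\mf g^f_{-\frac12}$.
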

\begin{proof}
By equation \eqref{20140225:eq1} and the Leibniz rule, we have
\begin{equation}\label{20140228:eq6}
\begin{array}{l}
\displaystyle{
\vphantom{\Big(}
\{{L_0}_\lambda w(a)\}_{z,\rho}
=
\sum_{j\in J^f_0}
{\{{w(q^{j})}_{\lambda+\partial} w(a)\}_{z,\rho}}_\to w(q_{j})
} \\
\displaystyle{
\vphantom{\Big(}
=
\sum_{j\in J^f_0}
w(q_{j})w([q^{j},a])
+\sum_{j\in J^f_0}(q^{j}|a)(\lambda+\partial)w(q_{j})
+z\sum_{j\in J^f_0}(s|[q^{j},a])w(q_{j})
\,.}
\end{array}
\end{equation}
%It is easy to see that the first term in the RHS of \eqref{20140228:eq6} is zero.
The second term is non-zero only for $a\in\mf g^f_0$, and in this case 
it is $(\lambda+\partial)w(a)$.
The last term is non-zero only for $k=d$, and in this case it is $-zw([s,a]^\sharp)$.
This proves the first equation in \eqref{20140228:eq1}.
The second equation in \eqref{20140228:eq1} is obtained from the first by skewsymmetry.
For $a\in\mf g^f_0$, we have, by a simple symmetry argument, 
that $\sum_{j\in J^f_0}q_{j}[q^{j},a]=0$, as an element of $S^2(\mf g^f_0)$.
Hence, equations \eqref{20140228:eq1b} are a special case of equations \eqref{20140228:eq1}.
Moreover, equation \eqref{20140228:eq2} follows immediately from \eqref{20140228:eq1b}
and the Leibniz rule.
This proves part (a).

Letting $a=f$ in equation \eqref{20140227:eq11} we get
\begin{equation}\label{20140228:eq7}
\begin{array}{l}
\displaystyle{
\phantom{\Big(}
\{{w(f)}_\lambda{w(f)}\}_{z,\rho}
=
\!\!
\sum_{j\in J^f_{\leq-\frac12}}
\!\!
w(q_{j})w([q^{j},f])
+(\partial+2\lambda)w(f)
-(x|x)\lambda^3
+2z(s|f)\lambda
\,.}
\end{array}
\end{equation}
Here we used the facts that $\frac{(e|f)}{2}=(x|x)$ and $[s,f]^\sharp=0$.
To get equation \eqref{20140228:eq3}
we just observe that the first term in the RHS of \eqref{20140228:eq7} is zero.
Indeed, it is easy to check that $\{q_j\}_{j\in J^f_{-\frac12}}$ and $\{[f,q^j]\}_{j\in J^f_{-\frac12}}$
are dual bases of $\mf g^f_{-\frac12}$ with respect to the
non-degenerate skewsymmetric form $(e|[\cdot\,,\,\cdot])$.
But then a simple symmetry argument shows that
$\sum_{j\in J^f_{\leq-\frac12}}q_{j}[q^{j},f]$,
considered as an element of $S^2(\mf g^f_{-\frac12})$, is zero.
To prove equation \eqref{20140228:eq3b} we use \eqref{20140227:eq11}
and the Leibniz rule:
\begin{equation}\label{20140228:eq8}
\begin{array}{l}
\displaystyle{
\vphantom{\Big(}
\{w(f)_\lambda L_0\}_{z,\rho}
=
\sum_{j\in J^f_0}
\{ w(f)_\lambda w(q_j) \}_{z,\rho} w(q^j)
} \\
\displaystyle{
\vphantom{\Big(}
=
\sum_{j\in J^f_0}
\sum_{i\in J^f_{\leq-\frac12}}
w(q_{i})w([q^{i},q_j]^\sharp)
w(q^j)
\,.}
\end{array}
\end{equation}
Equation \eqref{20140228:eq3b} then follows from \eqref{20140228:eq8}
by the observation that, for $j\in J^f_0$ and $i\in J^f_{\leq-\frac12}$,
we have $[q^{i},q_j]\in\mf g_{\geq\frac12}$, so that $[q^{i},q_j]^\sharp=0$.
This proves part (b).

Equation \eqref{20140228:eq4} is an immediate consequence of equations \eqref{20140228:eq2},
\eqref{20140228:eq3} and \eqref{20140228:eq3b}.
Finally, equation \eqref{20140228:eq5}
follows form equations \eqref{20140227:eq11} and \eqref{20140228:eq1}
and the observation that
$\sum_{j\in J^f}q_{j}[q^{j},a]^\sharp$, viewed as an element of $S^2(\mf g^f)$,
is zero.
\end{proof}
Note that the definition of $L$ in Proposition \ref{20140228:prop}(c)
is compatible with the Virasoro element in \cite{DSKV14}.
The fact that $\mc W$ is generated by $L$ and $\dim(\mf g^f)-1$ primary elements
has been known to physicists for a long time \cite{BFOFW90}.
\begin{remark}\label{20140312:rem2}
By equation \eqref{20140228:eq4} the central charge of the Virasoro element $L$ is $c=-(x|x)$,
which varies with the rescaling of the bilinear form $(\cdot\,|\,\cdot)$.
\end{remark}

%%%
\section{Isomorphism between the Zhu algebra of
\texorpdfstring{$\mc W(\mf g,f)$}{W(g,f)} and
\texorpdfstring{$\mc W^{\text{fin}}(\mf g,f)$}{Wfin(g,f)}
}
\label{sec:7}

Recall that an \emph{energy operator} $H$
on a Poisson vertex algebra $\mc V$
is a diagonalizable operator on $\mc V$,
which is a derivation of the commutative associative product,
and such that
\begin{equation}\label{20140319:eq1}
H\{a_\lambda b\}
=
\{H(a)_\lambda b\}
+\{a_\lambda H(b)\}
-(1+\lambda\partial_\lambda)\{a_\lambda b\}
\,.
\end{equation}
If $a\in\mc V$ is an eigenvector of $H$, we denote by $\Delta(a)$ the corresponding eigenvalue
(or \emph{conformal weight}).
Given a Poisson vertex algebra $\mc V$ with an energy operator $H$,
following \cite[Sec.6]{DSK06} we introduce
the corresponding $H$-\emph{twisted Zhu algebra} $\Zhu_{z}(\mc V)$.
It is a $1$-parameter family of Poisson algebras (parametrized by $z\in\mb F$)
defined as follows.
As a commutative associative algebra,
\begin{equation}\label{20140319:eq2}
\Zhu{}_{z}(\mc V)
=
\mc V/\langle(\partial+z H)\mc V\rangle_{\mc V}
\,,
\end{equation}
where $\langle(\partial+z H)\mc V\rangle_{\mc V}$ denotes the differential ideal of $\mc V$
generated by the elements $\partial a+z H(a)$, where $a\in\mc V$.
The Poisson bracket on $\Zhu_{z}(\mc V)$ is defined by
\begin{equation}\label{20140319:eq3}
\{a,b\}_{z}
=
\{\tilde{a}\,{}_{z\partial_{\epsilon}}\tilde{b}\}_\to\epsilon^{\Delta(a)-1}\big|_{\epsilon=1}
+\langle(\partial+z H)\mc V\rangle_{\mc V}
\,,
\end{equation}
where 
$\tilde{a},\tilde{b}\in\mc V$ are representatives of $a,b\in\Zhu_{z}(\mc V)$.
Formula \eqref{20140319:eq3} is a special case of \cite[Eq.(6.3)]{DSK06}.

Here we compute the Zhu algebra of the classical affine $\mc W$-algebra $\mc W(\mf g,f)$,
with the energy operator $H$ given by the conformal weight
defined in Section \ref{sec:2.2}:
$H(w(p))=\Delta(p)w(p)$, where $\Delta(p)=(1-\delta(p))$, for $p\in\mf g^f$.
As a commutative associative algebra,
$\Zhu_z(\mc W(\mf g,f))=S(w(\mf g^f))$,
and we have the relation
\begin{equation}\label{20140319:eq4}
\partial A
=
-z\Delta(A)A
\,,
\end{equation}
for every eigenvector $A\in\mc W(\mf g,f)$ of $H$.
\begin{theorem}\label{20140319:thm}
The Poisson bracket on $\Zhu_z(\mc W(\mf g,f))$
is given by the following formula
(for $a\in g^f_{-h}$ and $b\in\mf g^f_{-k}$):
\begin{equation}\label{20140319:eq5}
\begin{array}{l}
\displaystyle{
\phantom{\Big(}
\{w(a), w(b)\}_{z}
=
w([a,b])-z(x|[a,b])
-\sum_{t=1}^\infty
\sum_{
-h+1\leq k_t\prec \dots\prec k_1\leq k
}
\sum_{
(\vec{j},\vec{n})\in J_{-\vec{k}}
}
} \\
\displaystyle{
\phantom{\Big(}
\big(
w([b,q^{j_1}_{n_1}]^\sharp)
-z(x|[b,q^{j_{1}}_{n_{1}}])
\big)
\big(
w([q^{n_1+1}_{j_1},q^{j_2}_{n_2}]^\sharp)
-z(x|[q^{n_1+1}_{j_1},q^{j_2}_{n_2}])
\big)
\dots } \\
\displaystyle{
\phantom{\Big(}
\dots
\big(
w([q^{n_{t-1}+1}_{j_{t-1}},q^{j_t}_{n_t}]^\sharp)
-z(x|[q^{n_{t-1}+1}_{j_{t-1}},q^{j_t}_{n_t}])
\big)
\big(
w([q^{n_t+1}_{j_t},a]^\sharp)
-z(x|[q^{n_t+1}_{j_t},a])
\big)
\,.}
\end{array}
\end{equation}
\end{theorem}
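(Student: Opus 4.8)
The plan is to obtain \eqref{20140319:eq5} by feeding the explicit PVA $\lambda$-bracket of Theorem \ref{20140304:thm} into the definition \eqref{20140319:eq3} of the $H$-twisted Zhu bracket. Since the right-hand side of \eqref{20140319:eq5} contains no trace of the element $s$, I would apply the Zhu construction to $\mc W(\mf g,f)$ equipped with its \emph{undeformed} PVA structure, i.e. to formula \eqref{20140304:eq4} with the $s$-dependent terms dropped, so that schematically
\[
\{w(a)_\lambda w(b)\}_{0,\rho} = w([a,b]) + (a|b)\lambda - \sum_{t\ge1}\sum \prod_i \big(w([P_i,Q_i]^\sharp) - (P_i|Q_i)(\lambda+\partial)\big),
\]
the last factor of each product carrying $\lambda$ rather than $\lambda+\partial$; all the $z$'s appearing in \eqref{20140319:eq5} are then the Zhu parameter. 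By \eqref{20140319:eq2} the underlying commutative algebra is $S(w(\mf g^f))$, and I would carry out the whole computation modulo the differential ideal $\langle(\partial+zH)\mc W\rangle$, inside which \eqref{20140319:eq4} lets me replace the derivation $\partial$ by $-zH$ on conformal-weight-homogeneous elements (this replacement is legitimate because the ideal is differential and $H$-compatible).

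The engine of the argument is a single conversion identity. Applying \eqref{20140319:eq3} replaces the formal variable $\lambda$ by $z\partial_\epsilon$ acting on $\epsilon^{\Delta(a)-1}$; combined with $\partial\mapsto -zH$, every occurrence of $\lambda+\partial$ becomes $z(\partial_\epsilon - H)$. The bridge to \eqref{20140319:eq5} is the invariance identity $(x|[P,Q]) = \delta(P)(P|Q)$ together with $\delta = 1-\Delta$ (eigenvalue $=1-$ conformal weight). At the top level $(a|b)\lambda$ produces $(a|b)\,z(\Delta(a)-1) = -z\delta(a)(a|b) = -z(x|[a,b])$, which is the first line of \eqref{20140319:eq5}; and I expect each factor to convert locally, namely $w([P_i,Q_i]^\sharp) - (P_i|Q_i)(\lambda+\partial) \mapsto w([P_i,Q_i]^\sharp) - z(x|[P_i,Q_i])$, turning the product in \eqref{20140304:eq4} into the product in \eqref{20140319:eq5}.

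The hard part will be that this conversion is \emph{not} literally local: the $\partial$ inside the $i$-th factor is a derivation acting on all factors to its right, while the substituted $z\partial_\epsilon$ lowers the exponent of $\epsilon^{\Delta(a)-1}$, so a single factor feels both the conformal weights of its neighbours and the current power of $\epsilon$. I would resolve this by induction on the number $t$ of factors, expanding the product over the $2^{t+1}$ choices of which factors contribute their derivative part; acting with $z(\partial_\epsilon-H)$ repeatedly on $\epsilon^{\Delta(a)-1}$ gives, for each choice, a coefficient built from a falling factorial in $\Delta(a)-1$ and the conformal weights $\Delta(A_j)$ of the passive factors. The key point is that these coefficients telescope to $\prod_{i\ \mathrm{active}}\delta(P_i)$: because the brackets are chained (consecutive entries have opposite-signed eigenvalues whenever the pairings are nonzero, and $\Delta(A_i)$ equals a difference $k_i-k_{i+1}$ of consecutive levels), the weight contributions cancel in pairs and each active factor contributes exactly $\delta(P_i)$. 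The case $t=1$ already shows the mechanism: the ``both active'' coefficient $(\Delta(a)-1)(\Delta(a)-2)$ equals $\delta(P_0)\delta(P_1)$ precisely by the eigenvalue constraints forced by $(P_0|Q_0),(P_1|Q_1)\neq0$. Once this telescoping is established in general, the product of converted factors is exactly \eqref{20140319:eq5}, and summing over $t$ finishes the proof.
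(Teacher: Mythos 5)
Your proposal is correct and follows essentially the same route as the paper: substitute \eqref{20140304:eq4} (with the $s$-dependent terms dropped) into \eqref{20140319:eq3}, use the relations \eqref{20140319:eq4} to trade $\partial$ for $-z\Delta$, and telescope the conformal weights along the chain via $\Delta([q^{n_i+1}_{j_i},q^{j_{i+1}}_{n_{i+1}}])=k_i-k_{i+1}$, so that each factor picks up exactly $-z(x|[\cdot\,,\,\cdot])$. The paper packages your active/passive expansion and telescoping into a single inductive statement (Lemma \ref{20140319:lem}), asserting that $(\partial+z\partial_\epsilon)$ acts on each tail product as the scalar $z(\alpha-\sum\Delta)$ --- which is precisely your conversion identity, with the observation that $\Delta([a_i,b_i])=1$ whenever $(a_i|b_i)\neq 0$ unifying the active and passive cases.
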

\begin{lemma}\label{20140319:lem}
For $s\geq0$,
$a_1,b_1,\dots,a_s,b_s\in\mf g$ eigenvectors of $\ad x$,
$C\in\mc W(\mf g,f)$ eigenvector of $H$,
and $\alpha\in\frac12\mb Z$,
let
$$
A(\epsilon)
=
\big(
w([a_1,b_1]^\sharp)\!-\!(a_1|b_1)(\partial\!+\!z\partial_\epsilon)
\big)
\dots
\big(
w([a_s,b_s]^\sharp)\!-\!(a_s|b_s)(\partial\!+\!z\partial_\epsilon)
\big)
C\epsilon^\alpha
\,.
$$
Then, modulo the relations \eqref{20140319:eq4}, we have
\begin{equation}\label{20140319:eq8}
(\partial+\!z\partial_\epsilon)A(\epsilon)\big|_{\epsilon=1}
=
z(\alpha-\Delta([a_1,b_1])-\dots-\Delta([a_s,b_s])-\Delta(C))
A(\epsilon)\big|_{\epsilon=1}
\,.
\end{equation}
\end{lemma}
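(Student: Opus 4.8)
The plan is to avoid expanding the product defining $A(\epsilon)$ and instead exhibit $A(\epsilon)$ as a homogeneous vector for a suitably enlarged energy operator; the relations \eqref{20140319:eq4} will then finish the proof in one line. First I would introduce the operator $\mc H:=H-\epsilon\partial_\epsilon$ acting on $\mc W\otimes\mb F[\epsilon^{\frac12},\epsilon^{-\frac12}]$, i.e. I assign to $\epsilon$ conformal weight $-1$. The computational heart is the pair of commutation relations
\begin{equation*}
[\mc H,\,\partial+z\partial_\epsilon]=\partial+z\partial_\epsilon,
\qquad
[\mc H,\,w([a_i,b_i]^\sharp)\,\cdot\,]=\Delta([a_i,b_i])\,w([a_i,b_i]^\sharp)\,\cdot\,,
\end{equation*}
where the second is taken with the operator of multiplication by $w([a_i,b_i]^\sharp)$. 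The first follows from $[H,\partial]=\partial$ together with $[\epsilon\partial_\epsilon,\partial_\epsilon]=-\partial_\epsilon$; the second from the facts that $H$ is a derivation and that $w([a_i,b_i]^\sharp)$ has conformal weight $\Delta([a_i,b_i])=1-\delta([a_i,b_i])$. The one point requiring attention is that each factor $B_i:=w([a_i,b_i]^\sharp)-(a_i|b_i)(\partial+z\partial_\epsilon)$ is $\mc H$-homogeneous of weight exactly $\Delta([a_i,b_i])$: the multiplication part has $\mc H$-weight $\Delta([a_i,b_i])$, the $(\partial+z\partial_\epsilon)$-part has $\mc H$-weight $1$, and these agree because whenever $(a_i|b_i)\neq0$ the eigenvectors $a_i,b_i$ have opposite $\ad x$-eigenvalues, forcing $[a_i,b_i]\in\mf g_0$ and hence $\Delta([a_i,b_i])=1$. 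Thus $[\mc H,B_i]=\Delta([a_i,b_i])B_i$.

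Since $C\epsilon^\alpha$ is $\mc H$-homogeneous of weight $\Delta(C)-\alpha$, applying the relations $[\mc H,B_i]=\Delta([a_i,b_i])B_i$ successively shows that $A(\epsilon)=B_1\cdots B_s(C\epsilon^\alpha)$ is $\mc H$-homogeneous of weight
\begin{equation*}
c=\Delta(C)-\alpha+\sum_{i=1}^s\Delta([a_i,b_i]),
\end{equation*}
that is, $(H-\epsilon\partial_\epsilon)A(\epsilon)=c\,A(\epsilon)$. I emphasize that this is an exact identity, using none of the relations \eqref{20140319:eq4}.

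It then remains to translate this homogeneity into \eqref{20140319:eq8}. Evaluating the identity $\epsilon\partial_\epsilon A(\epsilon)=(H-c)A(\epsilon)$ at $\epsilon=1$, where $\epsilon\partial_\epsilon$ specializes to $\partial_\epsilon$ and $H$ commutes with setting $\epsilon=1$, gives $\partial_\epsilon A(\epsilon)|_{\epsilon=1}=H\!\left(A(\epsilon)|_{\epsilon=1}\right)-c\,A(\epsilon)|_{\epsilon=1}$. On the other hand, by \eqref{20140319:eq4} one may replace $\partial$ by $-zH$ on any element of $\mc W$ (decomposing into $H$-eigenvectors), so $\partial A(\epsilon)|_{\epsilon=1}\equiv -zH\!\left(A(\epsilon)|_{\epsilon=1}\right)$ modulo the relations. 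Adding $z$ times the first identity to the second, the two $H$-terms cancel and we obtain
\begin{equation*}
(\partial+z\partial_\epsilon)A(\epsilon)|_{\epsilon=1}\equiv -z\,c\,A(\epsilon)|_{\epsilon=1}
=z\Big(\alpha-\Delta(C)-\sum_{i=1}^s\Delta([a_i,b_i])\Big)A(\epsilon)|_{\epsilon=1},
\end{equation*}
which is exactly \eqref{20140319:eq8}. Note that this argument never divides by $z$, hence is valid for all $z\in\mb F$, including $z=0$.

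I expect the main obstacle to be purely a matter of bookkeeping: checking that each $B_i$ is genuinely $\mc H$-homogeneous, which hinges on the compatibility $(a_i|b_i)\neq0\Rightarrow\Delta([a_i,b_i])=1$, and being careful that $\epsilon$ must carry weight $-1$ so that $\partial_\epsilon$ and $\partial$ shift the $\mc H$-weight in the same way. A brute-force induction on $s$, expanding $A(\epsilon)$ into powers of $\epsilon$, also works, but it is error-prone precisely because the operator $\partial+z\partial_\epsilon$ and the specialization $\epsilon=1$ do not commute (so the naive $\partial^2$-terms must be tracked with care); the homogeneity argument sidesteps this difficulty entirely.
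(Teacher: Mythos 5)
Your proof is correct, and it takes a genuinely different route from the paper's. The paper verifies the cases $s=0$ and $s=1$ by brute force (expanding $(\partial+z\partial_\epsilon)\big(w([a_1,b_1]^\sharp)-(a_1|b_1)(\partial+z\partial_\epsilon)\big)C\epsilon^\alpha$, collecting the coefficients of $z$ and $z^2$, and matching them against the claimed right-hand side) and then closes the argument by induction on $s$. You instead package the whole computation as a single homogeneity statement: the operator $\mc H=H-\epsilon\partial_\epsilon$ satisfies $[\mc H,\partial+z\partial_\epsilon]=\partial+z\partial_\epsilon$ and $[\mc H,B_i]=\Delta([a_i,b_i])B_i$, so $A(\epsilon)$ is an exact $\mc H$-eigenvector of weight $c=\Delta(C)-\alpha+\sum_i\Delta([a_i,b_i])$, and \eqref{20140319:eq8} drops out upon evaluating at $\epsilon=1$ and replacing $\partial$ by $-zH$ via \eqref{20140319:eq4}. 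Both arguments hinge on exactly the same arithmetic fact, namely that invariance of $(\cdot\,|\,\cdot)$ forces $\Delta([a_i,b_i])=1$ whenever $(a_i|b_i)\neq0$, so that each factor $B_i$ is homogeneous of a single weight; in the paper this appears as the remark needed to reconcile the $z^2$-coefficients in \eqref{20140319:eq10} and \eqref{20140319:eq11}, in your version it is the statement that the two summands of $B_i$ have the same $\mc H$-weight. (As in the paper, when $[a_i,b_i]=0$ but $(a_i|b_i)\neq0$ one must read $\Delta([a_i,b_i])$ as $1-\delta(a_i)-\delta(b_i)=1$, determined by the degrees rather than by the vanishing element; this convention is shared by both proofs and by the application in Theorem \ref{20140319:thm}.) What your approach buys is the elimination of the induction and of the error-prone tracking of $\partial^2$- and cross-terms, together with a conceptual explanation of why the eigenvalue is additive in the weights; what the paper's approach buys is that it stays entirely within elementary manipulations of the relation \eqref{20140319:eq4} without introducing the auxiliary graded module $\mc W\otimes\mb F[\epsilon^{\pm\frac12}]$.
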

\begin{proof}
For $s=0$ equation \eqref{20140319:eq8} reduces to
\begin{equation}\label{20140319:eq9}
(\partial+z\partial_\epsilon)
C\epsilon^\alpha\big|_{\epsilon=1}
=
z(\alpha-\Delta(C))
C\epsilon^\alpha\big|_{\epsilon=1}
\,,
\end{equation}
which is clear by the relation \eqref{20140319:eq4}.
Next, we prove \eqref{20140319:eq8} for $s=1$. 
We have
\begin{equation}\label{20140319:eq10}
\begin{array}{l}
\displaystyle{
\vphantom{\Big(}
(\partial+z\partial_\epsilon)
\big(
w([a_1,b_1]^\sharp)-(a_1|b_1)(\partial+z\partial_\epsilon)
\big)
C\epsilon^\alpha\big|_{\epsilon=1}
} \\
\displaystyle{
\vphantom{\Big(}
=
\partial w([a_1,b_1]^\sharp)C
- (a_1|b_1) \partial^2 C
- (a_1|b_1) \partial C z \partial_\epsilon  \epsilon^\alpha\big|_{\epsilon=1}
} \\
\displaystyle{
\vphantom{\Big(}
+w([a_1,b_1]^\sharp)C z\partial_\epsilon \epsilon^\alpha\big|_{\epsilon=1}
- (a_1|b_1) \partial C z\partial_\epsilon \epsilon^\alpha\big|_{\epsilon=1}
-(a_1|b_1) z^2 C \partial_\epsilon^2 \epsilon^\alpha\big|_{\epsilon=1}
} \\
\displaystyle{
\vphantom{\Big(}
=
z(\alpha-\Delta([a_1,b_1])-\Delta(C)) w([a_1,b_1]^\sharp)C
} \\
\displaystyle{
\vphantom{\Big(}
-z^2
\big(
\alpha(\alpha-1)-2\alpha\Delta(C)+\Delta(C)(\Delta(C)+1)
\big)
(a_1|b_1) C
\,.}
\end{array}
\end{equation}
On the other hand, we have
\begin{equation}\label{20140319:eq11}
\begin{array}{l}
\displaystyle{
\vphantom{\Big(}
z(\alpha-\Delta([a_1,b_1])-\Delta(C))
\big(
w([a_1,b_1]^\sharp)-(a_1|b_1)(\partial+z\partial_\epsilon)
\big)
C\epsilon^\alpha\big|_{\epsilon=1}
} \\
\displaystyle{
\vphantom{\Big(}
=
z(\alpha-\Delta([a_1,b_1])-\Delta(C))
w([a_1,b_1]^\sharp)C
} \\
\displaystyle{
\vphantom{\Big(}
-
z^2(\alpha-\Delta([a_1,b_1])-\Delta(C))(\alpha-\Delta(C))
(a_1|b_1) C
\,.}
\end{array}
\end{equation}
Note that, if $(a_1|b_1)\neq0$, then $\Delta([a_1,b_1])=1$.
Hence, comparing equations \eqref{20140319:eq10} and \eqref{20140319:eq11},
we get that
\begin{equation}\label{20140319:eq12}
\begin{array}{l}
\displaystyle{
\vphantom{\Big(}
(\partial+z\partial_\epsilon)
\big(
w([a_1,b_1]^\sharp)-(a_1|b_1)(\partial+z\partial_\epsilon)
\big)
C\epsilon^\alpha\big|_{\epsilon=1}
} \\
\displaystyle{
\vphantom{\Big(}
=
z(\alpha-\Delta([a_1,b_1])-\Delta(C))
\big(
w([a_1,b_1]^\sharp)-(a_1|b_1)(\partial+z\partial_\epsilon)
\big)
C\epsilon^\alpha\big|_{\epsilon=1}
\,,}
\end{array}
\end{equation}
which is the same as \eqref{20140319:eq8} with $s=1$.
The general formula \eqref{20140319:eq8} for arbitrary $s\geq1$
follows by equations \eqref{20140319:eq9} and \eqref{20140319:eq12}
and an easy induction.
\end{proof}
\begin{proof}[{Proof of Theorem \ref{20140319:thm}}]
According to equation \eqref{20140304:eq4} and formula \eqref{20140319:eq3},
for $a\in\mf g^f_{-h}$ and $b\in\mf g^f_{-k}$
the Poisson bracket $\{w(a),w(b)\}_z$ is given by:
\begin{equation}\label{20140304:eq4-z}
\begin{array}{l}
\displaystyle{
\phantom{\Big(}
w([a,b])+z(a|b)\partial_\epsilon\epsilon^{h}\Big|_{\epsilon=1}
-\sum_{t=1}^\infty
\sum_{
-h+1\leq k_t\prec \dots\prec k_1\leq k
}
\sum_{
(\vec{j},\vec{n})\in J_{-\vec{k}}
}
} \\
\displaystyle{
\phantom{\Big(}
\big(
w([b,q^{j_1}_{n_1}]^\sharp)
-(b|q^{j_1}_{n_1})(z\partial_\epsilon+\partial)
\big)
\big(
w([q^{n_1+1}_{j_1},q^{j_2}_{n_2}]^\sharp)
-(q^{n_1+1}_{j_1}|q^{j_2}_{n_2}) (z\partial_\epsilon+\partial)
\big)
\dots } \\
\displaystyle{
\phantom{\Big(}
\dots
\big(
w([q^{n_t+1}_{j_t},a]^\sharp)
-(q^{n_t+1}_{j_t}|a) z\partial_\epsilon
\big)
\epsilon^{h}\Big|_{\epsilon=1}
\,,}
\end{array}
\end{equation}
modulo the relations \eqref{20140319:eq4}.
We clearly have
$$
z(a|b)\partial_\epsilon\epsilon^{h}\Big|_{\epsilon=1}
=
z(a|b)h
=
-z(x|[a,b])
\,.
$$
By Lemma \ref{20140319:lem},
in the first factor of \eqref{20140304:eq4-z} we can replace
$(b|q^{j_1}_{n_1})(z\partial_\epsilon+\partial)$
by
\begin{equation}\label{20140319:eq13}
(b|q^{j_1}_{n_1})
\big(
h
-\Delta([q^{n_1+1}_{j_1},q^{j_2}_{n_2}])
-\dots
-\Delta([q^{n_{t-1}+1}_{j_{t-1}},q^{j_t}_{n_t}])
-\Delta([q^{n_t+1}_{j_t},a])
\big)
\,.
\end{equation}
But, for $(j_j,n_i)\in J_{-k_i}$, we have
$$
\Delta([q^{n_i+1}_{j_i},q^{j_{i+1}}_{n_{i+1}}])
=
k_i-k_{i+1}
\,.
$$
Hence \eqref{20140319:eq13} becomes
$$
(b|q^{j_1}_{n_1})
\big(
h
-(k_1-k_2)
-\dots
-(k_{t-1}-k_t)
-(h+k_t)
\big)
=
-k_1(b|q^{j_1}_{n_1})
=
(x|[b,q^{j_1}_{n_1}])
\,.
$$
Similarly for all the other factors.
Equation \eqref{20140319:eq5} follows.
\end{proof}
For $z=0$ 
the Zhu algebra reduces to 
$\Zhu_{z=0}\mc W(\mf g,f)=\mc W(\mf g,f)/\langle\partial\mc W(\mf g,f)\rangle$,
and formula \eqref{20140319:eq5} reduces to formula \eqref{20140319:eq7},
once we identify $p\in\mf g^f$ with $w(p)\in\mc W(\mf g,f)$.
Hence the Poisson algebras $\Zhu_{z=0}\mc W(\mf g,f)$ and $\mc W^{\text{fin}}(\mf g,f)$ 
are isomorphic.
More generally,
it is immediate to see that formula \eqref{20140319:eq5}
is the same as formula \eqref{20140319:eq6}
with $z$ replaced by $-z$.
Hence, as a consequence of Theorem \ref{20140318:rem}
and Corollary \ref{20140318:remb},
we get the following
\begin{corollary}\label{20140319:cor}
The Poisson algebras 
$\Zhu_z(\mc W(\mf g,f))$ and $\mc W^{\text{fin}}(\mf g,f)$
are isomorphic for every value of $z\in\mb F$.
In fact, the Poisson bracket \eqref{20140319:eq5}
is unchanged if we replace $-zx$ by $\frac{z^2}{4}e$,
and we thus have an explicit isomorphism 
$\Zhu_{z=0}(\mc W(\mf g,f))\to\Zhu_z(\mc W(\mf g,f))$
given by $w(q)\mapsto w(q)+\frac{z^2}{4}(q|e)$, for $q\in\mf g^f$.
\end{corollary}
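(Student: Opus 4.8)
The plan is to obtain the statement as a direct corollary of the two closed-form bracket formulas already at our disposal, without computing any new $\lambda$-bracket or Poisson bracket. The two inputs are formula \eqref{20140319:eq5} for $\Zhu_z(\mc W(\mf g,f))$ (Theorem \ref{20140319:thm}) and formula \eqref{20140319:eq6} for the $z$-twisted classical finite $\mc W$-algebra $\mc W_z^{\text{fin}}(\mf g,f)$ (Theorem \ref{20140318:rem}); the final ingredient is the explicit Poisson isomorphism $\phi^*$ of Corollary \ref{20140318:remb}.

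First I would treat the base point $z=0$. Under the identification $\Zhu_{z=0}(\mc W(\mf g,f))=\mc W(\mf g,f)/\langle\partial\mc W(\mf g,f)\rangle=S(w(\mf g^f))$ with $w(p)\leftrightarrow p$, formula \eqref{20140319:eq5} specializes to formula \eqref{20140319:eq7}, i.e.\ to the Poisson bracket of $\mc W^{\text{fin}}(\mf g,f)$ from Theorem \ref{20130521:prop}. This already yields the Poisson algebra isomorphism $\Zhu_{z=0}(\mc W(\mf g,f))\simeq\mc W^{\text{fin}}(\mf g,f)$.

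For arbitrary $z$, the crucial point to establish is that, under the same identification $w(p)\leftrightarrow p$, formula \eqref{20140319:eq5} coincides with formula \eqref{20140319:eq6} after the substitution $z\mapsto -z$. This is a term-by-term comparison: the two nested expressions have the same combinatorial shape (the off-chain terms in the free summation of \eqref{20140319:eq6} vanish, matching the $\prec$-constrained summation of \eqref{20140319:eq5}), the leading terms $[a,b]\mp z(x|[a,b])$ agree, and each factor carries the linear-in-$z$ correction $(x|[\cdot\,,\,\cdot])$ with the opposite sign in \eqref{20140319:eq5} relative to \eqref{20140319:eq6}. The only care needed is that \eqref{20140319:eq5} is written in the ``reversed'' order with an overall minus sign, so the match is read off after invoking the manifest skewsymmetry $\{a,b\}=-\{b,a\}$ of the bracket. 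Consequently the identity map on generators is a Poisson algebra isomorphism $\Zhu_z(\mc W(\mf g,f))\simeq\mc W_{-z}^{\text{fin}}(\mf g,f)$.

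It then remains only to chain these identifications with Corollary \ref{20140318:remb}, applied with $z$ replaced by $-z$. Since $(-z)^2=z^2$, that corollary provides a Poisson isomorphism $\phi^*:\,\mc W^{\text{fin}}(\mf g,f)\stackrel{\sim}{\longrightarrow}\mc W_{-z}^{\text{fin}}(\mf g,f)$ given on generators by $w(q)\mapsto w(q)+\frac{z^2}{4}(q|e)$; equivalently, the bracket \eqref{20140319:eq6} at parameter $-z$ (hence the bracket \eqref{20140319:eq5}) is unchanged when $-zx$ is replaced by $\frac{z^2}{4}e$. Composing $\Zhu_{z=0}(\mc W(\mf g,f))\simeq\mc W^{\text{fin}}(\mf g,f)\stackrel{\sim}{\longrightarrow}\mc W_{-z}^{\text{fin}}(\mf g,f)\simeq\Zhu_z(\mc W(\mf g,f))$ gives exactly the asserted isomorphism. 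I expect the sign and reindexing bookkeeping in the comparison of \eqref{20140319:eq5} with \eqref{20140319:eq6} to be the only delicate point; once the $\prec$-chains are matched and the overall skewsymmetry is accounted for, everything else is mechanical, and the parity invariance $(-z)^2=z^2$ is precisely what makes the final explicit isomorphism independent of the sign of $z$.
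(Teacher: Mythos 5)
Your proposal is correct and follows essentially the same route as the paper: specialize \eqref{20140319:eq5} at $z=0$ to recover \eqref{20140319:eq7}, observe that for general $z$ formula \eqref{20140319:eq5} is \eqref{20140319:eq6} with $z$ replaced by $-z$, and then invoke Corollary \ref{20140318:remb} together with $(-z)^2=z^2$ to get the explicit map $w(q)\mapsto w(q)+\tfrac{z^2}{4}(q|e)$. The extra bookkeeping you flag (matching the $\prec$-chains, since the off-chain $\sharp$-terms vanish for weight reasons, and accounting for the reversed order via skewsymmetry) is exactly what the paper leaves implicit in its ``it is immediate to see'' step.
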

\begin{remark}\label{20140319:rem2}
The ``quantum'' version of Corollary \ref{20140319:cor}
was established in \cite{DSK06}:
$\Zhu_{z}W(\mf g,f)\simeq W^{\text{fin}}(\mf g,f)$ for $z\neq0$.
(As before, $W$, as opposed to $\mc W$, refers to ``quantum'' $W$-algebras.)
Taking $z=0$, Corollary \eqref{20140319:cor} shows, in particular, that
the Poisson algebras $\mc W^{\text{fin}}(\mf g,f)$,
$\mc W(\mf g,f)/\langle\partial\mc W(\mf g,f)\rangle$,
and $W(\mf g,f)/\langle\partial W(\mf g,f)\rangle$,
are all isomorphic
(the last isomorphism is proved in \cite[Sec.6]{DSK06}),
and the quantum finite $W$-algebra $W^{\text{fin}}(\mf g,f)$ is their quantization.
Note that in \cite{DSK06} we use the cohomological definition of
classical and quantum $W$-algebras.
The equivalence of these definitions to the Hamiltonian reduction definitions
was established in the appendix of \cite{DSK06} for the finite quantum $W$-algebra,
and in \cite{Suh13} for the classical ones.
\end{remark}

By Corollary \ref{20140319:cor},
we can view the classical finite $\mc W$-algebra $\mc W^{\text{fin}}(\mf g,f)$
as the Zhu-algebra of the classical $\mc W$-algebra $\mc W(\mf g,f)$ at $z=0$.
It follows that
$\mc W^{\text{fin}}(\mf g,f)=\mc W/\mc W\partial\mc W$,
can be obtained by classical Hamiltonian reduction as in the affine case:
$$
\mc W^{\text{fin}}(\mf g,f)
=\big\{g\in S(\mf g_{\leq\frac12})\,\big|\,\rho((\ad a)(g))=0\,\text{ for all }a\in\mf g_{\geq\frac12}\}
\,,
$$
where $\rho:\, S(\mf g)\twoheadrightarrow S(\mf g_{\leq\frac12})$
is the algebra homomorphism defined on generators by \eqref{rho}.
The analogue of Corollary \ref{20140221:cor} holds in this case as well:
\begin{corollary}\label{20140311:cor}
For every $q\in\mf g^f$ there exists a unique
element $w=w(q)\in\mc W^{\text{fin}}(\mf g,f)$ 
of the form $w=q+r$, where 
$r$ lies in the ideal of $S(\mf g_{\leq\frac12})$ generated by $[e,\mf g_{\leq-\frac12}]$,
and it is homogeneous with respect to conformal weight
provided that $q$ is an $\ad x$-eigenvector.
Consequently, $\mc W^{\text{fin}}(\mf g,f)$ coincides with the algebra of differential polynomials in the variables
$w(q_j)$, where $\{q_j\}$ is a basis of $\mf g^f$.
\end{corollary}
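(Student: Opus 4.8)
The plan is to run the proof of Corollary~\ref{20140221:cor} verbatim in the commutative (finite) setting, using the Hamiltonian-reduction description of $\mc W^{\text{fin}}(\mf g,f)$ inside $S(\mf g_{\leq\frac12})$ recalled just above the statement. Concretely, I would replace the algebra of differential polynomials $\mc V(\mf g_{\leq\frac12})$ by the symmetric algebra $S(\mf g_{\leq\frac12})$, the differential ideal $\langle[e,\mf g_{\leq-\frac12}]\rangle_{\mc V(\mf g_{\leq\frac12})}$ by the ordinary ideal $I:=\langle[e,\mf g_{\leq-\frac12}]\rangle_{S(\mf g_{\leq\frac12})}$, and the condition $a\,{}^\rho_\lambda\,g=0$ by its $\partial=\lambda=0$ specialization $\rho((\ad a)(g))=0$. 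Since none of the steps of the affine proof use anything beyond the grading, the decomposition, and a Structure Theorem, the argument transcribes once those three ingredients are in place.

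First I would record the finite analogue of \eqref{20140221:eq8}. From the decomposition \eqref{20140221:eq7}, $\mf g_{\leq\frac12}=\mf g^f\oplus[e,\mf g_{\leq-\frac12}]$, one gets by the same elementary computation (with $\mb F[\partial]$ removed) the direct sum of subspaces \[ S(\mf g_{\leq\frac12})=S(\mf g^f)\oplus I, \] and I would write $\pi\colon S(\mf g_{\leq\frac12})\twoheadrightarrow S(\mf g^f)$ for the projection along $I$. This decomposition is homogeneous for the conformal-weight grading of Section~\ref{sec:2.2}, which by Remark~\ref{20140310:rem2} makes $\mc W^{\text{fin}}(\mf g,f)$ itself a graded algebra.

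The one genuine input is the finite analogue of Theorem~\ref{daniele2}: for each $\ad x$-eigenvector $q\in\mf g^f_{1-\Delta}$ there is an element of $\mc W^{\text{fin}}\{\Delta\}$ of the form $q+g$ with $g$ a sum of monomials of conformal weight $\Delta$ and polynomial degree $>1$, and $\mc W^{\text{fin}}(\mf g,f)$ is the polynomial algebra in any collection of such elements. I would obtain this most economically from Corollary~\ref{20140319:cor}, which identifies $\mc W^{\text{fin}}(\mf g,f)$ with $\Zhu_{z=0}\mc W(\mf g,f)=\mc W/\langle\partial\mc W\rangle$: the quotient of the differential polynomial algebra $\mc W$ in the variables $w(q_j)$ (Corollary~\ref{20140221:cor}) by $\langle\partial\mc W\rangle$ is the polynomial algebra in the images of the $w(q_j)$, and setting $\partial=0$ carries $w(q_j)=q_j+\widetilde{r(q_j)}$ to an element of the form $q_j+(\text{element of }I)$; alternatively one may invoke the Kazhdan-graded picture of Remark~\ref{20140310:rem2} and \cite{GG02,Pre02}. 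With this at hand, the passage to the normalized shape $w(q)=q+r$ with $r\in I$ is the verbatim induction on $\Delta$ from Corollary~\ref{20140221:cor}: decompose $g=a+b$ with $a\in S(\mf g^f)$, $b\in I$, rewrite $a$ as a polynomial in the strictly lower-weight $q_j$'s, replace each $q_j$ by the already-constructed $w_j=w(q_j)$ (which differs from $q_j$ by an element of $I$), and subtract.

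Finally, uniqueness and the polynomial-algebra assertion both follow from the single identity \[ \mc W^{\text{fin}}(\mf g,f)\cap I=0, \] the finite analogue of \eqref{20140221:eq12}: any element of $\mc W^{\text{fin}}$ is a polynomial in the $w_j$, and if it lies in $I$ then applying $\pi$ annihilates it and forces all its coefficients to vanish. This yields the uniqueness of $w(q)=q+r$ and shows that $\pi|_{\mc W^{\text{fin}}}\colon\mc W^{\text{fin}}\stackrel{\sim}{\to}S(\mf g^f)$, $w_j\mapsto q_j$, is an isomorphism, so that $\mc W^{\text{fin}}(\mf g,f)$ is the polynomial algebra in the $w(q_j)$. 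I expect the only real obstacle to be the finite Structure Theorem of the third paragraph; once it is supplied---most cleanly through the Zhu identification of Corollary~\ref{20140319:cor}---the decomposition, the induction, and the uniqueness argument are purely formal transcriptions of the affine proof.
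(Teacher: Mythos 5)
Your proposal is correct and follows essentially the route the paper intends: the paper offers no separate proof, merely asserting that "the analogue of Corollary \ref{20140221:cor} holds," after having just established the identification $\mc W^{\text{fin}}(\mf g,f)\simeq\Zhu_{z=0}\mc W(\mf g,f)=\mc W/\langle\partial\mc W\rangle$ and the finite Hamiltonian-reduction description inside $S(\mf g_{\leq\frac12})$. Your transcription of the affine argument (the decomposition $S(\mf g_{\leq\frac12})=S(\mf g^f)\oplus I$, the finite Structure Theorem via the Zhu identification or the Kazhdan-graded picture, the induction on conformal weight, and uniqueness via $\mc W^{\text{fin}}\cap I=0$) is exactly the filling-in the paper leaves to the reader.
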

\begin{remark}\label{20140311:rem}
The canonical quotient map $S(\mf g)\to S(\mf g)/\langle m-(f|m)\,|\,m\in\mf g_{\geq\frac12}\rangle$
induces, for a principal nilpotent element $f_{pr}\in\mf g$,
an isomorphism \cite{Kos78}
$$
\phi:\,S(\mf g)^{\mf g}
\stackrel{\sim}{\longrightarrow}
\big(S(\mf g)/\langle m-(f_{pr}|m)\,|\,m\in\mf g_{\geq\frac12}\rangle\big)^{\ad\mf n}=\mc W^{\text{fin}}(\mf g,f_{pr})
\,.
$$
Recall that $\mf g^{f_{pr}}$ has a basis $\{q_j\}_{j=1}^\ell$ consisting of $\ad x$-eigenvectors
with eigenvalues $m_1=1<m_2<\dots<m_\ell$, where the $m_i$'s are the exponents of $\mf g$.
(Only in the case of $\mf g$ of type $D_{2n}$ two of the exponents are equal, both being $n$.)
Hence, $\{\phi^{-1}(w(q_j))\}_{j=1}^\ell$ form a canonical (up to a scalar factor for each basis element)
set of generators of the algebra $S(\mf g)^{\mf g}$
(with the mentioned above exception of $D_{2n}$).
\end{remark}

%%%
\section{The generalized Miura map for classical 
\texorpdfstring{$\mc W$}{W}-algebras}
\label{sec:daniele}

Consider the affine PVA $\mc V(\mf g)$ with $\lambda$-bracket 
\eqref{lambda} with $z=0$.
We denote by $\{\cdot\,_{\lambda}\,\cdot\}^0$ the restriction of this $\lambda$-bracket
to the PVA subalgebra $\mc V(\mf g_{\leq0})=S(\mb F[\partial]\mf g_{\leq0})$.
Furthermore, let $\mc F(\mf g_{\frac12})$ be 
the algebra of differential polynomials $S(\mb F[\partial]\mf g_{\frac12})$,
endowed with the PVA $\lambda$-bracket defined, on generators, by:
\begin{equation}\label{20140318_dan:eq1}
\{a_{\lambda}b\}^{ne}=-(f|[a,b])=:\langle a|b\rangle
\,,
\qquad\text{for all }a,b\in\mf g_{\frac12}\,.
\end{equation}
We then consider the tensor product of PVA's
$$
\mc V=\mc V(\mf g_{\leq0})\otimes\mc F(\mf g_{\frac12})
\,.
$$
Namely, the $\lambda$-brackets on generators are defined by
$$
\begin{array}{l}
\displaystyle{
\vphantom{\Big(}
\{a_\lambda b\}^\otimes=\{a_\lambda b\}^0=[a,b]+(a|b)\lambda
\,\,\text{ for }\,\,
a,b\in\mf g_{\leq0}
\,,} \\
\displaystyle{
\vphantom{\Big(}
\{a_\lambda c\}^\otimes=\{c_\lambda a\}^\otimes=0
\,\,\text{ for }\,\,
a\in\mf g_{\leq0},\,c\in\mf g_{\frac12}
\,,} \\
\displaystyle{
\vphantom{\Big(}
\{c_\lambda d\}^\otimes=\{c_\lambda d\}^{ne}=-(f|[c,d])
\,\,\text{ for }\,\,
c,d\in\mf g_{\frac12}
\,.}
\end{array}
$$
\begin{theorem}\label{20140318:prop1}
The obvious differential algebra isomorphism
$\mc V(\mf g_{\leq\frac12})\stackrel{\sim}{\longrightarrow}\mc V$
restricts to an (injective) PVA homomorphism 
\begin{equation}\label{20140320:eq1}
\mc W=\mc W(\mf g,f)
\hookrightarrow
\mc V=\mc V(\mf g_{\leq0})\otimes\mc F(\mf g_{\frac12})
\,.
\end{equation}
\end{theorem}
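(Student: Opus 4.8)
The plan is to recognize that the map $\mu\colon\mc V(\mf g_{\leq\frac12})\stackrel{\sim}{\to}\mc V$ is simply the identity on the underlying differential algebra: the vector space decomposition $\mf g_{\leq\frac12}=\mf g_{\leq0}\oplus\mf g_{\frac12}$ gives $S(\mb F[\partial]\mf g_{\leq\frac12})=\mc V(\mf g_{\leq0})\otimes S(\mb F[\partial]\mf g_{\frac12})$ as differential algebras. Injectivity and the differential-algebra-homomorphism property are then automatic, and the entire content is that $\mu|_{\mc W}$ intertwines the two $\lambda$-brackets. Since $\mu$ is the identity, and since $\rho$ restricts to the identity on $\mc V(\mf g_{\leq\frac12})$ (because $(f|\mf g_{\leq\frac12})=0$), this reduces to the single identity $\rho\{g_\lambda h\}_0=\{g_\lambda h\}^\otimes$ for all $g,h\in\mc W$. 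I would set $\mc D:=\rho\{\cdot\,_\lambda\,\cdot\}_0-\{\cdot\,_\lambda\,\cdot\}^\otimes$ and observe that $\mc D$ is a biderivation of $\mc V(\mf g_{\leq\frac12})$: indeed $\{\cdot\,_\lambda\,\cdot\}^\otimes$ is a PVA $\lambda$-bracket, while $\rho\{\cdot\,_\lambda\,\cdot\}_0$ satisfies both Leibniz rules and sesquilinearity on $\mc V(\mf g_{\leq\frac12})$, since $\rho$ is a differential algebra homomorphism fixing $\mc V(\mf g_{\leq\frac12})$ pointwise. Thus $\mc D$ is determined by its values on the generators $\mf g_{\leq\frac12}$.

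The key computation is to evaluate $\mc D$ on generators. Writing $a=\pi_{\leq0}(a)+\pi_{\frac12}(a)$ for the projections attached to $\mf g_{\leq\frac12}=\mf g_{\leq0}\oplus\mf g_{\frac12}$, and using that $\rho(c)=(f|c)$ for $c\in\mf g_1$, a direct expansion yields, for $a,b\in\mf g_{\leq\frac12}$,
\[
\mc D(a,b)=\rho\{a_\lambda \pi_{\frac12}(b)\}_0+\rho\{\pi_{\frac12}(a)_\lambda b\}_0 .
\]
Here the two copies of the scalar $(f|[\pi_{\frac12}(a),\pi_{\frac12}(b)])$ arising from the $\mf g_1$-component account precisely for the sign mismatch between $\rho\{c_\lambda d\}_0=(f|[c,d])$ and $\{c_\lambda d\}^\otimes=-(f|[c,d])$ on $\mf g_{\frac12}$. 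I would then define two biderivations $\hat E,\hat E'$ of $\mc V(\mf g_{\leq\frac12})$ by the generator values $\hat E(a,b)=\rho\{a_\lambda \pi_{\frac12}(b)\}_0$ and $\hat E'(a,b)=\rho\{\pi_{\frac12}(a)_\lambda b\}_0$, so that $\mc D=\hat E+\hat E'$.

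The final step exploits the defining property of $\mc W$. For a fixed generator $b$, the maps $g\mapsto\hat E(g,b)$ and $g\mapsto\rho\{g_\lambda \pi_{\frac12}(b)\}_0$ are both right-Leibniz derivations agreeing on generators, hence equal, so $\hat E(g,b)=\rho\{g_\lambda\pi_{\frac12}(b)\}_0$ for every $g$. Now for $g\in\mc W$ and $c:=\pi_{\frac12}(b)\in\mf g_{\frac12}\subset\mf g_{\geq\frac12}$, skew-symmetry of $\{\cdot\,_\lambda\,\cdot\}_0$ in $\mc V(\mf g)$ together with the $\partial$-equivariance of $\rho$ gives $\rho\{g_\lambda c\}_0=-\rho\{c_{-\lambda-\partial}g\}_0=-\big(c\,^\rho_{-\lambda-\partial}g\big)=0$, the last equality being exactly the definition of $\mc W$. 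Thus $\hat E(g,b)=0$ on all generators $b$, and since $\hat E(g,\cdot)$ is a derivation in its second argument it vanishes identically. By the mirror-image argument in the first slot, $\hat E'(g,h)=0$ whenever $h\in\mc W$. Hence $\mc D(g,h)=\hat E(g,h)+\hat E'(g,h)=0$ for $g,h\in\mc W$, giving the desired identity. The step deserving the most care, and the main obstacle, is exactly this last manipulation: one must justify that $\rho$ commutes with the skew-symmetry substitution $\lambda\mapsto-\lambda-\partial$, so that the annihilation $c\,^\rho_\mu g=0$ transfers to $\rho\{g_\lambda c\}_0=0$; once this is in place, everything else is bookkeeping with the biderivation property.
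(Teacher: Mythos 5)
Your proof is correct and takes essentially the same route as the paper's: both arguments reduce the claim, via the Master Formula (your biderivation-plus-sesquilinearity packaging of it), to the generator-level identities $\rho\{a_\lambda b\}=\{a_\lambda b\}^0$ for $a,b\in\mf g_{\leq0}$, $\rho\{c_\lambda d\}=-\{c_\lambda d\}^{ne}$ for $c,d\in\mf g_{\frac12}$, and the annihilation $\rho\{c_\lambda g\}=0=\rho\{g_\lambda c\}$ for $c\in\mf g_{\frac12}$ and $g\in\mc W$ --- the last equality obtained from skew-symmetry together with $\rho\circ\partial=\partial\circ\rho$, which is exactly what the paper uses implicitly in deriving \eqref{20140319_dan:eq2} and \eqref{20140319_dan:eq4}. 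Your reorganization of the difference of the two brackets into the biderivation $\mc D=\hat E+\hat E'$ is a clean repackaging of the paper's explicit four-sum computation, not a substantively different argument.
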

\begin{proof}
Recall that $\mc W=\mc W(\mf g,f)$ is a differential subalgebra of $\mc V(\mf g_{\leq\frac12})$,
and we have an obvious isomorphism $\mc V\simeq\mc V(\mf g_{\leq\frac12})$.
Hence, we have an injective differential algebra homomorphism 
$\mc W\hookrightarrow\mc V$.
We need to show that
\begin{equation}\label{20140318_dan:eq2}
\{g_\lambda h\}_\rho=\{g_\lambda h\}^{\otimes}\,,
\end{equation}
for every $g,h\in\mc W$.

Let $\{q_j\}_{j\in J_{\leq\frac12}}$ be a basis of $\mf g_{\leq\frac12}$ such
that $\{q_j\}_{j\in J_{\leq0}}$ is a basis of $\mf g_{\leq0}$
and $\{q_j\}_{j\in J_{\frac12}}$ is a basis of $\mf g_{\frac12}$
(hence, $J_{\leq\frac12}=J_{\leq0}\cup J_{\frac12}$).
Recall that, due to sesquilinearity and Leibniz rule, we have the Master Formula for $\lambda$-brackets
of arbitrary differential polynomials (see \cite[Ex.6.2]{DSK06}),
which we use below.
By the definition of the tensor $\lambda$-bracket in $\mc V$, 
the RHS of \eqref{20140318_dan:eq2} becomes
\begin{align}
\begin{split}\label{20140319_dan:eq1}
\{g_\lambda h\}^\otimes
=\sum_{\substack{i,j\in J_{\leq0}\\m,n\in\mb Z_+}}
\frac{\partial h}{\partial q_j^{(n)}}(\lambda+\partial)^n
\{q_i{}_{\lambda+\partial}q_j\}^0
(-\lambda-\partial)^m\frac{\partial g}{\partial q_i^{(m)}}\\
+\sum_{\substack{i,j\in J_{\frac12}\\m,n\in\mb Z_+}}
\frac{\partial h}{\partial q_j^{(n)}}(\lambda+\partial)^n
\{q_i{}_{\lambda+\partial}q_j\}^{ne}
(-\lambda-\partial)^m\frac{\partial g}{\partial q_i^{(m)}}\,.
\end{split}
\end{align}
On the other hand,
by equation \eqref{20120511:eq3} and the definition
of the map $\rho$ given by equation \eqref{rho},
the LHS\eqref{20140318_dan:eq2} becomes
\begin{align}
\begin{split}\label{20140318_dan:eq3}
\{g_\lambda h\}_\rho
=\sum_{\substack{i,j\in J_{\leq0}\\m,n\in\mb Z_+}}
\frac{\partial h}{\partial q_j^{(n)}}(\lambda+\partial)^n
\rho\{q_i{}_{\lambda+\partial}q_j\}
(-\lambda-\partial)^m\frac{\partial g}{\partial q_i^{(m)}}\\
+\sum_{\substack{i\in J_{\leq0},j\in J_{\frac12}\\m,n\in\mb Z_+}}
\frac{\partial h}{\partial q_j^{(n)}}(\lambda+\partial)^n
\rho\{q_i{}_{\lambda+\partial}q_j\}
(-\lambda-\partial)^m\frac{\partial g}{\partial q_i^{(m)}}\\
+\sum_{\substack{i\in J_{\frac12},j\in J_{\leq0}\\m,n\in\mb Z_+}}
\frac{\partial h}{\partial q_j^{(n)}}(\lambda+\partial)^n
\rho\{q_i{}_{\lambda+\partial}q_j\}
(-\lambda-\partial)^m\frac{\partial g}{\partial q_i^{(m)}}\\
+\sum_{\substack{i,j\in J_{\frac12}\\m,n\in\mb Z_+}}
\frac{\partial h}{\partial q_j^{(n)}}(\lambda+\partial)^n
\rho\{q_i{}_{\lambda+\partial}q_j\}
(-\lambda-\partial)^m\frac{\partial g}{\partial q_i^{(m)}}\,.
\end{split}
\end{align}
We note that for $i,j\in J_{\leq0}$ we have
$\rho\{q_i{}_\lambda q_j\}=\{q_i{}_\lambda q_j\}^0$. Hence,
the first summand in \eqref{20140318_dan:eq3} becomes
\begin{equation}\label{20140318:eq4}
\sum_{\substack{i,j\in J_{\leq0}\\m,n\in\mb Z_+}}
\frac{\partial h}{\partial q_j^{(n)}}(\lambda+\partial)^n
\{q_i{}_{\lambda+\partial}q_j\}^0
(-\lambda-\partial)^m\frac{\partial g}{\partial q_i^{(m)}}\,.
\end{equation}
Furthermore, by definition of $\mc W$, we have
$\rho\{g_{\lambda} q_j\}=0$,
for every $j\in J_{\frac12}$.
Hence, using the fact that
$J_{\leq\frac12}=J_{\leq0}\cup J_{\frac12}$ we get
\begin{align}
\begin{split}
\label{20140319_dan:eq2}
0=\rho\{g_{\lambda} q_j\}
=\sum_{\substack{i\in J_{\leq0}\\m\in\mb Z_+}}\rho\{q_i{}_{\lambda+\partial}q_j\}
(-\lambda-\partial)^m\frac{\partial g}{\partial q_{i}^{(m)}}\\
+\sum_{\substack{i\in J_{\frac12}\\m\in\mb Z_+}}\rho\{q_i{}_{\lambda+\partial}q_j\}
(-\lambda-\partial)^m\frac{\partial g}{\partial q_{i}^{(m)}}\,.
\end{split}
\end{align}
Using the identity \eqref{20140319_dan:eq2}, the second summand in equation
\eqref{20140318_dan:eq3} becomes
\begin{equation}\label{20140319_dan:eq3}
-\sum_{\substack{i,j\in J_{\frac12}\\m,n\in\mb Z_+}}
\frac{\partial h}{\partial q_j^{(n)}}(\lambda+\partial)^n
\rho\{q_i{}_{\lambda+\partial}q_j\}
(-\lambda-\partial)^m\frac{\partial g}{\partial q_i^{(m)}}\,.
\end{equation}
Similarly, $\rho\{q_i{}_\lambda h\}=0$, for every $i\in J_{\frac12}$, from which
follows that
\begin{equation}\label{20140319_dan:eq4}
\sum_{\substack{j\in J_{\leq0}\\n\in\mb Z_+}}
\frac{\partial h}{\partial q_j^{(n)}}(\lambda+\partial)^n
\rho\{q_i{}_{\lambda} q_j\}
=-
\sum_{\substack{j\in J_{\frac12}\\n\in\mb Z_+}}
\frac{\partial h}{\partial q_j^{(n)}}(\lambda+\partial)^n
\rho\{q_i{}_{\lambda}q_j\}
\,.
\end{equation}
Using equation \eqref{20140319_dan:eq4} (where we replace
$\lambda$ with $\lambda+\partial$
acting on the right), we get that the contribution of the third summand
in equation \eqref{20140318_dan:eq3} is
\begin{equation}\label{20140318:eq6}
-\sum_{\substack{i,j\in J_{\frac12}\\m,n\in\mb Z_+}}
\frac{\partial h}{\partial q_j^{(n)}}(\lambda+\partial)^n
\rho\{q_i{}_{\lambda+\partial}q_j\}
(-\lambda-\partial)^m\frac{\partial g}{\partial q_i^{(m)}}\,.
\end{equation}
Combining equations \eqref{20140318:eq4}, \eqref{20140319_dan:eq3} and
\eqref{20140318:eq6}, it follows that the equation \eqref{20140318_dan:eq3}
becomes
\begin{align*}
\{g_\lambda h\}_\rho
=\sum_{\substack{i,j\in J_{\leq0}\\m,n\in\mb Z_+}}
\frac{\partial h}{\partial q_j^{(n)}}(\lambda+\partial)^n
\{q_i{}_{\lambda+\partial}q_j\}^\otimes
(-\lambda-\partial)^m\frac{\partial g}{\partial q_i^{(m)}}\\
-\sum_{\substack{i,j\in J_{\frac12}\\m,n\in\mb Z_+}}
\frac{\partial h}{\partial q_j^{(n)}}(\lambda+\partial)^n
\rho\{q_i{}_{\lambda+\partial}q_j\}
(-\lambda-\partial)^m\frac{\partial g}{\partial q_i^{(m)}}\,.
\end{align*}
To prove that the above expression is the same as equation
\eqref{20140319_dan:eq1} it suffices to note that
$\rho\{q_i{}_\lambda q_j\}=(f|[q_i,q_j])=-\{q_i{}_\lambda q_j\}^{ne}$,
for $i,j\in\ J_{\frac12}$.
\end{proof}
\begin{corollary}
The homomorphism \eqref{20140320:eq1}
induces an injective PVA homomorphism 
$\mu:\,\mc W\to\mc V(\mf g_0)\otimes\mc F(\mf g_{\frac12})$,
called the \emph{generalized Miura map}.
\end{corollary}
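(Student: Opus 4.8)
The plan is to exhibit $\mu$ as the composite of the injective PVA homomorphism $\iota:\,\mc W\hookrightarrow\mc V(\mf g_{\leq0})\otimes\mc F(\mf g_{\frac12})$ of Theorem \ref{20140318:prop1} with the projection killing the strictly negative part of $\mf g_{\leq0}$. First I would note that $\mf g_{<0}=\bigoplus_{k<0}\mf g_k$ is an ideal of the Lie algebra $\mf g_{\leq0}$, so the quotient $\pi_0:\,\mf g_{\leq0}\twoheadrightarrow\mf g_{\leq0}/\mf g_{<0}=\mf g_0$ is a Lie algebra homomorphism, inducing a differential algebra homomorphism $p:\,\mc V(\mf g_{\leq0})\to\mc V(\mf g_0)$. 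I claim $p$ is a PVA homomorphism for the bracket \eqref{lambda} with $z=0$: one has $\pi_0[a,b]=[\pi_0a,\pi_0b]$ because $\mf g_{<0}$ is an ideal, and $(a|b)=(\pi_0a|\pi_0b)$ for $a,b\in\mf g_{\leq0}$ because the form pairs $\mf g_j$ with $\mf g_{-j}$, so that $(\mf g_{<0}|\mf g_{\leq0})=0$ inside $\mf g_{\leq0}$; equivalently, the differential ideal generated by $\mf g_{<0}$ is a PVA ideal. Tensoring with the identity then gives a PVA homomorphism $P=p\otimes\id$, and I set $\mu=P\circ\iota$, which is a PVA homomorphism as a composite of such.

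The substantive point is injectivity, and the main obstacle is that $P$ destroys the naive leading term $q$ of $w(q)=q+\widetilde{r(q)}$ whenever $q\in\mf g^f_{-k}$ with $k>0$, since then $q\in\mf g_{<0}$. So I must identify a surviving leading term. Using the explicit formula of Theorem \ref{20140221:thm1}, the single-factor (degree one) part of $w(q_{j_0})$ is exactly the sum of those terms in which every factor contributes its $\partial$-piece; the non-vanishing of the consecutive pairings from Lemma \ref{20140221:lem1} forces this contribution to run along the single $\mf{sl}_2$-string through $q_{j_0}$, yielding $\sum_{s}c_s\,\partial^sq^s_{j_0}$ with every $c_s\neq0$. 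Each string meets $\mf g_0\cup\mf g_{\frac12}$ in exactly one vector $Q_{j_0}:=q^{s_0}_{j_0}$, at $s_0=\delta(j_0)$ when $\delta(j_0)\in\mb Z$ and at $s_0=\delta(j_0)+\frac12$ otherwise. Hence the surviving degree-one part of $\mu(w(q_{j_0}))$ is precisely $c_{s_0}\partial^{s_0}Q_{j_0}$ with $c_{s_0}\neq0$; moreover, as $j_0$ ranges over $J^f$ the vectors $Q_{j_0}$ range bijectively over a basis of $\mf g_0\oplus\mf g_{\frac12}$ (one per string, using $\dim\mf g^f=\dim\mf g_0+\dim\mf g_{\frac12}$), i.e. over the generators of the target $\mc V(\mf g_0)\otimes\mc F(\mf g_{\frac12})$.

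Injectivity then follows by a leading-term argument for the polynomial degree $\deg_T$ in the target generators. Since each $\mu(w(q_j))$ has $\deg_T\geq1$ with degree-one part $c_{s_0(j)}\partial^{s_0(j)}Q_j$, any differential monomial $W=\prod_a\partial^{l_a}w(q_{j_a})$ in the generators of $\mc W$ satisfies $\deg_T\mu(W)\geq\deg_{\mc W}W$, with $\deg_T$-minimal component $\big(\prod_ac_{s_0(j_a)}\big)\prod_a\partial^{l_a+s_0(j_a)}Q_{j_a}$. As $j\mapsto(s_0(j),Q_j)$ is injective, distinct $W$ produce distinct, hence linearly independent, leading monomials. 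Therefore, for $0\neq g\in\mc W$ I would extract the $\deg_T$-minimal homogeneous component of $\mu(g)$: only the $\deg_{\mc W}$-minimal monomials of $g$ contribute to it, and by independence of their leading monomials this component is nonzero, so $\mu(g)\neq0$.

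I expect the two delicate points to be, first, verifying that the string pairings in Lemma \ref{20140221:lem1} are genuinely nonzero so that $c_{s_0}\neq0$, and second, the bookkeeping showing that the entire degree-one part of $w(q_{j_0})$ is the single-string contribution (any factor taking its bracket-piece raises the degree, and $r^{\geq2}(q)$ carries at least two factors from $[e,\mf g_{\leq-\frac12}]$). Once these are in place, the passage from the degree-one analysis to full injectivity, together with the checks that $p$ and $P$ are PVA homomorphisms, is formal.
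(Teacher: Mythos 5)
Your proposal is correct and follows essentially the same route as the paper: compose the embedding of Theorem \ref{20140318:prop1} with the projection $\mc V(\mf g_{\leq0})\to\mc V(\mf g_0)$, then deduce injectivity from the fact (read off from Theorem \ref{20140221:thm1}) that the surviving linear part of $\mu(w(q_j))$, for $q_j\in\mf g^f_{-k}$, is the single-string term $(-\partial)^{n}q_j^{n}\neq0$ with $n=[k+\tfrac12]$, these leading terms running over a basis of $\mf g_0\oplus\mf g_{\frac12}$. The paper states this in one line and leaves the verification to the reader; your write-up supplies exactly the missing bookkeeping (the dual-basis pairings forcing the linear part onto the $\mf{sl}_2$-string, and the leading-monomial argument), all of which checks out.
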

\begin{proof}
Composing the PVA homomorphism \eqref{20140320:eq1}
with the projection $\mc V(\mf g_{\leq0})\to\mc V(\mf g_0)$ (which is also a PVA homomorphism),
we get a PVA homomorphism $\mc W\to\mc V(\mf g_0)\otimes\mc F(\mf g_{\frac12})$.
It is not difficult to show, using Theorem \ref{20140221:thm1},
that, for $j\in J^f_{-k}$,
the term of $w(q_j)$ in $\mf g_0\oplus\mf g_{\frac12}$ 
is equal to $(-\partial)^nq_j^n\neq0$, where $n=[k+\frac12]$.
Injectiveness follows.
\end{proof}
\begin{example}
The Virasoro element $L\in\mc W(\mf g,f)$ from Proposition \ref{20140228:prop}(c)
has the following explicit expression, as an element of $\mc V(\mf g_{\leq\frac12})$ 
(see \cite[Eq.2.19]{DSKV14}):
$$
L=f+x'+\frac12\sum_{i}a^ia_i
+\sum_{k}v^k[f,v_k]
+\frac12\sum_{k}v^k\partial v_k
\,,
$$
where $\{a_i\}$ and $\{a^i\}$ are dual (w.r.t. $(\cdot\,|\,\cdot)$) bases of $\mf g_0$,
and $\{v_k\}$ and $\{v^k\}$ are bases of $\mf g_{\frac12}$
dual with respect to $\langle\cdot\,|\,\cdot\rangle$ (cf. \eqref{20140318_dan:eq1}),
in the sense that $\langle v^h|v_k\rangle=\delta_{h,k}$.
Applying to $L$ the map $\mu:\,\mc W\to\mc V(\mf g_0)\otimes\mc F(\mf g_{\frac12})$,
we thus get the element (cf. \cite[Thm.5.2]{KW04})
$$
\mu(L)
=
x'+\frac12\sum_{i\in J_0}a^ia_i
+\frac12\sum_{k\in J_{\frac12}}v^k\partial v_k
\,.
$$
In the special case of $\mf g=\mf{sl}_2$, we get $\mu(L)= x'+\frac{x^2}{2(x|x)}$,
which is the classical Miura map.
\end{example}

\begin{remark}\label{20140402:rem}
As pointed out in the introduction,
the assumption that $\mf g$ is a simple Lie algebra is not essential.
In fact, all the results of the present paper hold for an arbitrary finite-dimensional Lie algebra 
(or superalgebra) $\mf g$, endowed with a non-degenerate symmetric invariant
bilinear form $(\cdot\,|\,\cdot)$, and an $\mf{sl}_2$ triple $\mf s\subset\mf g$.
\end{remark}

\bigskip
\footnotesize
\noindent\textit{Acknowledgments.}
We wish to thank Pavel Etingof for enlightening discussions 
and Ivan Losev for correspondence.
We are also grateful to IHES, University of Rome and MIT for their kind hospitality.
The first author is supported by the national FIRB grant RBFR12RA9W ``Perspectives in Lie Theory''. 
The third author is supported
by the ERC grant ``FroM-PDE: Frobenius Manifolds and Hamiltonian Partial Differential Equations''.

%%%%%%%%%%% BIBLIOGRAFIA %%%%%%%%%%%%%%%%%%%%%%%%

%
\end{document}